\definecolor{BeauBlue}{rgb}{0, 0.2, .9}
\definecolor{BeauOrange}{rgb}{.8, .1, 0}
\numberwithin{equation}{section}
\newtheorem{theorem}{Theorem}[section] 
\newtheorem{proposition}[theorem]{Proposition} 
\newtheorem{lemma}[theorem]{Lemma}
\newtheorem{definition}[theorem]{Definition}  
\newtheorem{remark}[theorem]{Remark}
\newtheorem{assumption}[theorem]{Assumption}
\newtheorem{cor}[theorem]{Corollary}
\DeclareMathOperator{\diam}{diam}
\DeclareMathOperator{\Tr}{Tr} 
\title{Adiabatic Evolution of Low-Temperature Many-Body Systems}
\author[1]{Rafael L. Greenblatt}
\author[2]{Markus Lange}
\author[3]{Giovanna Marcelli}
\author[4]{Marcello Porta}
\affil[1]{Mathematics Department, University of Rome ``Tor Vergata'', viale della Ricerca Scientifica 1, 00133 Roma, Italy}
\affil[2]{Institute for AI-Safety and Security, German Aerospace Center (DLR), Sankt Augustin \& Ulm, Germany}
\affil[3]{Department of Mathematical Sciences, Aalborg University, Skjernvej 4A, 9220 Aalborg, Denmark}
\affil[4]{Mathematics Area, SISSA, Via Bonomea 265, 34136 Trieste, Italy}
\date{\today}
\begin{document}

\maketitle

\begin{abstract}
We consider finite-range, many-body fermionic lattice models and we study the evolution of their thermal equilibrium state after introducing a weak and slowly varying time-dependent perturbation. Under suitable assumptions on the external driving, we derive a representation for the average of the evolution of local observables via a convergent expansion in the perturbation, for small enough temperatures. Convergence holds for a range of parameters that is uniform in the size of the system. Under a spectral gap assumption on the unperturbed Hamiltonian, convergence is also uniform in temperature. As an application, our expansion allows us to prove closeness of the time-evolved state to the instantaneous Gibbs state of the perturbed system, in the sense of expectation of local observables, at zero and at small temperatures. As a corollary, we also establish the validity of linear response. Our strategy is based on a rigorous version of the Wick rotation, which allows us to represent the Duhamel expansion for the real-time dynamics in terms of Euclidean correlation functions, for which precise decay estimates are proved using fermionic cluster expansion.
\end{abstract}

\tableofcontents

\section{Introduction}\label{sec:intro}

\paragraph{Adiabatic quantum dynamics.} The adiabatic theorem is a fundamental result in quantum mechanics, dating back to the work of Born and Fock \cite{BoFo} and Kato \cite{Ka}. Let us review its basic statement. Let $H(s)$ be a family of time-dependent Hamiltonians, depending smoothly on the time parameter $s$ for $s \in [-1,0]$. We shall suppose that $H(s)$ has a unique ground state $\varphi_{s}$, and that the energy of the ground state is separated from the rest of the spectrum by a positive gap. We are interested in the adiabatic regime, defined as follows. Let $\eta > 0$ and consider the time-dependent Schr\"odinger equation:
\begin{equation}\label{eq:tS}
i\partial_{t} \psi(t) = H(\eta t) \psi(t)\;,\qquad t\in [-1/\eta, 0].
\end{equation}
Suppose that at the initial time the system is prepared in the ground state of the Hamiltonian, $\psi(-1/\eta) = \varphi_{-1}$. We are interested in the evolution of such initial datum under (\ref{eq:tS}), in the adiabatic limit $\eta\to 0^{+}$. The adiabatic theorem states that:
\begin{equation}\label{eq:adia0}
\Big\| \psi(t) - \langle \psi(t), \varphi_{\eta t} \rangle \varphi_{\eta t} \Big\| \leq C\eta\qquad \text{for all $t\in [-1/\eta, 0]$.}
\end{equation}
This implies that, at all times $t$ and for $\eta$ small enough, the solution of the time-dependent Schr\"odinger equation (\ref{eq:tS}) is approximated by the instantaneous ground state, possibly up to a phase. This important result has been applied to study a wide class of physical systems, see \cite{Teu} for a monograph on the topic. It has been generalized to a class of contracting evolutions that includes the Schr\"odinger equation (\ref{eq:tS}) as a special case, see \cite{AFGG} and references therein.

The constant $C$ in Eq.~(\ref{eq:adia0}) depends on the details of the model, in particular on the regularity of $H(s)$. The way in which the regularity of $H(s)$ is quantified is typically via an estimate for $\| \dot{H}(s) \|$. This quantity is badly behaved in situations in which the Hamiltonian describes a many-body system, say an interacting Fermi gas on a lattice $\Lambda_{L} = [0,L]^{d} \cap \mathbb{Z}^{d}$, due to the fact that the norm of the Hamiltonian and of its derivatives grow linearly with the size of the system. Thus, the standard adiabatic theorem fails in describing the evolution of many-body quantum systems for $\eta$ small uniformly in $L$.

This is not a technical point. In fact, it turns out that the notion of convergence in Eq.~(\ref{eq:adia0}) is not the natural one for many-body systems: one cannot expect norm convergence for extensive many-body quantum systems uniformly in their size, see for instance the discussion in \cite{BdRFrev}. Instead, a much more natural notion of convergence involves the expectation value of {\it local} observables, which only probe a finite region in space. In this setting, a many-body adiabatic theorem for quantum spin systems has been recently proved in \cite{BdRF}. The result has then been extended to Fermi gases in \cite{MT}. Specifically, let $\mathcal{H}(s)$ be a time-dependent Hamiltonian for a quantum spin systems, or for lattice fermions, on a large but finite lattice $\Lambda_{L} \subset \mathbb{Z}^{d}$. Suppose that $\mathcal{H}(s)$ has a spectral gap for all times $s \in [-1, 0]$, and let $\Pi_{L}(s)$ be the projector associated with the ground state of $\mathcal{H}(s)$ on $\Lambda_{L}$. Let $P_{L}(t)$ be the solution of the evolution equation:
\begin{equation}\label{eq:PLt}
i\partial_{t} P_{L}(t) = [ \mathcal{H}(\eta t), P_{L}(t)],\qquad P_{L}(-1/\eta) = \Pi_{L}(-1).
\end{equation}
Consider the expectation value of a local operator on the time-dependent state, $\Tr \mathcal{O}_{X} P_{L}(t)$. Then, under reasonable regularity and locality assumptions on the Hamiltonian, the many-body adiabatic theorem states that \cite{BdRF, MT}:
\begin{equation}\label{eq:adiaT0}
\Big| \Tr \mathcal{O}_{X} P_{L}(t) - \Tr \mathcal{O}_{X} \Pi_{L}(\eta t)\Big| \leq C\eta\qquad \text{for all $t\in [-1/\eta, 0]$,}
\end{equation}
where the constant $C$ depends on the observable $\mathcal{O}_{X}$, but it is independent of $L$. An important application of this result is the proof of validity of linear response for extended, many-body quantum systems. To introduce the notion of linear response, let us further assume that the many-body Hamiltonian has the form:
\begin{equation}\label{eq:HLintro}
\mathcal{H}(\eta t) = \mathcal{H} + \varepsilon g(\eta t) \mathcal{P}
\end{equation}
where $\mathcal{H}$ and $\mathcal{P}$ are given by sums of local operators, and $g(t)$ is a switch function, that is a bounded function that decays fast enough at negative times. A standard choice is the exponential switch function, $g(t) = e^{t}$. Consider the dynamics generated by (\ref{eq:HLintro}) for $t\in (-\infty, 0]$. Let $P_{L}(t)$ be the solution of (\ref{eq:PLt}) with initial datum $P_{L}(-\infty) = \Pi_{L}(-\infty)$. Then, \cite{BdRF, MT} proved that, see also the reviews \cite{BdRFrev, HTrev}:
\begin{equation}\label{eq:lin0}
\lim_{\varepsilon \to 0} \lim_{\eta \to 0^{+}} \lim_{L\to \infty} \frac{1}{\varepsilon} \Big[ \Tr \mathcal{O}_{X} P_{L}(0) - \Tr \mathcal{O}_{X} P_{L}(-\infty) \Big] = \chi_{\mathcal{O},\mathcal{P}}
\end{equation}
where $\chi_{\mathcal{O},\mathcal{P}}$ agrees with the well-known Kubo formula for linear response. The statement (\ref{eq:lin0}) holds provided the thermodynamic limit of $P_{L}(-\infty)$ exists. In general, a similar result holds for $\varepsilon, \eta, L$ fixed, where $\varepsilon, \eta$ are small uniformly in $L$ and where $\varepsilon$ is small uniformly in $\eta$. The Kubo formula is equivalent to the first order term in the Duhamel expansion for the non-autonomous evolution:
\begin{equation}
\chi_{\mathcal{O},\mathcal{P}} = -i\lim_{\eta \to 0^{+}} \lim_{L\to \infty}\int_{-\infty}^{0} dt\, g(\eta t) \Tr\, \big[ \mathcal{O}_{X}, e^{i\mathcal{H}t}\mathcal{P}e^{-i\mathcal{H}t} \big] P_{L}(-\infty).
\end{equation}
These results have interesting applications in condensed matter physics. In particular, combined with \cite{HM, GMPhall, BBdRF}, they allow to prove the quantization of the Hall conductivity for gapped interacting fermions starting from the fundamental many-body Schr\"odinger equation. Among other extensions that have been obtained in the last few years we mention: the construction of non-equilibrium almost-stationary states and the application to the proof of validity of linear response for a class of perturbations that might close the spectral gap \cite{Teu2}; the proof of exactness of linear response for the quantum Hall effect \cite{BdRFL}; the extension of the many-body adiabatic theorem to infinite systems with a bulk gap \cite{HenTeu}.

Despite all this progress, an important limitation of the existing approaches is that they do not allow to study many-body quantum systems at positive temperature. In particular, the zero temperature limit is taken before the thermodynamic limit. It is of obvious physical relevance to consider the situation in which the thermodynamic limit is taken at fixed positive temperature, to make contact with experimental settings in which the temperature is possibly small but necessarily non-zero. In what follows, we will focus on interacting lattice fermionic models, which we shall describe in the grand-canonical Fock space formalism. We are interested in the following evolution equation:
\begin{equation}
i\partial_{t} \rho(t) = [ \mathcal{H}(\eta t), \rho(t) ],\qquad \rho(-\infty) = \rho_{\beta, \mu, L},
\end{equation}
with $\rho_{\beta, \mu, L} = e^{-\beta (\mathcal{H} - \mu \mathcal{N})} / \mathcal{Z}_{\beta, \mu, L}$ the grand-canonical equilibrium Gibbs state of the Hamiltonian $\mathcal{H}$ at temperature $T = 1/\beta$ and chemical potential $\mu$. A natural question is to understand under which conditions the many-body evolution of the equilibrium state can be approximated by an instantaneous Gibbs state, in the sense of the expectation of local observables. For instance, one would like to understand under which conditions
\begin{equation}\label{eq:adia00}
\Tr\mathcal{O}_{X} \rho(t) = \frac{\Tr e^{-\beta (\mathcal{H}(\eta t) - \mu \mathcal{N})} \mathcal{O}_{X}}{ \Tr e^{-\beta (\mathcal{H}(\eta t) - \mu \mathcal{N}) }} + o(1)
\end{equation} 
with $o(1)$ a quantity that vanishes as $\eta \to 0^{+}$, uniformly in $L$ (and possibly with a different temperature $T$ than the one used to define the initial datum).

\paragraph{Our result.} In this work, we introduce a different approach to study many-body quantum dynamics in the adiabatic regime, which applies to weakly interacting many-body systems at positive temperature. We consider finite-range, time-dependent Hamiltonians of the form (\ref{eq:HLintro}), under suitable assumptions discussed below. In our main result, Theorem \ref{thm:main}, we derive a representation of $\Tr\mathcal{O}_{X} \rho(t)$ via a convergent expansion in $\varepsilon$, uniformly in $\eta$ and in $L$, for small temperatures. ``Small'' means that the temperature parametrizing the initial Gibbs state is such that
\begin{equation}\label{eq:smallT}
T \ll |\varepsilon|^{-1} \eta^{d+2},
\end{equation}
uniformly in the size of the system. Under suitable assumptions on the decay of correlations of $\mathcal{H}$, the range of allowed $\varepsilon$ for which convergence holds is also uniform in $\beta$. These assumptions hold for example for finite-range Hamiltonians of the form
\begin{equation}\label{eq:weakly}
\mathcal{H}^{\lambda} = \mathcal{H}^{0} + \lambda \mathcal{V}
\end{equation}
with $\mathcal{H}^{0}$ the second-quantization of a gapped Hamiltonian, $\mathcal{V}$ a bounded local many-body interaction and $|\lambda|$ small. This is the type of models considered {\it e.g.} in \cite{GMPhall}, where the universality of the Hall conductivity for weakly interacting Fermi systems has been proved. This class of weakly interacting systems can be analyzed via fermionic cluster expansion techniques, which make it possible to prove essentially optimal estimates for the decay of the Euclidean correlation functions. For these models, the assumptions on the equilibrium Euclidean correlations required by Theorem \ref{thm:main} actually hold at positive temperature even without a gap condition on $\mathcal{H}^{0}$, however in this case one is forced to consider a range of $\lambda, \varepsilon$ that shrinks as $T \to 0$ (but still uniformly in $L$). 

Our method then allows us to prove the validity of an adiabatic theorem for local observables, in the form of Eq.~(\ref{eq:adia00}), for small temperatures in the sense of (\ref{eq:smallT}). In particular, the zero-temperature many-body adiabatic theorem (\ref{eq:adiaT0}) is recovered by taking the limit $\beta \to \infty$ at finite $L$.\footnote{For finite $L$ and as $\beta \to \infty$, the average over the Gibbs state of $\mathcal{H}(\eta t)$ converges to the average over the ground state projector, parametrized by the chemical potential $\mu$. This is a straightforward consequence of the fact that for finite $L$ the Hilbert space is finite-dimensional and so the spectrum of $\mathcal{H}(\eta t)$ is always discrete. This behavior can be extended to the case in which the limit $L\to \infty$ is taken before the limit $\beta \to \infty$, as long as such a limit of the Gibbs state exists \cite[Proposition~5.3.23]{BR2}. Our main result applies to this setting as well, provided the assumptions hold uniformly in (low) temperature. The existence of such a limit of the Gibbs state (and its other properties, for example whether it is a pure state) is largely independent from the subject matter of the present article, so we will not discuss it in detail. For weakly interacting gapped systems, the existence of the $\beta, L \to \infty$ limit can often be shown using the same techniques we use in Appendix~\ref{app:mb} to bound the correlation functions.} Furthermore, the method can also be used to prove the validity of linear response, and more generally to compute all higher-order response coefficients in terms of equilibrium correlations, see Corollary~\ref{cor:lin}. 

The proof is based on a rigorous Wick rotation, which makes it possible to rewrite the Duhamel expansion for the quantum evolution of the system in terms of time-ordered, Euclidean (or imaginary-time) connected correlation functions. Previously, this idea has been used to rigorously study the linear and quadratic response in a number of interacting gapped or gapless systems \cite{GMPhall, AMP, GMPweyl, MPmulti}. Here, we extend this strategy at all orders in the Duhamel expansion for the time-evolution of the state, and we use it to prove convergence of the Duhamel series for the real-time dynamics.

The method applies to a class of switch functions $g(\eta t)$ that can be approximated, for $\beta$ large, by functions $g_{\beta, \eta}(t)$, decaying rapidly for $t \to -\infty$, such that $g_{\beta,\eta}(t) = g_{\beta,\eta}(t-i\beta)$. This periodicity plays a key role in the proof of the Wick rotation. This requirement of course restricts the class of switch functions that we are able to consider; however, let us anticipate that this assumption holds for the standard exponential switch function, and more generally for the Laplace transform of suitable integrable functions.

Our method is completely different from that used in previous works on adiabatic theorems \cite{BdRF, MT}, and it allows to access small positive temperatures. With respect to the existing results, however, we assume that the time-dependent perturbation is slowly-varying and weak, since our method is ultimately based on a convergent expansion in $\varepsilon$, whereas in the previous works \cite{BdRF, MT} it is only assumed that the time-dependent Hamiltonian is slowly varying. The work \cite{BdRF, MT} further assume that the ground state of the time-dependent Hamiltonian $\mathcal{H}(\eta t)$ is separated by the rest of the spectrum by a uniform spectral gap, for all times. While we do not make this assumption, for the aforementioned example (\ref{eq:weakly}) it can also be proved for small $|\lambda|$ and small $|\varepsilon|$ \cite{DS}.

Besides the result itself, we believe that a relevant contribution of the present work is to import methods developed for interacting fermionic models at equilibrium to the study of real-time quantum dynamics.  In perspective, if combined with rigorous renormalization group techniques (see \cite{BG, Sa, Mabook} for reviews) we think that the approach of this paper could be extended to study the evolution of the Gibbs state of metallic or semimetallic systems, where the Fermi energy of the initial datum is not separated from the spectrum of the Hamiltonian uniformly in the size of the system. There, one does not expect an adiabatic theorem to hold; however, one might still have a convergent series expansion for the expectation of local observables in terms of Euclidean correlations, in a physically relevant range of parameters. This would be useful to establish the validity of linear response for gapless systems, widely used in applications.

Specifically, the combination of cluster expansion with rigorous renormalization group recently allowed to study the low temperature properties of a wide class of interacting gapless systems, and in particular to access their transport coefficients defined in the framework of linear response. Among the recent works, we mention the construction of the ground state of the two-dimensional Hubbard model on the honeycomb lattice \cite{GM} and the proof of universality of the longitudinal conductivity of graphene \cite{GMPgra}; the construction of the topological phase diagram of the Haldane-Hubbard model \cite{GJMP, GMPhald}; the proof of the non-renormalization for the chiral anomaly of Weyl semimetals \cite{GMPweyl}; the proof of Luttinger liquid behavior for interacting edge modes of two-dimensional topological insulators and the proof of universality of edge conductance \cite{AMP, MPspin, MPmulti}. It would be very interesting to prove the validity of linear response in the setting considered in these works, starting from many-body quantum dynamics.

Furthermore, it would be interesting to extend the methods presented in this work in the direction of studying spin transport, and prove the validity of linear response for spin-noncommuting many-body Hamiltonians. For non-interacting models, recent progress has been obtained in \cite{MPTeu, MPTau}.

The adiabatic evolution of positive temperature quantum systems has been studied in the last years, {\it e.g.} in \cite{ASF1, ASF2, JP}. The setting considered in these works is however different from the one of the present paper. The authors of \cite{ASF1, ASF2, JP} consider a small system coupled to reservoirs, and study the dynamics of the small system when the coupling with the reservoirs is introduced slowly in time. The key technical tool introduced in \cite{ASF1, ASF2, JP} is an isothermal adiabatic theorem, that proves norm-convergence of the evolved equilibrium state to the instantaneous equilibrium state of the perturbed system, in the adiabatic limit. The result holds under a suitable ergodicity assumption, which, as far as we know, has not been proved for the class of extended, interacting Fermi systems considered here. Finally, we mention the recent works \cite{JPT1, JPT2} showing that the validity of a many-body adiabatic theorem for quantum spin systems in the thermodynamic limit at fixed positive temperature and as $\eta \to 0^{+}$ is incompatible with the general notion of approach to equilibrium. We plan to further investigate the connection of our work with \cite{JPT1, JPT2} in the future.

\paragraph{Ideas of the proof.} Let us give a few more details about the method introduced in this paper. The proof starts by approximating the real-time dynamics generated by $\mathcal{H}(\eta t)$ by a suitable auxiliary dynamics, obtained from $\mathcal{H}(\eta t)$ by replacing the switch function $g(\eta t)$ by a function $g_{\beta,\eta}(t)$ such that $\lim_{\beta \to \infty} g_{\beta,\eta}(t) = g(\eta t)$ and $g_{\beta,\eta}(t) = g_{\beta,\eta}(t-i\beta)$. This approximation of course introduces an error, whose influence on the expectation values of local observables is estimated via Lieb-Robinson bounds for non-autonomous quantum dynamics \cite{BMNS, BP}. This error is responsible for the main limitation in the range of temperatures that we are able to consider. 
The advantage of replacing $g(\eta t)$ with $g_{\beta, \eta}(t)$ is that it lets us write the Duhamel series in $\varepsilon$ for the auxiliary evolution {\it exactly} in terms of Euclidean correlations, implementing a Wick rotation. 
This is made possible mainly because the periodicity of $g_{\beta,\eta}$ implies that the Kubo-Martin-Schwinger (KMS) identity remains true for the thermal expectation of modified observables of the form $g_{\beta, \eta}(t) e^{i\mathcal{H}t}\mathcal{O}_{X}e^{-i\mathcal{H}t}$. Once the Duhamel series is represented in terms of Euclidean correlations, the convergence of the series follows from the good decay properties of Euclidean correlations.  For weakly perturbed gapped models we use cluster expansion techniques to verify these assumptions. Finally, the connection with the instantaneous Gibbs state of $\mathcal{H}(\eta t)$ is obtained by noticing that, for $\eta$ small, the Wick-rotated Duhamel series agrees with the equilibrium perturbation theory in $\varepsilon$ for the Gibbs state of the Hamiltonian $\mathcal{H} + \varepsilon g(\eta t) \mathcal{P}$.

An important ingredient of our proof is the complex deformation argument of Propositions \ref{prop:wick1}, \ref{prop:wick2}, which allows us to prove the Wick rotation at all orders in the Duhamel series. Propositions \ref{prop:wick1}, \ref{prop:wick2} are the adaptation of Propositions 5.4.12, 5.4.13 of \cite{BR2} to our adiabatic setting. The main difference with respect to \cite{BR2} is that in our case the observables involved in the correlations are ``damped'' in time by $g_{\beta,\eta}(t)$: this allows to rule out the presence of spurious boundary terms at infinity in the complex deformation argument. In \cite{BR2}, these boundary terms are controlled by a suitable clustering assumption on the real-time correlation functions of the equilibrium state, which are very hard to prove for interacting models in the infinite volume limit. We are not aware of any result in this direction for the class of many-body lattice models considered in the present work.

\paragraph{Structure of the paper.} The paper is organized as follows. In Section \ref{sec:mathframework} we introduce the class of models considered in this work, we introduce their Gibbs state and Euclidean correlation functions, and we define the quantum dynamics. In Section \ref{sec:main} we state our main result, Theorem \ref{thm:main}, which provides a representation for the average of the real-time evolution of local observables via a convergent expansion in $\varepsilon$. As an application, this representation establishes a many-body adiabatic theorem for the evolution of thermal states at low temperature. A relevant consequence of the proof of our main result is the validity of linear response, Corollary \ref{cor:lin}. The proof of the main result will be given in Section \ref{sec:proof}. In Appendix \ref{app:switch} we further discuss the class of switch functions considered in the present work; in Appendix \ref{app:int} we discuss some well-known properties about time-ordered Euclidean correlations, which we include for completeness; and in Appendix \ref{app:mb} we review the verification of our Assumption \ref{ass:dec}, which is known to hold for many-body systems at zero temperature and with a spectral gap, or at positive temperature. This is done using fermionic cluster expansion, whose convergence is guaranteed by the Brydges-Battle-Federbush-Kennedy formula for cumulants.
\medskip

\paragraph{Acknowledgements.} We thank Martin Fraas for pointing out the discussion of the stability of KMS states in \cite{BR2}. We acknowledge financial support by the European Research Council (ERC) under the European Union’s Horizon 2020 research and innovation program ERC StG MaMBoQ, n.\ 802901. The work of R.L.G.\ was also partially supported by  the MIUR Excellence Department Project MatMod@TOV awarded to the Department of Mathematics, University of Rome Tor Vergata. The work of R.L.G., G.M.\ and of M.P.\ has been carried out under the auspices of the GNFM of INdAM. G.M. acknowledges support from the Independent Research Fund Denmark--Natural Sciences, grant DFF–10.46540/2032-00005B. We thank the anonymous Referees for their comments, which lead to a number of improvements with respect to our initial manuscript.

\section{The model}
\label{sec:mathframework} 
In this section we define the class of models we shall consider in this paper. We will focus on lattice fermionic systems, with finite-range interactions. We will then define the time-evolution of such systems, after introducing a time-dependent perturbation. 
\begin{remark}
Unless otherwise specified, the constants $C, K$ etc. appearing in the bounds do not depend on $\beta, L, \eta, \varepsilon$ and on time. Their values might change from line to line. Also, it will be understood that the natural numbers $\mathbb{N}$ include zero.
\end{remark}
\subsection{Lattice fermions}
Let $\Gamma$ be a $d$-dimensional lattice, namely
\[
\Gamma=\text{Span}_{\mathbb{Z}}\{a_1,\dots, a_d\} \cong \mathbb{Z}^{d},
\]
where $a_1,\dots, a_d$ are $d$ linearly independent vectors in $\mathbb{R}^{d}$. Let $L\in \mathbb{N}$, $L>0$. We define the lattice dilated by $L$ as $L\Gamma:= \text{Span}_{L\mathbb{Z}}\{a_1,\dots, a_d\} \cong L \mathbb{Z}^{d}$. The finite torus of side $L$ is defined as $\Gamma_L:=\Gamma/(L\Gamma)$, that is:
\[
\Gamma_L\cong \left\{\sum_{i=1}^d n_i a_i\, \Big|\, n_i\in\mathbb{Z},\; 0\le n_i<L  	\right\}
\]
with periodic boundary conditions. The Euclidean distance on the torus $\Gamma_L$ is given by 
\[
\|x-y\|_{L}:=\min_{v\in(L\Gamma)}\|x -  y+v\|,\qquad\forall\,x,y\in \Gamma_L.
\]
We shall denote by $M\in\mathbb{N}$, $M>0$ the total number of internal degrees of freedom of a particle. This might take into account the spin degrees of freedom, or sublattice labels. Setting $S_M:=\{1,\dots,M\}$, we define:
\[
\Lambda_L:=\Gamma_L\times S_M.
\]
We equip $\Lambda_L$ with the following distance, tracing only the space coordinates. For any ${\bf x}=(x,\sigma), {\bf y}=(y,\sigma')\in \Lambda_L$, we define:
\begin{equation}
\| {\bf x} - {\bf y} \|_{L}:= \| x- y \|_{L}.
\end{equation}
We shall describe fermionic particles on $\Lambda_{L}$, in a grand-canonical setting. To this end, we introduce the fermionic Fock space, as follows. Let the one-particle Hilbert space be $\mathfrak{h}_L:=\ell^2(\Lambda_L)$. The corresponding $N$-particle Hilbert space is its $N$-fold anti-symmetric tensor product $\mathfrak{H}_{L,N}:= \mathfrak{h}_{L}^{\wedge N}$; notice that the antisymmetric tensor product is trivial whenever $N > M L^{d}$. The fermionic Fock space is defined as usual:
\[
\mathcal{F}_L:=\bigoplus_{N=0}^{M L^d}\mathfrak{H}_{L,N},\qquad  \text{where $\mathfrak{h}_{L,0}:=\mathbb{C}$}.
\]
For finite $L$, the fermionic Fock space is a finite-dimensional vector space. Thus, any linear operator on $\mathcal{F}_L$ into itself is automatically bounded, and can be represented as a matrix. For any ${\bf x} \in \Lambda_{L}$, let $a_{{\bf x}}$ and $a^*_{{\bf x}}$ be the standard fermionic annihilation and creation operators, satisfying the canonical anti-commutation relations:
\[
\{a_{{\bf x}},a^*_{{\bf y}}\}=\delta_{{\bf x}, {\bf y}}\mathbbm{1}\quad \text{ and }\quad \{  a_{{\bf x}},a_{{\bf y}} \}=0=\{a^*_{{\bf x}},a^*_{{\bf y}}\}.
\]
For any subset $X\subseteq \Lambda_L$, we denote by $\mathcal{A}_X$ the algebra of polynomials over $\mathbb{C}$ generated by the fermionic operators restricted to $X$, $\{a_{{\bf x}}, a^*_{{\bf x}}\,:\, {\bf x} \in X \}$. An example of operator in $\mathcal{A}_{\Lambda_{L}}$ is the number operator, defined as:
\[
\mathcal{N} := \sum_{\mathbf{x}\in \Lambda_L}  a_{\mathbf{x}}^* a_{\mathbf{x}}.
\]
The operator $\mathcal{N}$ counts how many particles are present in a given sector of the Fock space: given $\psi \in \mathcal{F}$, it acts as
\[
\mathcal{N} \psi = (0 \psi^{(0)}, 1 \psi^{(1)} , \ldots, n \psi^{(n)},\ldots).
\]
We shall denote by $\mathcal{A}_X^\mathcal{N}$ the subset of $\mathcal{A}_X$ consisting of operators commuting with $\mathcal{N}$, also called gauge-invariant operators. Equivalently, these operators consist of polynomials in the creations and annihilation operators where the number of creation operators equals the number of annihilation operators. 

It is clear that any self-adjoint operator $\mathcal{O} \in \mathcal{A}_{\Lambda_{L}}$ can be represented as
\begin{equation}\label{eq:OOX}
\mathcal{O} = \sum_{X \subseteq \Lambda_{L}} \mathcal{O}_{X},
\end{equation}
where $\mathcal{O}_{X} \in \mathcal{A}_{X}$ and $\mathcal{O}_{X} = \mathcal{O}_{X}^{*}$. As $L$ varies, the operator $\mathcal{O}$ actually denotes a sequence of operators. In particular, the operators $\mathcal{O}_{X}$ in (\ref{eq:OOX}) might depend on $L$. With a slight abuse of notation, we will not display explicitly such dependence. Notice that if $X\cap Y = \emptyset$, and if $\mathcal{O}_{X}$ and $\mathcal{O}_{Y}$ are even in the number of fermionic creation and annihilation operators,
\begin{equation}
[\mathcal{O}_{X}, \mathcal{O}_{Y}] = 0\;.
\end{equation}
Finally, let us define the notion of finite-range operators. Given $X\subseteq \Lambda_{L}$, the diameter of $X$ is defined as:
\[
\diam(X):=\max_{{\bf x}, {\bf y} \in X} \|{\bf x} - {\bf y} \|_{L}.
\]
\begin{definition}[Finite-range operators]\label{def:loc} We say that $\mathcal{O} \in \mathcal{A}_{\Lambda_{L}}$ is a {\it finite-range operator} if the following holds true. There exists $R>0$ independent of $L$ such that $\mathcal{O}_{X} = 0$ whenever $\text{diam}(X) > R$. Furthermore, there exists a constant $S>0$ independent of $L$ such that, for all $X\subseteq \Lambda_{L}$:
\[
\| \mathcal{O}_{X} \| \leq S.
\]
\end{definition}
Examples of finite range operators introduced below are the Hamiltonian $\mathcal{H}$ and the perturbation $\mathcal{P}$.

\subsection{Dynamics}
\paragraph{Hamiltonian and Gibbs state.} The Hamiltonian $\mathcal{H}$ is a self-adjoint, finite-range operator in $\mathcal{A}_{\Lambda_{L}}^\mathcal{N}$. The Heisenberg time-evolution of an observable $\mathcal{O} \in \mathcal{A}_{\Lambda_{L}}$ generated by $\mathcal{H}$ is, for $t\in \mathbb{R}$:
\begin{equation}
\tau_{t}(\mathcal{O}) := e^{i\mathcal{H} t} \mathcal{O} e^{-i\mathcal{H} t}.
\end{equation}
Later, we will also consider the Heisenberg evolution for complex times $t$, whose definition poses no problem due to the finite-dimensionality of the Hilbert space.

An example of a Hamiltonian which will play an important role in this work is
\begin{equation}
\sum_{{\bf x},{\bf y} \in \Lambda_{L}} a^{*}_{{\bf x}} H({\bf x}; {\bf y}) a_{{\bf y}} + \sum_{{\bf x},{\bf y} \in \Lambda_{L}} a^{*}_{{\bf x}} a^{*}_{{\bf y}} v({\bf x};{\bf y}) a_{{\bf y}} a_{{\bf x}}, 
\end{equation}
with $H({\bf x}; {\bf y})$ and $v({\bf x};{\bf y})$ finite-range, that is both $H({\bf x}; {\bf y})$ and $v({\bf x};{\bf y})$ are vanishing if $\|{\bf x} - {\bf y}\|_{L} > R$. More generally, we shall say that $\mathcal{H}$ is the Hamiltonian for a weakly interacting lattice model if it has the form:
\begin{equation}\label{eq:weak}
\sum_{{\bf x},{\bf y} \in \Lambda_{L}} a^{*}_{{\bf x}} H({\bf x}; {\bf y}) a_{{\bf y}} + \lambda \mathcal{V}
\end{equation}
with $\lambda \in \mathbb{R}$, $|\lambda|$ small in a sense to be made precise, and $\mathcal{V}$ finite-range and of degree higher than two in the fermionic operators. We shall say that the non-interacting Hamiltonian $\mathcal{H}^{0} = \sum_{{\bf x},{\bf y} \in \Lambda_{L}} a^{*}_{{\bf x}} H({\bf x}; {\bf y}) a_{{\bf y}}$ is gapped if the spectrum of $H$ has a spectral gap uniformly in $L$.

Given $\beta > 0$, $\mu \in \mathbb{R}$, the grand-canonical equilibrium state $\langle \cdot \rangle_{\beta, \mu, L}$ associated with the Hamiltonian $\mathcal{H}$, also called equilibrium Gibbs state, is defined as:
\[
\langle \cdot \rangle_{\beta, \mu, L} := \Tr \cdot \rho_{\beta, \mu, L},\quad \rho_{\beta, \mu, L} :=  \frac{e^{-\beta (\mathcal{H} - \mu \mathcal{N})}}{\mathcal{Z}_{\beta, \mu, L}},\quad \mathcal{Z}_{\beta, \mu, L} :=  \Tr e^{-\beta (\mathcal{H} - \mu \mathcal{N})},
\]
where the trace is over the fermionic Fock space $\mathcal{F}_{L}$. Obviously, the Gibbs state is invariant under time evolution:
\[
\langle \mathcal{O} \rangle_{\beta, \mu, L} = \langle \tau_{t}(\mathcal{O}) \rangle_{\beta, \mu, L}\qquad \forall t\in \mathbb{C}.
\]
It will also be convenient to define the imaginary-time, or Euclidean, evolution of $\mathcal{O}$ as:
\begin{equation}
\gamma_{t}(\mathcal{O}) := e^{t(\mathcal{H} - \mu \mathcal{N})} \mathcal{O} e^{-t(\mathcal{H} - \mu \mathcal{N})}\qquad t\in \mathbb{R}.
\end{equation}
For $\mathcal{O} \in \mathcal{A}^{\mathcal{N}}_{\Lambda_{L}}$, one has
\begin{equation}\label{eq:gammat}
\gamma_{t}(\mathcal{O}) = \tau_{-it}(\mathcal{O})
\end{equation}
(the restriction to $ \mathcal{A}^{\mathcal{N}}_{\Lambda_{L}}$ is needed because $\gamma$, unlike $\tau$, includes a chemical potential term).
Notice that the imaginary-time evolution is no longer unitary, and the norm of $\gamma_{t}(\mathcal{O})$ might grow in time. Finally, the following property, also called {\it KMS identity}, holds:
\begin{equation}\label{eq:KMS}
\langle \gamma_{t_{1}}(\mathcal{O}_{1}) \gamma_{t_{2}}(\mathcal{O}_{2}) \rangle_{\beta, \mu, L} = \langle \gamma_{t_{2} + \beta}(\mathcal{O}_{2})\gamma_{t_{1}}(\mathcal{O}_{1})  \rangle_{\beta, \mu, L}
\end{equation}
for any $\mathcal{O}_{1}$ and $\mathcal{O}_{2}$ in $\mathcal{A}_{\Lambda_{L}}$. For finite $L$, which is our case, this identity simply follows from the definition of Gibbs state, and from the cyclicity of the trace. In order for (\ref{eq:KMS}) to hold, it is crucial that the generator of the Euclidean dynamics $\gamma_{t}$ includes the chemical potential term $-\mu\mathcal{N}$ in its definition. Notice that the dynamics $\gamma_{t}$ in (\ref{eq:gammat}) trivially extends to all complex times $t$; thus, the identity (\ref{eq:KMS}) actually holds replacing $t_{1}, t_{2}$ by any two complex numbers $z_{1}, z_{2}$. Eq.~(\ref{eq:KMS}) will play a fundamental role in our analysis.

\paragraph{Time ordering.} Let $a^{\sharp}_{{\bf x}, t} := \gamma_{t}(a^{\sharp}_{{\bf x}})$, where $a^{\sharp}_{{\bf x}}$ can be either $a_{{\bf x}}$ or $a^{*}_{{\bf x}}$. Let $t_{1}, \ldots, t_{n}$ in $[0,\beta)$. We define the time-ordering of $a^{\sharp_{1}}_{{\bf x}_{1}, t_{1}},\ldots, a^{\sharp_{n}}_{{\bf x}_{n}, t_{n}}$ as:
\begin{equation}\label{eq:time}
{\bf T} a^{\sharp_{1}}_{{\bf x}_{1}, t_{1}} \cdots a^{\sharp_{n}}_{{\bf x}_{n}, t_{n}} =  (-1)^{\pi} \mathbbm{1}(t_{\pi(1)} \geq \cdots \geq t_{\pi(n)}) a^{\sharp_{\pi(1)}}_{{\bf x}_{\pi(1)}, t_{\pi(1)}} \cdots a^{\sharp_{\pi(n)}}_{{\bf x}_{\pi(n)}, t_{\pi(n)}},
\end{equation}
where $\pi$ is the permutation needed in order to bring the times in a decreasing order, from the left, with sign $(-1)^{\pi}$, and $\mathbbm{1}(\text{condition})$ is equal to $1$ if the condition is true or $0$ otherwise. In case two or more times are equal, the ambiguity is solved by putting the fermionic operators into normal order. Other solutions of the ambiguity are of course possible; it is worth anticipating that in our applications this arbitrariness will play no role, since it involves a zero measure set of times. The above definition extends to operators in $\mathcal{A}_{\Lambda_{L}}$ by linearity. In particular, for $\mathcal{O}_{1}, \ldots, \mathcal{O}_{n}$ even in the number of creation and annihilation operators, we have:
\begin{equation}\label{eq:Tord1}
\begin{split}
{\bf T} \gamma_{t_{1}}(\mathcal{O}_{1}) \cdots &\gamma_{t_{n}}(\mathcal{O}_{n}) \\ &= \mathbbm{1}(t_{\pi(1)} \geq t_{\pi(2)} \geq \cdots \geq t_{\pi(n)}) \gamma_{t_{\pi(1)}}(\mathcal{O}_{\pi(1)}) \cdots \gamma_{t_{\pi(n)}}(\mathcal{O}_{\pi(n)}).
\end{split}
\end{equation}
The lack of the overall sign is due to the fact that the observables involve an even number of creation and annihilation operators.

\paragraph{Euclidean correlation functions.} Let $t_{i} \in [0,\beta)$, for $i=1,\ldots, n$. Given operators $\mathcal{O}_{1}, \ldots, \mathcal{O}_{n}$ in $\mathcal{A}_{\Lambda_{L}}$, we define the time-ordered Euclidean correlation function as:
\begin{equation}\label{eq:Tord2}
\langle {\bf T} \gamma_{t_{1}}(\mathcal{O}_{1}) \cdots \gamma_{t_{n}}(\mathcal{O}_{n}) \rangle_{\beta, \mu, L}.
\end{equation}
From the definition of fermionic time-ordering, and from the KMS identity, it is not difficult to check that:
\begin{equation}\label{eq:extension}
\begin{split}
\langle {\bf T} \gamma_{t_{1}}(\mathcal{O}_{1}) \cdots \gamma_{\beta}(\mathcal{O}_{k}) &\cdots \gamma_{t_{n}}(\mathcal{O}_{n}) \rangle_{\beta, \mu, L} \\ &= (\pm 1) \langle {\bf T} \gamma_{t_{1}}(\mathcal{O}_{1}) \cdots \gamma_{0}(\mathcal{O}_{k}) \cdots \gamma_{t_{n}}(\mathcal{O}_{n}) \rangle_{\beta, \mu, L};
\end{split}
\end{equation}
in the special case in which the operators involve an even number of creation and annihilation operators, which will be particularly relevant for our analysis, the overall sign is $+1$. The property (\ref{eq:extension}) allows to extend in a periodic (sign $+1$) or antiperiodic (sign $-1$) way the correlation functions to all times $t_{i} \in \mathbb{R}$. From now on, when discussing time-ordered correlations we shall always assume that this extension has been taken, unless otherwise specified.

Next, we define the connected time-ordered Euclidean correlation functions, or time-ordered Euclidean cumulants, as:
\begin{equation}\label{eq:Tcumul}
\begin{split}
&\langle {\bf T} \gamma_{t_{1}}(\mathcal{O}_{1}); \cdots; \gamma_{t_{n}}(\mathcal{O}_{n}) \rangle_{\beta,\mu,L} \\
&\qquad := \frac{\partial^{n}}{\partial \lambda_{1} \cdots \partial \lambda_{n}} \log \Big\{ 1 + \sum_{I \subseteq \{1, 2,\ldots, n\}} \lambda(I) \langle {\bf T} \mathcal{O}(I)  \rangle_{\beta,\mu,L} \Big\}\Big|_{\lambda_{i} = 0}
\end{split}
\end{equation}
where $I$ is a non-empty ordered subset of $\{1, 2,\ldots, n\}$, $\lambda(I) = \prod_{i\in I} \lambda_{i}$ and $\mathcal{O}(I) = \prod_{i\in I} \gamma_{t_{i}}(\mathcal{O}_{i})$. For $n=1$, this definition reduces to $\langle {\bf T} \gamma_{t_{1}}(\mathcal{O}_{1}) \rangle \equiv \langle \gamma_{t_{1}}(\mathcal{O}_{1}) \rangle = \langle \mathcal{O}_{1} \rangle$, while for $n=2$ one gets $\langle {\bf T} \gamma_{t_{1}}(\mathcal{O}_{1}); \gamma_{t_{2}}(\mathcal{O}_{2}) \rangle = \langle {\bf T} \gamma_{t_{1}}(\mathcal{O}_{1}) \gamma_{t_{2}}(\mathcal{O}_{2}) \rangle - \langle {\bf T} \gamma_{t_{1}}(\mathcal{O}_{1}) \rangle \langle {\bf T} \gamma_{t_{2}}(\mathcal{O}_{2}) \rangle$. More generally, the following relation between correlation functions and connected correlation function holds true:
\[
\langle {\bf T} \gamma_{t_{1}}(\mathcal{O}_{1}) \cdots \gamma_{t_{n}}(\mathcal{O}_{n}) \rangle_{\beta,\mu,L} = \sum_{P} \prod_{J\in P} \langle {\bf T} \gamma_{t_{j_{1}}}(\mathcal{O}_{j_{1}}); \cdots; \gamma_{t_{j_{|J|}}}(\mathcal{O}_{j_{|J|}}) \rangle_{\beta,\mu,L},
\]
where $P$ is the set of all partitions of $\{1, 2, \ldots, n\}$ into ordered subsets, and $J$ is an element of the partition $P$, $J = \{ j_{1}, \ldots, j_{|J|} \}$.

\paragraph{Driving the system out of equilibrium.} We are interested in driving the system out of its initial equilibrium, by adding a slowly varying time-dependent perturbation to the Hamiltonian $\mathcal{H}$. We define, for $t\leq 0$:
\begin{equation}\label{eq:Htimedep}
\mathcal{H}(\eta t) := \mathcal{H} + g(\eta t) \varepsilon \mathcal{P},
\end{equation}
where: $\eta > 0$, $\varepsilon \in \mathbb{R}$; $g(\cdot)$ is a smooth function vanishing at $-\infty$, whose further properties will be specified later on; and $\mathcal{P} \in \mathcal{A}_{\Lambda_{L}}^{\mathcal{N}}$ is a self-adjoint and finite-range operator.
As an example, we might consider:
\[
\mathcal{P} = \sum_{{\bf x}\in \Lambda_{L}} \mu({\bf x}) a^{*}_{{\bf x}} a_{{\bf x}},
\]
with $\mu({\bf x})$ bounded uniformly in $L$. More generally, we will not require $\mathcal{P}$ to be quadratic in the fermionic operators. 

The Hamiltonian $\mathcal{H}(\eta t)$ generates the following the Schr\"odinger-von Neumann non-autonomous evolution:
\begin{equation}\label{eq:vN}
i\partial_{t} \rho(t) = [ \mathcal{H}(\eta t), \rho(t) ],\qquad \rho(-\infty) = \rho_{\beta, \mu, L},\qquad t\leq 0.
\end{equation}
We shall denote by $\mathcal{U}(t;s)$ the unitary group generated by $\mathcal{H}(t)$:
\begin{equation}
i\partial_{t} \mathcal{U}(t;s) = \mathcal{H}(\eta t) \mathcal{U}(t;s),\qquad \mathcal{U}(s;s) = \mathbbm{1}\;.
\end{equation}
Using this unitary group, the solution of Eq.~(\ref{eq:vN}) can be written as
\begin{equation}
\rho(t) = \mathcal{U}(t;-\infty) \rho_{\beta, \mu, L} \mathcal{U}(t;-\infty)^{*}.
\end{equation}
Let $\mathcal{O}_{X} \in \mathcal{A}_{X}^{\mathcal{N}}$ be a local operator. We will be interested in studying its expectation value in the time-dependent state $\Tr \mathcal{O}_{X} \rho(t)$. In particular, we will be interested in understanding the dependence of this quantity on the external perturbation, and in establishing the validity of linear response, uniformly in the size of the system.
\section{Main result}\label{sec:main}
In what follows, we will consider Hamiltonians $\mathcal{H}(\eta t) = \mathcal{H} + \varepsilon g(\eta t) \mathcal{P}$ of the form introduced above. We shall denote by $\langle \cdot \rangle_{t}$ the instantaneous Gibbs state of $\mathcal{H}(\eta t)$,
\begin{equation}\label{eq:insta}
\langle \mathcal{O}_{X} \rangle_{t} := \frac{\Tr e^{-\beta (\mathcal{H}(\eta t) - \mu \mathcal{N})} \mathcal{O}_{X}}{\Tr e^{-\beta (\mathcal{H}(\eta t) - \mu \mathcal{N})}}.
\end{equation}
Our main result holds under the following assumptions on the Hamiltonian $\mathcal{H}$ (through its Gibbs state) and on the switch function $g(t)$.
\begin{assumption}[Integrability of time-ordered cumulants]\label{ass:dec} Let $S>0$, $R>0$. For $n\geq 1$ and for $i = 1, \ldots, n+1$, let $\mathcal{O}^{(i)}$ be finite-range operators, such that $\| \mathcal{O}^{(i)}_{X_{i}} \| \leq S$ and $\mathcal{O}^{(i)}_{X_{i}} = 0$ for $\text{diam}(X_{i}) > R$, uniformly in $L$. For all $\beta > 0$, there exists a constant $\mathfrak{c} \equiv \mathfrak{c}(\beta, S, R) > 0$ such that the following holds, for all $L \in \mathbb{N}$ and for all $n\in \mathbb{N}$:
\begin{equation}\label{eq:assA}
\int_{[0,\beta]^{n}} d\underline{t}\, (1+|\underline{t}|_{\beta}) \sum_{X_{i} \subseteq \Lambda_{L}}  \big| \big\langle {\bf T} \gamma_{t_{1}}(\mathcal{O}^{(1)}_{X_1});  \cdots; \gamma_{t_{n}}(\mathcal{O}^{(n)}_{X_n}); \mathcal{O}^{(n+1)}_{X} \big\rangle_{\beta, \mu, L} \big| \leq \mathfrak{c}^{n} n!
\end{equation}
where:
\begin{equation}\label{eq:normbeta}
|\underline{t}|_{\beta} := \sum_{i=1}^{n} \min_{m \in \mathbb{Z}} |t_{i} - m\beta|\;.
\end{equation}
\end{assumption}
\begin{remark} 
\begin{itemize}
\item[(i)] For weakly interacting fermionic lattice models, recall Eq.~(\ref{eq:weak}), Assumption \ref{ass:dec} can be proved via cluster expansion techniques for $|\lambda|$ small enough: the bound (\ref{eq:assA}) holds true for all finite $\beta$ and $L$, with a constant $\mathfrak{c}$ that might grow with $\beta$ but it is independent of $L$. Moreover, if the non-interacting Hamiltonian in Eq.~(\ref{eq:weak}) is gapped, and if the chemical potential $\mu$ is chosen in the spectral gap, the bound (\ref{eq:assA}) holds for $|\lambda|$ small uniformly in $\beta$, with a constant $\mathfrak{c}$ that is independent of $\beta$. We shall review these facts in Appendix \ref{app:mb}. There, we shall focus on the case of local, quartic interactions; however, the method could also be applied to cover a larger class of local interactions. The same methods can actually be used to prove the stability of the spectral gap for many-body Hamiltonians \cite{DS}.

\item[(ii)] It is known that the existence of a spectral gap for the many-body Hamiltonian implies the spatial exponential decay of correlations \cite{HK}. In general, it would be interesting to understand whether the bound (\ref{eq:assA}) can be established under the assumption of locality and of a spectral gap for the many-body Hamiltonian, with the same $n$-dependence as in the right-hand side of (\ref{eq:assA}). We are not aware of any result in this direction, at zero or at positive temperature.
 \end{itemize}
\end{remark}
The next assumption specifies the class of switch functions $g(t)$ that we are able to consider.
\begin{assumption}[Properties of the switch function]\label{ass:switch} We assume that $g(t)$ has the form, for all $t\leq 0$:
\begin{equation}\label{eq:switch} 
g(t) = \int_{0}^{\infty} d\xi\, e^{\xi t} h(\xi)\qquad \text{with $h(\xi)\in L^{1}(\mathbb{R}_{+})$,}
\end{equation}
and where $h$ is a function such that
\begin{equation}\label{eq:hbd}
\int_0^{1}  d \xi \,\frac{|h(\xi)|}{\xi^{d+2}}< \infty,\qquad \int_1^\infty  d \xi \, \xi |h(\xi)| < \infty.
\end{equation}
Alternatively, the function $h(\xi)$ can be replaced by a finite linear combination of Dirac delta distributions supported on $\mathbb{R}_{+}$.
\end{assumption}
\begin{remark}
Thus, $g$ is the Laplace transform of the function $h$. As discussed in Appendix \ref{app:switch}, the properties (\ref{eq:hbd}) are implied by suitable decay properties of the function $g(z)$ for complex times. Our setting allows us to include the function $g(t) = e^{t}$, a widely used switch function in applications, by choosing $h(\xi) = \delta(\xi - 1)$.  
\end{remark}
Next, we introduce a suitable approximation of the switch function, which will play an important role in our analysis. 
\begin{definition}[Approximation of the switch function]\label{def:apprsw} Let $\eta > 0$ and suppose that $g(t)$ satisfies Assumption \ref{ass:switch}. We define:
\begin{equation}\label{eq:gbeta}
\begin{split}
g_{\beta,\eta}(t) &:= \sum_{m = 0}^\infty \int_{\frac{2\pi}{\beta \eta} m}^{\frac{2\pi}{\beta \eta}(m+1)} d\xi\, h(\xi)   e^{\frac{2\pi}{\beta \eta}(m+1) \eta t} \\
&\equiv \sum_{\omega \in \frac{2\pi}{\beta} \mathbb{N}} \tilde g_{\beta,\eta}(\omega) e^{\omega t}, 
\end{split}
\end{equation}
\end{definition}
where $\tilde g_{\beta,\eta}(0) := 0$ and for $\omega \geq \frac{2\pi}{\beta}$:
\begin{equation}
\tilde g_{\beta,\eta}(\omega) := \int_{\frac{\omega}{\eta} - \frac{2\pi}{\beta \eta}}^{\frac{\omega}{\eta}} d\xi\, h(\xi).
\end{equation}
\begin{remark}\label{rem:switch}
\begin{itemize}
\item[(i)] The approximation of the switch function satisfies the following key identity:
\begin{equation}
g_{\beta,\eta}(t) = g_{\beta,\eta}(t - i\beta).
\end{equation}
\item[(ii)] The following estimate holds:
\begin{equation}\label{eq:tildegbd}
\sum_{\omega \in \frac{2\pi}{\beta} \mathbb{N}} | \tilde g_{\beta,\eta}(\omega) | \leq \sum_{m = 0}^\infty \int_{\frac{2\pi}{\beta \eta} m}^{\frac{2\pi}{\beta \eta}(m+1)} d\xi\, |h(\xi)| = \| h \|_{1},
\end{equation}
where $\|h\|_{1} \equiv \|h\|_{L^{1}(\mathbb{R}_{+})}$.
\item[(iii)] Using that, for $\xi_{1} > \xi_{2}$ and $t\leq 0$,
\begin{equation}
\Big|e^{\xi_1\eta t}-e^{\xi_2\eta t}\Big| = \eta |t| \Big|\int_{\xi_2}^{\xi_1}d \xi\, e^{\xi\eta t} \Big|\leq \eta |t|(\xi_1-\xi_2)e^{\xi_2\eta t},
\end{equation}
we have:
\begin{equation}\label{eq:diffswitch}
\begin{split}
| g_{\beta,\eta}(t) - g(\eta t) | &\leq \sum_{m = 0}^\infty \int_{\frac{2\pi}{\beta \eta} m}^{\frac{2\pi}{\beta \eta}(m+1)} d\xi\, |h(\xi)|   \Big| e^{\xi \eta t} -  e^{\frac{2\pi}{\beta \eta}(m+1) \eta t}\Big| \\
&\leq \sum_{m = 0}^\infty \int_{\frac{2\pi}{\beta \eta} m}^{\frac{2\pi}{\beta \eta}(m+1)} d \xi\, |h(\xi)| \eta |t|\frac{2\pi}{\beta \eta}e^{\xi \eta t}\\
&= \frac{2\pi|t|}{\beta} \int_{0}^{\infty} d \xi\, |h(\xi)| e^{\xi \eta t}.
\end{split}
\end{equation}
Therefore,
\begin{equation}\label{eq:ggdiff}
\begin{split}
| g_{\beta,\eta}(t) - g(\eta t) | &\leq \frac{2\pi}{\beta \eta} \int_{0}^{\infty}d \xi\, \frac{|h(\xi)|}{\xi} \xi \eta |t| e^{\xi \eta t} \\
&\leq \frac{2\pi}{e \beta \eta} \Big\| \frac{h}{\xi}  \Big\|_{1},
\end{split}
\end{equation}
and the right-hand side is finite, thanks to Assumption \ref{ass:switch}.
\end{itemize}
\end{remark}
We are now ready to state our main result.
\begin{theorem}[Main result]\label{thm:main} Let $\rho(t)$ be the solution of Eq.~(\ref{eq:vN}), with time-dependent Hamiltonian (\ref{eq:Htimedep}).  Suppose that for some $S>0$, $R>0$ independent of $L$, the Hamiltonian $\mathcal{H}$ and the perturbation $\mathcal{P}$ satisfy 
\begin{equation}
\|  \mathcal{H}_{X}\| \leq S,\; \| \mathcal{P}_{X} \| \leq S\;,\qquad \mathcal{H}_{X} =0,\; \mathcal{P}_{X} = 0\quad \text{for $\text{diam}(X) > R$}
\end{equation}
for all $X\subseteq \Lambda_{L}$. Suppose that the Gibbs state $\langle \cdot \rangle_{\beta, \mu, L}$ of $\mathcal{H}$ satisfies Assumption \ref{ass:dec} with $\mathfrak{c} \equiv \mathfrak{c}(\beta, S, R)$, and that $g(t)$ satisfies Assumption \ref{ass:switch}. Let $\mathcal{O}_{X} \in \mathcal{A}_{X}$ with $\text{diam}(X) \leq R$ and $\| \mathcal{O}_{X} \| \leq S$.  Then there exists $\varepsilon_{0} \equiv \varepsilon_{0}(\mathfrak{c}, h)$ such that for $|\varepsilon| < \varepsilon_{0}$ the following holds:
\begin{equation}\label{eq:mainexp}
\begin{split}
\Tr \mathcal{O}_{X}\rho(t) &= \langle \mathcal{O}_{X} \rangle_{\beta,\mu,L} + \sum_{n\geq 1} \frac{(-\varepsilon)^{n}}{n!} I^{(n)}_{\beta,\mu,L}(\eta,t) + R_{\beta, \mu, L}(\varepsilon,\eta,t)
\end{split}
\end{equation}
where the functions $I^{(n)}_{\beta,\mu,L}(\eta,t)$ are given by
\begin{equation}\label{eq:In}
\begin{split}
&I^{(n)}_{\beta,\mu,L}(\eta,t) \\
&\quad = \int_{[0,\beta)^{n}} d\underline{s}\, \Big[ \prod_{j=1}^{n} g_{\beta,\eta}(t-is_{j}) \Big] \langle {\bf T} \gamma_{s_{1}}(\mathcal{P}); \gamma_{s_{2}}(\mathcal{P}); \cdots ; \gamma_{s_{n}}(\mathcal{P}); \mathcal{O}_{X}  \rangle_{\beta,\mu,L}\\
\end{split}
\end{equation}
and satisfy the estimate
\begin{equation}\label{eq:Iest}
|I^{(n)}_{\beta,\mu,L}(\eta,t)|\leq \| h \|_{1}^{n} \mathfrak{c}^{n} n!.
\end{equation}
The error term $R_{\beta, \mu, L}(\varepsilon,\eta,t)$ in (\ref{eq:mainexp}) is bounded as:
\begin{equation}\label{eq:Rest}
|R_{\beta, \mu, L}(\varepsilon,\eta,t)| \leq \frac{K |\varepsilon|}{\eta^{d+2} \beta},
\end{equation}
where the constant $K \equiv K(S, R, h) >0$ does not depend on $\mathfrak{c}$. Furthermore, we also have:
\begin{equation}\label{eq:adia}
\Big| \Tr \mathcal{O}_{X} \rho(t) - \langle \mathcal{O}_{X} \rangle_{t}\Big|\leq \frac{K|\varepsilon|}{\eta^{d+2}\beta} + C_{1}|\varepsilon| \Big( \eta + \frac{1}{\beta}\Big) + \frac{C_{2}|\varepsilon|}{\beta \eta},
\end{equation}
where $\langle \cdot \rangle_{t}$ is the instantaneous Gibbs state of $\mathcal{H}(\eta t)$, Eq.~(\ref{eq:insta}), and $C_{i}\equiv C_{i}(\mathfrak{c}, h)$ for $i=1,2$.
\end{theorem}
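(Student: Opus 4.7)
The plan is to compare the real-time dynamics generated by $\mathcal{H}(\eta t)$ with an auxiliary non-autonomous dynamics $\tilde{\mathcal{H}}(t) := \mathcal{H} + \varepsilon g_{\beta,\eta}(t)\mathcal{P}$ obtained by replacing the switch function with its $i\beta$-periodic approximation. The first step is to control the discrepancy between the two evolutions via Lieb--Robinson bounds for non-autonomous quantum dynamics, applied to the Duhamel expansion of $\Tr\mathcal{O}_X(\rho(t) - \tilde\rho(t))$. At first order this requires controlling $\int_{-\infty}^{t} ds\,|g_{\beta,\eta}(s) - g(\eta s)|\,\|[\tau_{t-s}(\mathcal{O}_X),\mathcal{P}]\|$, where Lieb--Robinson gives that the relevant local terms are supported in a region of volume $\sim|t-s|^{d}$ up to exponentially small tails. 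Combining this with the explicit bound~(\ref{eq:diffswitch}) $|g_{\beta,\eta}(s) - g(\eta s)|\leq \tfrac{2\pi|s|}{\beta}\int d\xi\,|h(\xi)|e^{\xi\eta s}$, and integrating in $s$ and $\xi$, produces precisely a factor $\tfrac{1}{\beta\eta^{d+2}}\int d\xi\, |h(\xi)|/\xi^{d+2}$, which by Assumption~\ref{ass:switch} is finite; the higher-order terms in the Duhamel expansion of the difference are controlled analogously, giving the bound~(\ref{eq:Rest}).

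Once the two dynamics are close, I would write the Duhamel expansion for $\Tr\mathcal{O}_X\tilde\rho(t)$ in powers of $\varepsilon$, producing at order $n$ an iterated real-time integral over $-\infty<s_n<\cdots<s_1<t$ of $\prod_j g_{\beta,\eta}(s_j)$ weighted by the nested thermal commutator $\langle[\tau_{s_n}(\mathcal{P}),[\ldots,[\tau_{s_1}(\mathcal{P}),\tau_t(\mathcal{O}_X)]\ldots]]\rangle_{\beta,\mu,L}$. The central step is the Wick rotation: for each real-time integration I would deform the contour into the complex lower half-plane, rotating onto an imaginary-time segment of length $\beta$. The three key ingredients are (i) the KMS identity~(\ref{eq:KMS}), which relates the two sides of the deformed contour differing by a shift of $i\beta$, (ii) the $i\beta$-periodicity of $g_{\beta,\eta}$ guaranteed by Remark~\ref{rem:switch}(i), so that the integrals along the two horizontal portions of the rectangular contour assemble into a single integral over $[0,\beta)$ in the imaginary direction, and (iii) the damping of $g_{\beta,\eta}(s-i\sigma)$ as $s\to -\infty$ uniformly in $\sigma\in[0,\beta]$, which kills the boundary contributions at infinity -- this is exactly where our ``damping'' improvement over the real-time clustering assumptions of \cite{BR2} enters. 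Iterating this rotation through all $n$ time variables, Propositions~\ref{prop:wick1}--\ref{prop:wick2} turn the nested commutators into a sum of time-ordered Euclidean products; the signed combinatorics of nested commutators is precisely what reorganizes these products into the connected Euclidean cumulant appearing in~(\ref{eq:In}). I expect this Wick rotation and the identification of the cumulant structure to be the main technical obstacle, since it requires carefully tracking both the complex-deformation boundary terms at all orders and the cancellations that promote correlation functions into cumulants.

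With the representation~(\ref{eq:In}) in hand, the bound~(\ref{eq:Iest}) follows by expanding each $g_{\beta,\eta}(t-is_j)=\sum_\omega \tilde g_{\beta,\eta}(\omega)e^{\omega(t-is_j)}$, so that $|\prod_j g_{\beta,\eta}(t-is_j)|\leq \|h\|_1^n$ by~(\ref{eq:tildegbd}); the integration over $[0,\beta)^n$ of the cumulant is then directly controlled by Assumption~\ref{ass:dec}, yielding $|I^{(n)}|\leq \|h\|_1^n\mathfrak{c}^n n!$. The $n!$ is absorbed by the $1/n!$ prefactor in~(\ref{eq:mainexp}), so the series converges for $|\varepsilon|<\varepsilon_0:=(\mathfrak{c}\|h\|_1)^{-1}$ up to a constant.

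Finally, for the adiabatic bound~(\ref{eq:adia}), I would match~(\ref{eq:In}) term by term against the standard equilibrium perturbation expansion of $\langle\mathcal{O}_X\rangle_t$, which has exactly the same Euclidean-cumulant structure but with $\prod_j g_{\beta,\eta}(t-is_j)$ replaced by $g(\eta t)^n$ (since the perturbing Hamiltonian at the instant $t$ is the constant-in-imaginary-time operator $\varepsilon g(\eta t)\mathcal{P}$). Writing the difference telescopically and splitting $|g_{\beta,\eta}(t-is_j) - g(\eta t)| \leq |g_{\beta,\eta}(t-is_j) - g_{\beta,\eta}(t)| + |g_{\beta,\eta}(t) - g(\eta t)|$, the second piece is bounded by~(\ref{eq:ggdiff}), giving the $C_2|\varepsilon|/(\beta\eta)$ contribution, while for the first piece I would use $|e^{-i\omega s_j}-1|\leq |\omega| s_j$ and the estimate $\sum_\omega\omega|\tilde g_{\beta,\eta}(\omega)|\leq C(\eta + 1/\beta)$ (coming from the fact that the sum-representation of $g_{\beta,\eta}$ inherits the scale $\eta$ from $g(\eta\cdot)$ up to a $1/\beta$ discretization error), yielding the $C_1|\varepsilon|(\eta + 1/\beta)$ contribution. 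Adding the Lieb--Robinson error from the first step produces the full bound~(\ref{eq:adia}).
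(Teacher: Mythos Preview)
Your proposal is correct and follows essentially the same route as the paper. Two small imprecisions worth flagging: the comparison of dynamics (Proposition~\ref{prop:comp}) is a \emph{single} Duhamel identity obtained by differentiating $\mathcal{U}(t;s)\tilde{\mathcal{U}}(t;s)^{*}$, yielding one integral whose integrand is $\varepsilon\zeta_{\eta,\beta}(s)\,[\mathcal{P},\tilde{\mathcal{U}}(t;s)^{*}\mathcal{O}_{X}\tilde{\mathcal{U}}(t;s)]$ (non-autonomous evolution, no ``higher-order terms'' in $\varepsilon$ to control separately); and in the last step you need $|e^{-i\omega s_{j}}-1|\leq |\omega|\,|s_{j}|_{\beta}$ with the periodized distance, so that the factor of $|\underline{s}|_{\beta}$ matches what Assumption~\ref{ass:dec} provides.
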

\begin{remark} 
{\begin{enumerate}
			\renewcommand{\labelenumi}{(\roman{enumi})}
\item The series in (\ref{eq:mainexp}) turns out to be equal to the Duhamel series for the quantum dynamics generated by the Hamiltonian 
\begin{equation}\label{eq:hbetaeta}
\mathcal{H}_{\beta,\eta}(t) = \mathcal{H} + \varepsilon g_{\beta,\eta}(t) \mathcal{P},
\end{equation}
after a complex deformation from real time to imaginary times (Wick rotation).  Thus, our result in particular includes the statement that the Duhamel series for the dynamics generated by (\ref{eq:hbetaeta}) is convergent in $\varepsilon$ uniformly in $L$ and in $\eta$, under the Assumption \ref{ass:dec}. This information is very useful because, as proved later in Proposition \ref{prop:comp}, the dynamics generated by $\mathcal{H}_{\beta,\eta}(t)$ is close to the dynamics generated by the original Hamiltonian $\mathcal{H}(\eta t)$, in the sense of evolution of local operators, for $\beta$ large enough.
\item If $\mathfrak{c}$ can be taken to be independent of $\beta$, then the radius of convergence in $\varepsilon$ is independent of $\beta$ as well and we can use the results to describe the $\beta \to \infty $ limit.  As commented after Assumption \ref{ass:dec}, this is the case for many-body perturbations of non-interacting gapped lattice models, with Hamiltonian of the form (\ref{eq:weak}) and for $|\lambda|$ small enough. 
\item If $\mathfrak{c}$ does not depend on $\beta$, our result allows to take the zero temperature limit $\beta \to \infty$ after the thermodynamic limit $L\to \infty$. By Eqs. (\ref{eq:mainexp}), (\ref{eq:In}), the existence of these limits is implied by the existence of the same limits for the equilibrium Gibbs state $\langle \cdot \rangle_{\beta,\mu,L}$. To the best of our knowledge, all previous works on many-body adiabatic dynamics considered the case in which the temperature is sent to zero before the thermodynamic limit. 
\item For finite $L$, the Hilbert space is finite dimensional and the spectrum of $\mathcal{H}(\eta t)$ is discrete. Thus,  it is straightforward to prove that as $\beta \to \infty$ and for fixed $L$ the average over the instantaneous Gibbs state in (\ref{eq:adia}) converges to the average over the ground state projector, which a priori might have a nontrivial degeneracy (we do not know whether Assumption \ref{ass:dec} has implications for the multiplicity of the ground state). This allows to recover the zero temperature many-body adiabatic theorem, for the class of systems satisfying the assumptions of Theorem \ref{thm:main}.
\item To illustrate how the adiabatic theorem (\ref{eq:adia}) is implied by (\ref{eq:mainexp}), we observe that the first two terms in the right-hand side of (\ref{eq:mainexp}) reconstruct the average over the instantaneous Gibbs state of $\mathcal{H}(\eta t)$, after replacing the functions $g_{\beta,\eta}(t - is_{j})$ in Eq. (\ref{eq:In}) with $g(\eta t)$. To see this, we use the representation of the instantaneous Gibbs state of $\mathcal{H}(\eta t)$ in terms of a convergent cumulant expansion in $\varepsilon$; see Eq. (\ref{eq:equi}) below.

Let us now discuss the origin of the various error terms in the right-hand side of (\ref{eq:adia}). The first term is due to $R_{\beta, \mu, L}$, which arises from the approximation of the real-time dynamics generated by $\mathcal{H}(\eta t)$ with the real-time dynamics generated by $\mathcal{H}_{\beta,\eta}(t)$. This error term is estimated via Lieb-Robinson bounds, and its estimate (\ref{eq:Rest}) does not use any information about the state. This bound introduces the strongest constraint on the range of temperatures that we are able to consider. The second error term in (\ref{eq:adia}) arises from the replacement of $g_{\beta,\eta}(t-is_{j})$ with $g_{\beta,\eta}(t)$; the bound for the difference introduces a factor $(\eta + (1/\beta)) |s_{j}|_{\beta}$, and the factor $|s_{j}|_{\beta}$ is controlled using the assumption on the Euclidean correlations (\ref{eq:assA}). Finally, the last error term in (\ref{eq:adia}) arises from the replacement of $g_{\beta,\eta}(t)$ with $g(\eta t)$, and it relies on the estimate (\ref{eq:ggdiff}).
\item If one restricts the attention to the switch functions $g_{\beta,\eta}(t)$ of the type (\ref{eq:gbeta}) the first and the last error terms in (\ref{eq:adia}) are absent. Thus, for this special class of switch functions it is possible to prove that:
\begin{equation}\label{eq:diffsimp}
\Big| \Tr \mathcal{O}_{X} \rho(t) - \frac{\Tr \mathcal{O}_{X} e^{-\beta (\mathcal{H}_{\beta,\eta}(t) - \mu \mathcal{N})}}{\Tr e^{-\beta (\mathcal{H}_{\beta,\eta}(t) - \mu \mathcal{N})}} \Big| \leq C |\varepsilon| \Big( \eta + \frac{1}{\beta} \Big)\;.
\end{equation}
Referring to the proof of the main result in Section \ref{sec:proof}, the $n$-th order contribution in $\varepsilon$ to the difference in the left-hand side of (\ref{eq:diffsimp}) is only due to the term $R^{(n)}_{2,1}(t)$ defined in (\ref{eq:RRR}), which is estimated in (\ref{eq:R21}). Notice that the special switch function $g_{\beta,\eta}(t)$ are superpositions of exponentials $e^{\frac{2\pi}{\beta} (m+1)t}$ for $m \in \mathbb{N}$; thus for fixed $\beta$, the dependence of $g_{\beta,\eta}$ on $\eta$ is in general not a rescaling of time. The smallest abiabatic parameter that can be reached with this type of switch functions is $2\pi / \beta$.
\item In Eq.~(\ref{eq:adia}), we compare the time-evolved state with the instantaneous Gibbs state, defined with the same temperature as the initial datum: in the small temperature regime we are considering, we cannot resolve the heating of the system due to the perturbation. A better approximation should be obtained introducing a suitable, time-dependent, renormalization of the instantaneous Gibbs state. We plan to come back to this point in the future.
\end{enumerate}}
\end{remark}
The many-body adiabatic theorem (\ref{eq:adia}) can be improved, under the additional assumption that the first $m$ derivatives of the switch function vanish at zero.
\begin{cor}[Improved adiabatic convergence]\label{rem:improv} Under the same assumptions of Theorem \ref{thm:main}, the following is true. Suppose that $\partial_{t}^{j} g(0) = 0$ for all $1\leq j \leq m$. Furthermore, suppose that 
\begin{equation}\label{eq:assAm}
\int_{[0,\beta]^{n}} d\underline{t}\, (1+|\underline{t}|^{m+1}_{\beta}) \sum_{X_{i} \subseteq \Lambda_{L}}  \big| \big\langle {\bf T} \gamma_{t_{1}}(\mathcal{O}^{(1)}_{X_1});  \cdots; \gamma_{t_{n}}(\mathcal{O}^{(n)}_{X_n}); \mathcal{O}^{(n+1)}_{X} \big\rangle_{\beta, \mu, L} \big| \leq D_{m+1} \mathfrak{c}^{n} n!
\end{equation}
and that 
\begin{equation}
\int_1^\infty d \xi\, \xi^{m+1}\left|h(\xi)\right|  < \infty.\label{eq:hbd_enhanced}
\end{equation}
Then, the following improved many-body adiabatic theorem holds:
\begin{equation}\label{eq:etam}
	\Big| \Tr \mathcal{O}_{X} \rho(0) - \langle \mathcal{O}_{X} \rangle_{0}\Big|\leq \frac{K|\varepsilon|}{\eta^{d+2}\beta} + C_{1,m+1}|\varepsilon| \Big( \eta^{m+1}+\frac1\beta \Big) + \frac{C_{2}|\varepsilon|}{\beta \eta}.
\end{equation}
\end{cor}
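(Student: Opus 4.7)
The plan is to revisit the proof of Theorem~\ref{thm:main} and sharpen only the estimate that produces the linear-in-$\eta$ term in (\ref{eq:adia}), leaving the other two error sources intact. At $t=0$ that bound splits as the sum of (i)~the Lieb--Robinson remainder $R_{\beta,\mu,L}(\varepsilon,\eta,0)$, already controlled by $K|\varepsilon|/(\eta^{d+2}\beta)$ via (\ref{eq:Rest}); (ii)~the order-by-order difference between the Wick-rotated Duhamel series (\ref{eq:In}) and the equilibrium cumulant expansion of the instantaneous Gibbs state, whose $n$-th term differs from $I^{(n)}_{\beta,\mu,L}(\eta,0)$ only in that each factor $g_{\beta,\eta}(-is_j)$ is replaced by $g(0)$; and (iii)~the error $C_2|\varepsilon|/(\beta\eta)$ coming from (\ref{eq:ggdiff}), which contains no positive power of $\eta$ and is kept verbatim. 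Only contribution (ii) requires refinement.

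To sharpen (ii) I would telescope
\[
\prod_{j=1}^{n} g_{\beta,\eta}(-is_j)-g(0)^{n}=\sum_{k=1}^{n}\Bigl[\prod_{j<k}g_{\beta,\eta}(-is_j)\Bigr]\bigl(g_{\beta,\eta}(-is_k)-g(0)\bigr)g(0)^{n-k},
\]
bounding in modulus every non-distinguished factor uniformly by $\|h\|_1$ thanks to (\ref{eq:tildegbd}). The single surviving factor I would then split by inserting the intermediate value $g(-i\eta s_k)$:
\[
g_{\beta,\eta}(-is_k)-g(0)=\bigl[g_{\beta,\eta}(-is_k)-g(-i\eta s_k)\bigr]+\bigl[g(-i\eta s_k)-g(0)\bigr].
\]
The first bracket is estimated exactly as in Remark~\ref{rem:switch}(iii), but on the imaginary axis where $|e^{-i\xi\eta s}|=1$; after invoking the $i\beta$-periodicity of $g_{\beta,\eta}$ to replace $s_k$ by its closest representative $s_k^\ast\in(-\beta/2,\beta/2]$, one obtains a bound of the form $C|s_k|_\beta/\beta$, which reproduces the $1/\beta$ contribution.

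The genuine improvement lies in the second bracket. By (\ref{eq:switch}), $g$ extends analytically across the imaginary axis with derivatives $g^{(j)}(0)=\int_0^\infty\xi^j h(\xi)\,d\xi$, and the strengthened decay hypothesis (\ref{eq:hbd_enhanced}) ensures these exist and are uniformly bounded on $i\mathbb{R}$ up to order $m+1$. Since the vanishing-derivative hypothesis gives $g^{(j)}(0)=0$ for $1\leq j\leq m$, Taylor's theorem with integral remainder yields
\[
\bigl|g(-i\eta s_k)-g(0)\bigr|\leq\frac{(\eta|s_k|_\beta)^{m+1}}{(m+1)!}\int_0^\infty\xi^{m+1}|h(\xi)|\,d\xi,
\]
so that in total $|g_{\beta,\eta}(-is_k)-g(0)|\leq C\bigl(|s_k|_\beta/\beta+(\eta|s_k|_\beta)^{m+1}\bigr)$. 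Plugging this into the telescoping sum and integrating against the cumulant in (\ref{eq:In}), the factor $|s_k|_\beta$ is absorbed by Assumption~\ref{ass:dec} while the factor $|s_k|_\beta^{m+1}$ is absorbed by the strengthened hypothesis (\ref{eq:assAm}); the $n$-th order contribution is bounded by a constant times $n\|h\|_1^{n-1}D_{m+1}\mathfrak{c}^n n!$, which after division by $n!$ and summation in $n$ converges for $|\varepsilon|$ below a suitable threshold and produces the announced $C_{1,m+1}|\varepsilon|(\eta^{m+1}+1/\beta)$.

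The main subtlety is that one cannot Taylor-expand the discrete approximant $g_{\beta,\eta}$ directly: the Matsubara-like weights $\tilde g_{\beta,\eta}(\omega)$ do not annihilate the low-order moments $\sum_\omega\tilde g_{\beta,\eta}(\omega)\omega^j$ even when the corresponding integrals of $\xi^j h(\xi)$ vanish, because of an intrinsic quantization error of order $1/(\beta\eta)$. Routing the argument through the continuous object $g(-i\eta s_k)$ sidesteps this issue: Taylor cancellation is carried out where the hypothesis $g^{(j)}(0)=0$ acts cleanly, while the quantization error $g_{\beta,\eta}-g$ is handled separately and contributes only the unavoidable $1/\beta$ term. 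Once this decomposition is in place, the rest of the argument is a verbatim reprise of the bounds used to establish Theorem~\ref{thm:main}.
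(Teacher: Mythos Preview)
Your proof is correct but takes a different route from the paper's. The paper Taylor-expands $g_{\beta,\eta}(-is)$ directly around $s=0$, obtaining derivative terms $\partial_s^j g_{\beta,\eta}(0)$ that do \emph{not} vanish; it then shows via a direct computation (comparing $\sum_\omega \tilde g_{\beta,\eta}(\omega)\omega^j$ with $\eta^j\int\xi^j h(\xi)\,d\xi=0$) that these derivatives are $O(1/\beta)$, while the order-$(m+1)$ remainder is $O((\eta+1/\beta)^{m+1}|s|_\beta^{m+1})$. Your approach instead routes through the continuous object $g(-i\eta s_k^\ast)$: the Taylor expansion of $g$ has exactly vanishing low-order terms by hypothesis, and the quantization error $g_{\beta,\eta}-g$ is handled by a single crude estimate $\leq C|s_k|_\beta/\beta$. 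This is cleaner---it avoids the somewhat fiddly derivative comparison the paper carries out---and in fact yields a slightly sharper outcome: your treatment of what you call (ii) already absorbs the $R^{(n)}_{2,2}$ contribution, so the separate term (iii) you list is redundant (your (ii) compares $\prod_j g_{\beta,\eta}(-is_j)$ directly to $g(0)^n$, leaving nothing for (iii) to account for). This over-counting is harmless for the stated inequality, but you should be aware that your argument actually proves (\ref{eq:etam}) without the final $C_2|\varepsilon|/(\beta\eta)$ term.
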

\begin{remark}
{\begin{enumerate}\renewcommand{\labelenumi}{(\roman{enumi})}
\item These switch functions are allowed by our setting; for example, we might consider $g(t) = 1-(1-e^t)^m$.
\item As $\beta \to \infty$, a similar result has been first obtained by \cite{BdRF}. We observe that, with respect to \cite{BdRF}, here we show that the improved convergence (Corollary~\ref{rem:improv}) holds under the assumption that the first $m$ derivatives at zero of the switch function are vanishing, while in \cite{BdRF} it is assumed that the first $m+d+1$ derivatives vanish (with $d$ the spatial dimension of the system).
\item The assumption (\ref{eq:assAm}) holds true for many-body perturbations of gapped lattice models, and it can be proved via the analysis of Appendix \ref{app:mb}.
\end{enumerate}}	
\end{remark}
Combined with a few straightforward estimates (Eqs. \eqref{eq:corlin_start} to \eqref{eq:corlin_end}), we also have the following result.
\begin{cor}[Validity of linear response]\label{cor:lin} Under the same assumptions as Theorem \ref{thm:main}, 
\begin{equation}\label{eq:lin}
\begin{split}
&\Tr \mathcal{O}_{X} \rho(t) - \langle \mathcal{O}_{X} \rangle_{\beta, \mu, L} \\&\qquad = -\varepsilon  \int_{0}^{\beta} ds\, g_{\beta,\eta}(t - is) \langle \gamma_{s}(\mathcal{P})\;; \mathcal{O}_{X} \rangle_{\beta, \mu, L} + R_{\beta, \mu, L}(\varepsilon,\eta,t)
\end{split}
\end{equation}
where the error term $R_{\beta, \mu, L}(\varepsilon,\eta,t)$ is bounded as:
\begin{equation}\label{eq:linerr}
|R_{\beta, \mu, L}(\varepsilon,\eta,t)| \leq \frac{K |\varepsilon|}{\eta^{d+2} \beta} + C |\varepsilon|^{2}
\end{equation}
with $K$ is as in (\ref{eq:Rest}) and $C$ depends on $\mathfrak{c}$. In Eq.~(\ref{eq:lin}), the function $g_{\beta,\eta}(t - is)$ can be replaced by $g(\eta t)$, up to replacing the error term $R_{\beta, \mu, L}$ by $\widetilde{R}_{\beta, \mu, L}$, such that:
\begin{equation}\label{eq:linerr2}
|\widetilde{R}_{\beta, \mu, L}(\varepsilon,\eta,t)| \leq C|\varepsilon| \Big( \eta + \frac{1}{\beta \eta} \Big) + \frac{K |\varepsilon|}{\eta^{d+2} \beta} + C|\varepsilon|^{2}.
\end{equation}
Furthermore, the main term in Eq.~(\ref{eq:lin}) is equal to the first order term in the Duhamel expansion, up to small errors:
\begin{equation}\label{eq:last}
\begin{split}
\Big| \int_{0}^{\beta} ds\, g_{\beta,\eta}(t - is) \langle \gamma_{s}(\mathcal{P})\;; \mathcal{O}_{X} \rangle_{\beta, \mu, L} &- i\int_{-\infty}^{t} d s\, g(\eta s) \langle \left[ \tau_{t}(\mathcal{O}_{X}), \tau_{s}(\mathcal{P}) \right] \rangle_{\beta, \mu, L}\Big| \\
&\leq \frac{K}{ \eta^{d+2}\beta}.
\end{split}
\end{equation}
\end{cor}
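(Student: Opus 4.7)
The three claims, namely Eqs.~(\ref{eq:lin})--(\ref{eq:linerr}), (\ref{eq:linerr2}) and (\ref{eq:last}), all follow by reading Theorem~\ref{thm:main} at first order in $\varepsilon$, paired with the elementary estimates on the switch function already collected in Remark~\ref{rem:switch}. For (\ref{eq:lin})--(\ref{eq:linerr}), I start from (\ref{eq:mainexp}) and subtract $\langle \mathcal{O}_X\rangle_{\beta,\mu,L}$. The $n=1$ term is exactly
\begin{equation*}
-\varepsilon I^{(1)}_{\beta,\mu,L}(\eta,t)=-\varepsilon\int_0^\beta ds\,g_{\beta,\eta}(t-is)\,\langle \gamma_s(\mathcal{P});\mathcal{O}_X\rangle_{\beta,\mu,L},
\end{equation*}
which is precisely the main term in (\ref{eq:lin}) (the time ordering trivializes as $\mathcal{O}_X$ sits at imaginary time $0$ and $\gamma_s(\mathcal{P})$ at $s\in[0,\beta)$). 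The $n\ge 2$ contributions are absolutely summable by (\ref{eq:Iest}),
\begin{equation*}
\sum_{n\ge 2}\frac{|\varepsilon|^n}{n!}\,|I^{(n)}_{\beta,\mu,L}|\le \sum_{n\ge 2}(|\varepsilon|\,\|h\|_1\,\mathfrak c)^n\le C|\varepsilon|^2,
\end{equation*}
valid for $|\varepsilon|<\varepsilon_0$ small enough; combined with the bound (\ref{eq:Rest}) on $R_{\beta,\mu,L}$ this yields (\ref{eq:linerr}).

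For (\ref{eq:linerr2}) I decompose $g_{\beta,\eta}(t-is)-g(\eta t)=[g_{\beta,\eta}(t-is)-g_{\beta,\eta}(t)]+[g_{\beta,\eta}(t)-g(\eta t)]$. The second bracket is bounded uniformly by $O(1/(\beta\eta))$ via (\ref{eq:ggdiff}). For the first, using the representation (\ref{eq:gbeta}) and the elementary inequality $|e^{-i\omega s}-1|\le \omega|s|$,
\begin{equation*}
|g_{\beta,\eta}(t-is)-g_{\beta,\eta}(t)|\le |s|\!\!\sum_{\omega\in\frac{2\pi}{\beta}\mathbb{N}}\!\!\omega\,|\tilde g_{\beta,\eta}(\omega)|\,e^{\omega t}\le C\eta|s|\int_0^\infty \xi|h(\xi)|\,d\xi,
\end{equation*}
the last integral being finite by (\ref{eq:hbd}). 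Multiplying by $|\langle\gamma_s(\mathcal{P});\mathcal{O}_X\rangle_{\beta,\mu,L}|$ and integrating in $s\in[0,\beta)$, the factor $|s|=|s|_\beta$ is absorbed by the $(1+|\underline t|_\beta)$-weighted bound in Assumption~\ref{ass:dec}, producing the $O(|\varepsilon|\eta)$ contribution in (\ref{eq:linerr2}).

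For (\ref{eq:last}) I would apply Propositions~\ref{prop:wick1}--\ref{prop:wick2} at $n=1$: closing a rectangular contour of height $\beta$ in the complex $s$-plane and using the $i\beta$-periodicity of $g_{\beta,\eta}$ together with the KMS identity (\ref{eq:KMS}) yields
\begin{equation*}
\int_0^\beta ds\, g_{\beta,\eta}(t-is)\,\langle\gamma_s(\mathcal{P});\mathcal{O}_X\rangle_{\beta,\mu,L}=i\int_{-\infty}^t ds\, g_{\beta,\eta}(s)\,\langle[\tau_t(\mathcal{O}_X),\tau_s(\mathcal{P})]\rangle_{\beta,\mu,L},
\end{equation*}
the boundary terms at $\mathrm{Re}\,s\to-\infty$ vanishing thanks to the exponential damping in $g_{\beta,\eta}$. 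It remains to replace $g_{\beta,\eta}(s)$ by $g(\eta s)$ on the real axis: the resulting difference contributes exactly the same Lieb-Robinson-controlled error as the remainder $R_{\beta,\mu,L}$ in Theorem~\ref{thm:main} read at first order in the perturbation, producing the $K/(\eta^{d+2}\beta)$ bound of (\ref{eq:last}). The main obstacle is this last step: while the Wick-rotation identity above is the $n=1$ specialization of the machinery underlying the main theorem, verifying that the replacement $g_{\beta,\eta}\rightsquigarrow g$ in the real-time Duhamel term yields precisely the stated $\eta^{-(d+2)}\beta^{-1}$ scaling, with no extra factor of $|\varepsilon|$ at this order, requires carefully tracking the spreading of $\mathcal{P}$ under the unitary evolution via Lieb-Robinson bounds -- the same input that controls $R_{\beta,\mu,L}$ in Theorem~\ref{thm:main}. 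Steps~1 and~2 above are, by comparison, direct consequences of estimates already established in the paper.
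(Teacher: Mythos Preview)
Your proposal is correct and follows essentially the same route as the paper: extract the $n=1$ term from (\ref{eq:mainexp}) and bound the tail by (\ref{eq:Iest}) for (\ref{eq:lin})--(\ref{eq:linerr}); split $g_{\beta,\eta}(t-is)-g(\eta t)$ as you do and use Assumption~\ref{ass:dec} for (\ref{eq:linerr2}); and for (\ref{eq:last}) invoke the $n=1$ Wick rotation identity and then control the replacement $g_{\beta,\eta}\rightsquigarrow g$ on the real axis via the Lieb-Robinson estimate $\|[\mathcal{P},\widetilde{\mathcal{U}}(t;s)^*\mathcal{O}_X\widetilde{\mathcal{U}}(t;s)]\|\le C(|t-s|^d+1)$ together with (\ref{eq:diffswitch}), exactly as in the proof of Proposition~\ref{prop:comp}.

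Two small slips worth fixing: first, $|s|\neq|s|_\beta$ for $s\in(\beta/2,\beta)$, but since $\omega\in\frac{2\pi}{\beta}\mathbb{N}$ the factor $e^{-i\omega s}-1$ is $\beta$-periodic in $s$, so you may bound $|e^{-i\omega s}-1|\le\omega|s|_\beta$ directly and then invoke Assumption~\ref{ass:dec}; second, the sum $\sum_\omega \omega|\tilde g_{\beta,\eta}(\omega)|$ is bounded by $\eta\|\xi h\|_1+\frac{2\pi}{\beta}\|h\|_1$ rather than just $C\eta$ (see (\ref{eq:omegag})), but the extra $1/\beta$ term is harmlessly absorbed into the $1/(\beta\eta)$ contribution in (\ref{eq:linerr2}).
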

\begin{remark} Eq. (\ref{eq:linerr2}) shows that, up to an error term vanishing as $\beta \to \infty$ and $\eta \to 0^{+}$, the first order in $\varepsilon$ in the Duhamel expansion for the real-time dynamics is equal to the first order in $\varepsilon$ in the expansion for the instantaneous Gibbs state $\langle \cdot \rangle_{t}$. To see this, we rely on the cumulant expansion in $\varepsilon$ for the instantaneous Gibbs state, Eq. (\ref{eq:equi}). More generally, the argument can be extended to show that the $n$-th order term in $\varepsilon$ in the real-time Duhamel expansion for the dynamics generated by $\mathcal{H}(\eta t)$ is equal to the $n$-th order term in $\varepsilon$ in the expansion of the instantaneous Gibbs state of $\mathcal{H}(\eta t)$, up to vanishing errors as $\beta\to \infty$ and as $\eta \to 0^{+}$.
\end{remark}
The proof of the main result will be given in Section \ref{sec:proof}, and it is organized as follows. In Section \ref{sec:duha} we recall how to derive the Duhamel expansion for the many-body evolution, in a finite volume. In Section \ref{sec:LR} we introduce the auxiliary dynamics, obtained after replacing $g(\eta t)$ with $g_{\beta,\eta}(t)$ in $\mathcal{H}(\eta t)$, and we prove the closeness of the two dynamics for $\beta$ large enough, in the sense of expectation of local observables using Lieb-Robinson bounds. In Section \ref{sec:wick} we represent the Duhamel expansion for the auxiliary dynamics in a finite volume  via the Wick rotation: this allows to get an identity for every term in the Duhamel expansion in terms of time-ordered Euclidean correlations. We then use Assumption \ref{ass:dec} to establish convergence of the (Wick-rotated) Duhamel series, uniformly in the size of the system. In Section \ref{sec:compa} we recall the cumulant expansion in $\varepsilon$ for the instantaneous Gibbs state of $\mathcal{H}(\eta t)$. Finally, in Section \ref{sec:fin} we put everything together, and we prove Theorem \ref{thm:main}.

\section{Proof of Theorem \ref{thm:main}}\label{sec:proof}
\subsection{Duhamel expansion}\label{sec:duha}
We start by recalling how to derive the well-known Duhamel series for the expectation of local observables. Given a time-dependent Hamiltonian $\mathcal{H}(t) = \mathcal{H} + \varepsilon g(\eta t) \mathcal{P}$, let us consider the associated unitary evolution:
\begin{equation}
\begin{split}
i\partial_{t} \mathcal{U}(t;s) &= \mathcal{H}(t) \mathcal{U}(t;s) \\
\mathcal{U}(s;s) &= \mathbbm{1}\;.
\end{split}
\end{equation}
For $\varepsilon = 0$ one trivially has $\mathcal{U}(t;s)  = e^{-i\mathcal{H}(t-s)}$. We are interested in deriving a perturbative expansion around the evolution generated by $\mathcal{H}$. To this end, we define the unitary evolution in the interaction picture as:
\begin{equation}
\mathcal{U}_{\text{I}}(t;s) := e^{i\mathcal{H}t} \mathcal{U}(t;s) e^{-i\mathcal{H}s}\;.
\end{equation}
Clearly, $\mathcal{U}_{\text{I}}(s;s) = \mathbbm{1}$, and:
\begin{equation}\label{eq:genI}
\begin{split}
i\partial_{t} \mathcal{U}_{\text{I}}(t;s) &= e^{i\mathcal{H}t} (-\mathcal{H} + \mathcal{H}(t)) \mathcal{U}(t;s) e^{-i\mathcal{H}s} \\
&\equiv \varepsilon g(\eta t) \tau_{t}(\mathcal{P}) \mathcal{U}_{\text{I}}(t;s)\;.
\end{split}
\end{equation}
Next, we write, for $T>0$ and for $0\geq t\geq -T$:
\begin{equation}
\begin{split}
\Tr \mathcal{O} \mathcal{U}(t;-T) &\rho_{\beta, \mu, L} \mathcal{U}(t;-T)^{*} - \Tr \mathcal{O} \rho_{\beta, \mu, L} \\
&= \Tr \tau_{t}(\mathcal{O}) \mathcal{U}_{\text{I}}(t;-T) \rho_{\beta, \mu, L} \mathcal{U}_{\text{I}}(t;-T)^{*} - \Tr \tau_{t}(\mathcal{O}) \rho_{\beta, \mu, L}
\end{split}
\end{equation}
where we used the cyclicity of the trace and the invariance of $\rho_{\beta, \mu, L}$ under the dynamics generated by $\mathcal{H}$. Finally, by Eq.~(\ref{eq:genI}):
\begin{equation}
\begin{split}
\Tr \mathcal{O} &\mathcal{U}(t;-T) \rho_{\beta, \mu, L} \mathcal{U}(t;-T)^{*} - \Tr \mathcal{O} \rho_{\beta, \mu, L}\\
& = (-i \varepsilon) \int_{-T}^{t} ds\, g(\eta s) \Tr \tau_{t}(\mathcal{O}) [ \tau_{s}(\mathcal{P}), \mathcal{U}_{\text{I}}(s;-T) \rho_{\beta, \mu, L} \mathcal{U}_{\text{I}}(s;-T)^{*}] \\
& =  (-i \varepsilon) \int_{-T}^{t} ds\, g(\eta s) \Tr [ \tau_{t}(\mathcal{O}),  \tau_{s}(\mathcal{P})] \mathcal{U}_{\text{I}}(s;-T) \rho_{\beta, \mu, L} \mathcal{U}_{\text{I}}(s;-T)^{*}\;.
\end{split}
\end{equation}
The procedure can be iterated. One gets:
\begin{equation}\label{eq:duh}
\begin{split}
&\Tr \mathcal{O} \mathcal{U}(t;-T) \rho_{\beta, \mu, L} \mathcal{U}(t;-T)^{*} = \Tr \mathcal{O} \rho_{\beta, \mu, L}\\
& \quad + \sum_{n=1}^{m} (-i\varepsilon)^{n} \int_{-T \leq s_{n} \leq \ldots \leq s_{1} \leq t} d \underline{s}\, g(\eta s_{1}) \cdots g(\eta s_{n}) \\& \qquad \cdot \langle [ \cdots [[ \tau_{t}(\mathcal{O}), \tau_{s_{1}}(\mathcal{P})], \tau_{s_{2}}(\mathcal{P})] \cdots \tau_{s_{n}}(\mathcal{P}) ] \rangle_{\beta, \mu, L} \\ &\quad + R_{\beta, \mu, L}^{(m+1)}(-T;t),
\end{split}
\end{equation}
where $R_{\beta, \mu, L}^{(m+1)}(-T;t)$ is the Taylor remainder of the expansion, given by:
\begin{equation}
\begin{split}
R&_{\beta, \mu, L}^{(m+1)}(-T;t) = (-i\varepsilon)^{m+1} \int_{-T \leq s_{m+1} \leq \ldots \leq s_{1} \leq t} d \underline{s}\, g(\eta s_{1}) \cdots g(\eta s_{m+1}) \\& \cdot \Tr [ \cdots [ \tau_{t}(\mathcal{O}), \tau_{s_{1}}(\mathcal{P})], \cdots \tau_{s_{m+1}}(\mathcal{P}) ] \mathcal{U}_{\text{I}}(s_{m+1};-T) \rho_{\beta, \mu, L} \mathcal{U}_{\text{I}}(s_{m+1};-T)^{*}\;.
\end{split}
\end{equation}
On a finite lattice and for $\eta > 0$, the series is absolutely convergent. In fact, by using the boundedness of the fermionic operators, and the unitarity of the time evolution, we have the following crude estimate:
\begin{equation}
\begin{split}
\big|R_{\beta, \mu, L}^{(m+1)}(-T;t)\big| \leq &|\varepsilon|^{m+1} \int_{-T \leq s_{m+1} \leq \ldots \leq s_{1} \leq t} d \underline{s}\, |g(\eta s_{1})| \cdots |g(\eta s_{m+1})| \\
& \cdot 2^{m+1} \| \mathcal{O} \| \| \mathcal{P} \|^{m+1} \\
\leq & \| \mathcal{O} \| \frac{C^{m+1} |\varepsilon|^{m+1} |\Lambda_{L}|^{m+1} \eta^{-m-1}}{(m+1)!} \Big[\int_{-\infty}^{0} ds\, |g(s)|\Big]^{m+1}\;,
\end{split}
\end{equation}
for a universal constant $C>0$. Thus, taking $m$ large enough, uniformly in $T$, the error term can be made as small as wished. Hence, we have, in the $T\to \infty$ limit:
\begin{equation}\label{eq:duh2}
\begin{split}
&\Tr \mathcal{O} \mathcal{U}(t;-\infty) \rho_{\beta, \mu, L} \mathcal{U}(t;-\infty)^{*} =  \Tr \mathcal{O} \rho_{\beta, \mu, L} \\
&\quad + \sum_{n=1}^{\infty} (-i\varepsilon)^{n} \int_{-\infty \leq s_{n} \leq \ldots \leq s_{1} \leq t} d \underline{s}\, g(\eta s_{1}) \cdots g(\eta s_{n}) \\&\qquad \cdot \langle [ \cdots [[ \tau_{t}(\mathcal{O}), \tau_{s_{1}}(\mathcal{P})], \tau_{s_{2}}(\mathcal{P})] \cdots \tau_{s_{n}}(\mathcal{P}) ] \rangle_{\beta, \mu, L}\;.
\end{split}
\end{equation}
Eq.~(\ref{eq:duh2}) is the Duhamel expansion for the average of $\mathcal{O}$ on the time-dependent state $\rho(t) := \mathcal{U}(t;-\infty) \rho_{\beta, \mu, L} \mathcal{U}(t;-\infty)^{*}$. In order to extract useful information from this representation, we need estimates for the various terms that are uniform in the size of the system. In particular, we would like to prove that the series converges uniformly in $\varepsilon$, as $L\to \infty$ and for $\eta$ small.

The main difficulty to achieve this is the control of the time-integral, uniformly in $\eta$. For fixed $\eta$, this problem might be approached using Lieb-Robinson bounds, see \cite{NSY} for a review. This bound reads, for two operators $\mathcal{O}_{X}$ and $\mathcal{O}_{Y}$ supported on $X, Y$:
\begin{equation}
\| [ \mathcal{O}_{X}, \tau_{t}(\mathcal{O}_{Y}) ] \| \leq C e^{v|t| - c\cdot \text{dist}(X,Y)},
\end{equation}
for suitable positive constants $C, c, v$. Combined with the boundedness of the fermionic operators, this estimate (and its extension to multi-commutators, \cite{BP}) can be used to prove that the series in (\ref{eq:duh2}) is convergent uniformly in $L$, however only for $|\varepsilon| \leq \varepsilon(\eta)$ with $\varepsilon(\eta) \to 0^{+}$ as $\eta \to 0^{+}$. In the next section, we shall study the time-evolution using a different approach, which gives estimates that are uniform in $\eta$.
\subsection{The auxiliary dynamics}\label{sec:LR}
Let
\begin{equation}\label{eq:tildeH}
	\mathcal{H}_{\beta,\eta}( t) := \mathcal{H} + \varepsilon g_{\beta,\eta}(t) \mathcal{P},
\end{equation}
with $g_{\beta,\eta}(t)$ introduced in Definition \ref{def:apprsw}. Here we will prove that the evolutions generated by $\mathcal{H}(\eta t)$ and by $\mathcal{H}_{\beta,\eta}(t)$ are close, at small enough temperature, in the sense of the expectation of local observables. To compare the two evolutions, we will use Lieb-Robinson bounds for non-autonomous dynamics \cite{BMNS, BP}. 
\begin{proposition}[Comparison of dynamics]\label{prop:comp} Let $\rho(t)$, $\tilde \rho(t)$ be the time-dependent states evolving with $\mathcal{H}(\eta t)$, $\mathcal{H}_{\beta,\eta}(t)$ respectively, with initial data given by $\rho(-\infty) = \tilde \rho(-\infty) = \rho_{\beta, \mu, L}$. Let $\mathcal{O}_{X}$ be a local observable. Then, there exists $K>0$, independent of $\mathfrak{c}$ and dependent on $h$, such that for all $\varepsilon, \eta, \beta, L$:
\begin{equation}\label{eq:compdyn}
\big|\Tr \mathcal{O}_{X} (\rho(t) - \tilde{\rho}(t))\big| \leq \frac{K|\varepsilon|}{\eta^{d+2}\beta},\qquad \text{for all $t\leq 0$.}
\end{equation}
\end{proposition}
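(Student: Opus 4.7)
The plan is to control $\rho(t) - \tilde\rho(t)$ via a Duhamel-type interpolation identity between the two propagators, then to combine a polynomial-in-time Lieb-Robinson estimate with the pointwise bound \eqref{eq:diffswitch} for $|g(\eta r) - g_{\beta,\eta}(r)|$.

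Denoting by $\mathcal{U}(t;s)$ and $\tilde{\mathcal{U}}(t;s)$ the unitary propagators of $\mathcal{H}(\eta\cdot)$ and $\mathcal{H}_{\beta,\eta}(\cdot)$ respectively, the first step is to interpolate between the two dynamics via
\begin{equation*}
F(r) := \tilde{\mathcal{U}}(t;r)\,\mathcal{U}(r;s)\,\rho_{\beta,\mu,L}\,\mathcal{U}(r;s)^{*}\,\tilde{\mathcal{U}}(t;r)^{*},\qquad r\in[s,t],
\end{equation*}
which satisfies $F(s)=\tilde{\mathcal{U}}(t;s)\rho_{\beta,\mu,L}\tilde{\mathcal{U}}(t;s)^{*}$ and $F(t)=\mathcal{U}(t;s)\rho_{\beta,\mu,L}\mathcal{U}(t;s)^{*}$, reducing to $\tilde\rho(t)$ and $\rho(t)$ in the limit $s\to -\infty$. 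A direct computation of $\partial_r F(r)$, using that the only $r$-dependence in the generators is through the coefficient of $\mathcal{P}$, yields in that limit
\begin{equation*}
\rho(t) - \tilde\rho(t) = -i\varepsilon \int_{-\infty}^{t} dr\, \bigl(g(\eta r) - g_{\beta,\eta}(r)\bigr)\,\tilde{\mathcal{U}}(t;r)\bigl[\mathcal{P}, \rho(r)\bigr]\tilde{\mathcal{U}}(t;r)^{*}.
\end{equation*}
Pairing with $\mathcal{O}_{X}$, using cyclicity of the trace, and setting $\mathcal{O}_{X}^{\mathrm{aux}}(t,r) := \tilde{\mathcal{U}}(t;r)^{*}\mathcal{O}_{X}\tilde{\mathcal{U}}(t;r)$, this becomes
\begin{equation*}
\Tr\,\mathcal{O}_{X}(\rho(t) - \tilde\rho(t)) = -i\varepsilon \int_{-\infty}^{t} dr\, \bigl(g(\eta r) - g_{\beta,\eta}(r)\bigr)\,\Tr\,\bigl[\mathcal{O}_{X}^{\mathrm{aux}}(t,r), \mathcal{P}\bigr]\rho(r).
\end{equation*}

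Next, the integrand must be controlled uniformly in $L$. Decomposing $\mathcal{P}=\sum_{Y}\mathcal{P}_{Y}$ into its finite-range pieces and noting that $|g_{\beta,\eta}|\leq \|h\|_{1}$ by \eqref{eq:tildegbd} so that the local terms of $\mathcal{H}_{\beta,\eta}$ are uniformly bounded for $|\varepsilon|<\varepsilon_{0}$, the Lieb-Robinson bound for non-autonomous dynamics \cite{BMNS, BP} applies with an $L$-, $\beta$-, $\eta$-independent velocity $v$ and decay rate $c$. Splitting the sum over $Y$ at the effective light cone $d(X,Y)\sim v|t-r|/c$ — using the trivial bound $\|[\mathcal{O}_{X}^{\mathrm{aux}}(t,r), \mathcal{P}_{Y}]\|\leq 2\|\mathcal{O}_{X}\|\|\mathcal{P}_{Y}\|$ inside and the exponential Lieb-Robinson estimate outside — produces the polynomial bound
\begin{equation*}
\sum_{Y}\bigl\|[\mathcal{O}_{X}^{\mathrm{aux}}(t,r), \mathcal{P}_{Y}]\bigr\| \leq C\bigl(1+|t-r|\bigr)^{d},
\end{equation*}
with $C$ independent of $L$, $\beta$, $\eta$, and hence $|\Tr\,[\mathcal{O}_{X}^{\mathrm{aux}}(t,r),\mathcal{P}]\rho(r)|\leq C(1+|t-r|)^{d}$ since $\rho(r)$ is a state.

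To conclude, I would insert \eqref{eq:diffswitch} and use $|t-r|\leq |r|$ for $r\leq t\leq 0$, then change variables $u=-r$, arriving at
\begin{equation*}
|\Tr\,\mathcal{O}_{X}(\rho(t)-\tilde\rho(t))| \leq \frac{C|\varepsilon|}{\beta}\int_{0}^{\infty} d\xi\, |h(\xi)| \int_{0}^{\infty} du\, u(1+u)^{d} e^{-\xi\eta u}.
\end{equation*}
Estimating the inner integral by $C/(\xi\eta)^{d+2}$ and invoking the integrability of $|h(\xi)|/\xi^{d+2}$ near $\xi=0$ and near $\xi=\infty$ furnished by Assumption~\ref{ass:switch} then delivers the claimed bound $K|\varepsilon|/(\eta^{d+2}\beta)$. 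The main obstacle is precisely the polynomial-in-time control of $\sum_{Y}\|[\mathcal{O}_{X}^{\mathrm{aux}}(t,r),\mathcal{P}_{Y}]\|$: a direct use of Lieb-Robinson only yields the exponential factor $e^{v|t-r|}$, which is not integrable against the linearly growing prefactor $|r|/\beta$ from \eqref{eq:diffswitch}. The sharper polynomial bound obtained by splitting at the Lieb-Robinson light cone is exactly what produces the power $\eta^{d+2}$ in the denominator of the final estimate, and consequently sets the range of temperatures accessible by the method.
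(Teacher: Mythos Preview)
Your proof is correct and follows essentially the same route as the paper's. The only cosmetic difference is the framing of the Duhamel interpolation: the paper interpolates in the Heisenberg picture via $s\mapsto \mathcal{U}(t;s)\tilde{\mathcal{U}}(t;s)^{*}\mathcal{O}_{X}\tilde{\mathcal{U}}(t;s)\mathcal{U}(t;s)^{*}$ and bounds the operator norm of the resulting difference, whereas you interpolate in the Schr\"odinger picture via $F(r)=\tilde{\mathcal{U}}(t;r)\mathcal{U}(r;s)\rho_{\beta,\mu,L}\mathcal{U}(r;s)^{*}\tilde{\mathcal{U}}(t;r)^{*}$ and pair with $\mathcal{O}_{X}$ afterwards; both lead to the same commutator $[\mathcal{P},\tilde{\mathcal{U}}(t;r)^{*}\mathcal{O}_{X}\tilde{\mathcal{U}}(t;r)]$, the same polynomial Lieb--Robinson estimate $C(1+|t-r|)^{d}$, and the same final integral bound via \eqref{eq:diffswitch} and Assumption~\ref{ass:switch}.
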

\begin{proof} We start by writing:
\begin{equation}\label{eq:zetabeta}
g(\eta t) = g_{\beta,\eta}(t) + \zeta_{\eta, \beta}(t),
\end{equation}
where $g_{\beta,\eta}(t)$ is defined in Eq.~(\ref{eq:gbeta}), and the error term satisfies the bound, by Eq.~(\ref{eq:diffswitch}):
\begin{equation}\label{eq:zeta}
|\zeta_{\eta, \beta}(t)| \leq \frac{2\pi|t|}{\beta} \int_{0}^{\infty}d \xi\, |h(\xi)|  e^{\xi \eta t}.
\end{equation}
Next, we write:
\begin{equation}
\begin{split}
&\Tr \mathcal{O}_{X} (\rho(t) - \tilde{\rho}(t)) = \\
&\lim_{T\to -\infty} \Tr \mathcal{O}_{X} (\mathcal{U}(t;-T) \rho_{\beta, \mu, L} \mathcal{U}(t;-T)^{*} - \widetilde{\mathcal{U}}(t; -T) \rho_{\beta, \mu, L} \widetilde{\mathcal{U}}(t;-T)^{*})
\end{split}
\end{equation}
where $\mathcal{U}(t;s)$, $\widetilde{\mathcal{U}}(t;s)$ are the unitary groups generated by $\mathcal{H}(\eta t)$, $\mathcal{H}_{\beta,\eta}(t)$, respectively. We estimate the argument of the limit as:
\begin{equation}
\begin{split}
&\Big|\Tr  \Big(\mathcal{U}(t;-T)^{*} \mathcal{O}_{X} \mathcal{U}(t;-T) - \widetilde{\mathcal{U}}(t;-T)^{*} \mathcal{O}_{X} \widetilde{\mathcal{U}}(t;-T)\Big) \rho_{\beta, \mu, L}\Big| \\
& \leq \Big\| \mathcal{O}_{X}  - \mathcal{U}(t;-T)\widetilde{\mathcal{U}}(t;-T)^{*} \mathcal{O}_{X} \widetilde{\mathcal{U}}(t;-T) \mathcal{U}(t;-T)^{*} \Big\|,
\end{split}
\end{equation}
where we used that $ \rho_{\beta, \mu, L} \geq 0$, $\Tr \rho_{\beta, \mu, L} = 1$ and the unitarity of time-evolution. Next, we rewrite the argument of the norm as:
\begin{equation}
\begin{split}
&\mathcal{O}_{X}  - \mathcal{U}(t;-T)\widetilde{\mathcal{U}}(t;-T)^{*} \mathcal{O}_{X} \widetilde{\mathcal{U}}(t;-T) \mathcal{U}(t;-T)^{*} \\
&\quad = -i \int_{-T}^{t} ds\, i\frac{\partial}{\partial s} \mathcal{U}(t;s)\widetilde{\mathcal{U}}(t;s)^{*} \mathcal{O}_{X} \widetilde{\mathcal{U}}(t;s) \mathcal{U}(t;s)^{*} \\
&\quad = -i \int_{-T}^{t} ds\,  \mathcal{U}(t;s)\Big[ -\mathcal{H}(\eta s) + \mathcal{H}_{\beta,\eta}(s),\, \widetilde{\mathcal{U}}(t;s)^{*} \mathcal{O}_{X} \widetilde{\mathcal{U}}(t;s)\Big] \mathcal{U}(t;s)^{*} \\
&\quad \equiv i \int_{-T}^{t} ds\, \varepsilon \zeta_{\eta, \beta}(s) \mathcal{U}(t;s)\Big[ \mathcal{P},\, \widetilde{\mathcal{U}}(t;s)^{*} \mathcal{O}_{X} \widetilde{\mathcal{U}}(t;s)\Big] \mathcal{U}(t;s)^{*},
\end{split}
\end{equation}
where in the third line we used that $\mathcal{U}(t;s)^{*} = \mathcal{U}(s;t)$, and in the last line we used Eq.~(\ref{eq:zetabeta}). Therefore,
\begin{equation}\label{eq:norm}
\begin{split}
&\Big\| \mathcal{O}_{X}  - \mathcal{U}(t;-T)\widetilde{\mathcal{U}}(t;-T)^{*} \mathcal{O}_{X} \widetilde{\mathcal{U}}(t;-T) \mathcal{U}(t;-T)^{*} \Big\| \\
&\leq \int_{-T}^{t} ds\, |\varepsilon| \big|\zeta_{\eta, \beta}(s)\big| \Big\| \Big[ \mathcal{P},\, \widetilde{\mathcal{U}}(t;s)^{*} \mathcal{O}_{X} \widetilde{\mathcal{U}}(t;s)\Big] \Big\|.
\end{split}
\end{equation}
Next, we claim that:
\begin{equation}\label{eq:LRcons}
\Big\| \Big[ \mathcal{P},\, \widetilde{\mathcal{U}}(t;s)^{*} \mathcal{O}_{X} \widetilde{\mathcal{U}}(t;s)\Big] \Big\| \leq C(|t-s|^{d} + 1).
\end{equation}
This inequality stems from the Lieb-Robinson bound for non-autonomous dynamics, see Theorem 4.6 of \cite{BMNS} for quantum spin systems, or Theorem 5.1 of \cite{BP} for the case of lattice fermions:
\begin{equation}\label{eq:LRnonauto}
\Big\| \Big[ \mathcal{O}_{Y}, \widetilde{\mathcal{U}}(t;s)^{*} \mathcal{O}_{X} \widetilde{\mathcal{U}}(t;s) \Big] \Big\| \leq C e^{v|t-s| - c \cdot \text{dist}(X,Y)}
\end{equation}
for any two local operators $\mathcal{O}_{X}$, $\mathcal{O}_{Y}$. The proof of (\ref{eq:LRcons}) is standard, and we give it here for completeness. Representing the perturbation $\mathcal{P}$ in terms of its local potentials, we have:
\begin{equation}
\begin{split}
\Big\| \Big[ \mathcal{P},\, \widetilde{\mathcal{U}}(t;s)^{*} \mathcal{O}_{X} \widetilde{\mathcal{U}}(t;s)\Big] \Big\| &\leq  \sum_{Y \subseteq \Lambda_{L}} \Big\| \Big[ \mathcal{P}_{Y},\, \widetilde{\mathcal{U}}(t;s)^{*} \mathcal{O}_{X} \widetilde{\mathcal{U}}(t;s)\Big] \Big\| \\
&= \sum_{\substack{Y \subseteq \Lambda_{L} \\ \text{dist}(X,Y) \leq D |t-s|}} \Big\| \Big[ \mathcal{P}_{Y},\, \widetilde{\mathcal{U}}(t;s)^{*} \mathcal{O}_{X} \widetilde{\mathcal{U}}(t;s)\Big] \Big\| \\\\
&\quad +  \sum_{\substack{Y \subseteq \Lambda_{L} \\ \text{dist}(X,Y) > D |t-s|}} \Big\| \Big[ \mathcal{P}_{Y},\, \widetilde{\mathcal{U}}(t;s)^{*} \mathcal{O}_{X} \widetilde{\mathcal{U}}(t;s)\Big] \Big\|
\end{split}
\end{equation}
with $D$ large enough to be chosen below. By the boundedness of the fermionic operators, and by the unitarity of the dynamics, the first term in the right-hand side is estimated as:
\begin{equation}
\sum_{\substack{Y \subseteq \Lambda_{L} \\ \text{dist}(X,Y) \leq D |t-s|}} \Big\| \Big[ \mathcal{P}_{Y},\, \widetilde{\mathcal{U}}(t;s)^{*} \mathcal{O}_{X} \widetilde{\mathcal{U}}(t;s)\Big] \Big\| \leq K(|t-s|^{d} + 1)
\end{equation}
where we used the fact that the sum is restricted to sets $Y$ of bounded diameter. For the second term, we use the Lieb-Robinson bound (\ref{eq:LRnonauto}), to get:
\begin{equation}
\begin{split}
&\sum_{\substack{Y \subseteq \Lambda_{L} \\ \text{dist}(X,Y) > D|t-s|}} \Big\| \Big[ \mathcal{P}_{Y},\, \widetilde{\mathcal{U}}(t;s)^{*} \mathcal{O}_{X} \widetilde{\mathcal{U}}(t;s)\Big] \Big\| \\&\qquad \leq \sum_{\substack{Y \subseteq \Lambda_{L} \\ \text{dist}(X,Y) > D|t-s|,\; \text{diam}(Y) \leq R}}C e^{v|t-s| - c \cdot \text{dist}(X,Y)}\;.
\end{split}
\end{equation}
Choosing $D$ large enough, we have:
\begin{equation}
\begin{split}
&\sum_{\substack{Y \subseteq \Lambda_{L} \\ \text{dist}(X,Y) > D|t-s|}} \Big\| \Big[ \mathcal{P}_{Y},\, \widetilde{\mathcal{U}}(t;s)^{*} \mathcal{O}_{X} \widetilde{\mathcal{U}}(t;s)\Big] \Big\| \\&\qquad\qquad \leq \sum_{\substack{Y \subseteq \Lambda_{L} \\ \text{dist}(X,Y) > D|t-s|,\; \text{diam}(Y) \leq R}}C e^{- (c/2) \cdot \text{dist}(X,Y)} \\
&\qquad\qquad \leq K.
\end{split}
\end{equation}
This concludes the check of (\ref{eq:LRcons}). Using the bound (\ref{eq:LRcons}) in (\ref{eq:norm}), we get:
\begin{equation}\label{eq:norm2}
\begin{split}
&\Big\| \mathcal{O}_{X}  - \mathcal{U}(t;-T)\widetilde{\mathcal{U}}(t;-T)^{*} \mathcal{O}_{X} \widetilde{\mathcal{U}}(t;-T) \mathcal{U}(t;-T)^{*} \Big\| \\
&\qquad \leq C\int_{-T}^{t} ds\, |\varepsilon| \big|\zeta_{\eta, \beta}(s)\big| (|t - s|^{d} + 1) \\
&\qquad \leq \frac{K|\varepsilon|}{\beta} \int_{0}^{\infty}d \xi\, |h(\xi)|  \int_{-T}^{t} ds\, e^{\xi \eta s} (|s|^{d+1} + 1),
\end{split}
\end{equation}
where in the last step we used the bound (\ref{eq:zeta}) and we exchanged the order of integration. Using that:
\begin{equation}
\int_{-T}^{t} ds\, e^{\xi \eta s} |s|^{d+1} \leq \frac{C}{(\eta \xi)^{d+2}},
\end{equation}
we finally obtain:
\begin{equation}\label{eq:norm3}
\begin{split}
&\Big\| \mathcal{O}_{X}  - \mathcal{U}(t;-T)\widetilde{\mathcal{U}}(t;-T)^{*} \mathcal{O}_{X} \widetilde{\mathcal{U}}(t;-T) \mathcal{U}(t;-T)^{*} \Big\| \\
&\qquad \leq \frac{K|\varepsilon|}{\eta^{d+2}\beta} \int_{0}^{\infty}d \xi\, \frac{|h(\xi)|}{\xi^{d+2}}.
\end{split}
\end{equation}
Eq.~(\ref{eq:compdyn}) follows from assumption (\ref{eq:hbd}), after a redefinition of the constant $K$. This concludes the proof.
\end{proof}
\subsection{Wick rotation}\label{sec:wick}
Here we shall represent each coefficient in the Duhamel expansion (\ref{eq:duh2}) for the auxiliary dynamics generated by (\ref{eq:tildeH}) in terms of Euclidean correlation functions, via a complex deformation argument. The advantage is that useful space-time decay estimates for Euclidean correlations can be proved using statistical mechanics tools, such as the cluster expansion. This complex deformation is known in physics as Wick rotation, and here it will be established rigorously for the auxiliary dynamics. The next lemma is the main result of the section. Its proof is based on the adaptation of ideas of Section 5.4 of \cite{BR2} to our adiabatic setting.
\begin{lemma}[Wick rotation]\label{lem:wick} Let $A \in \mathcal{A}_{\Lambda_{L}}$, $B \in \mathcal{A}^{\mathcal{N}}_{\Lambda_{L}}$. Let $n\in \mathbb{N}$. Let $a(s)$ be a periodic function with period $\beta$, such that:
\begin{equation}\label{eq:adef0}
a(s) = \sum_{\omega\in \frac{2\pi}{\beta} \mathbb{N}} \tilde a(\omega) e^{-i\omega s},\qquad \sum_{\omega\in \frac{2\pi}{\beta} \mathbb{N}} |\tilde a(\omega)| \leq C,\qquad \tilde a(0) = 0.
\end{equation}
Then, the following identity holds true, for all $t\leq 0$:
\begin{equation}
\begin{split}
&\int_{-\infty \leq s_{n} \leq \ldots \leq s_{1} \leq t}  d\underline{s}\, \Big[\prod_{j=1}^{n} a(is_{j})\Big] \langle [ \cdots [[\tau_{t}(A), \tau_{s_{1}}(B)], \tau_{s_{2}}(B)] \cdots \tau_{s_{n}}(B) ] \rangle_{\beta,\mu,L} \\
&\; = \frac{(-i)^{n}}{n!} \int_{[0,\beta)^{n}} d\underline{s}\, \Big[\prod_{j=1}^{n} a(it+s_{j})\Big] \langle {\bf T} \gamma_{s_{1}}(B); \gamma_{s_{2}}(B); \cdots; \gamma_{s_{n}}(B); A \rangle_{\beta,\mu,L}.
\end{split}
\end{equation}
\end{lemma}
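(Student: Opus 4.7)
The plan is to adapt the complex deformation strategy of Propositions~5.4.12--5.4.13 of \cite{BR2} to our adiabatic setting, with the role of the strong clustering hypothesis of \cite{BR2} replaced by the damping inherent in the periodic function $a$. The starting point is to expand the nested multi-commutator into a sum of $2^n$ terms indexed by subsets $I \subseteq \{1,\ldots,n\}$, with sign $(-1)^{|I|}$: in each such term, the operators $\tau_{s_j}(B)$ with $j \in I$ appear to the left of $\tau_t(A)$ (in decreasing order of $s_j$), and those with $j \notin I$ appear to the right (in increasing order). Taking the Gibbs expectation and using the KMS identity, which for the gauge-invariant $B$ takes the form $\langle X\,\tau_{s}(B)\rangle_{\beta,\mu,L} = \langle \tau_{s-i\beta}(B)\,X\rangle_{\beta,\mu,L}$, I iteratively move the $B$-operators with $j \notin I$ across $\tau_t(A)$ so that, in the resulting expression, every $B$-operator sits on the left of $\tau_t(A)$, at the cost of shifting $s_j \to s_j - i\beta$ for $j \notin I$. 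Stationarity of the Gibbs state under $\tau$ (following from $[\mathcal{H},\mathcal{N}] = 0$) then lets me translate the factor $\tau_t$ off of $A$.

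Next, I perform a contour deformation in each variable $s_j$, bending the real contour $(-\infty,\,\cdot\,]$ down onto the vertical contour parametrized by $s_j = t - i\sigma_j$ with $\sigma_j \in [0,\beta)$. Under this deformation, $\tau_{s_j - t}(B) \to \tau_{-i\sigma_j}(B) = \gamma_{\sigma_j}(B)$ and $a(is_j) \to a(it + \sigma_j)$, matching the kernel on the right-hand side of the lemma, while the Jacobian $ds_j = -i\,d\sigma_j$ supplies one factor of $-i$ per variable. The deformation is justified by: (i) entire analyticity of $\tau_z(B)$ in $z$, automatic on the finite-dimensional Fock space $\mathcal{F}_L$; (ii) absolute summability $\sum_{\omega}|\tilde a(\omega)| < \infty$, which makes $a$ holomorphic and bounded on the strip $\{\mathrm{Re}(s_j) \leq 0\}$; (iii) the exponential decay $|a(is_j)| \leq \sum_{\omega} |\tilde a(\omega)|\,e^{\omega\,\mathrm{Re}(s_j)}$, controlled by $\omega \geq 2\pi/\beta > 0$, which kills the contribution from the side at $\mathrm{Re}(s_j) = -\infty$; and (iv) the periodicity $a(s + \beta) = a(s)$ of $a$ (equivalently, $a(is) = a(i(s-i\beta))$), which ensures that the KMS-shifted contours of the $j \notin I$ terms are mapped back into the fundamental strip $\sigma_j \in [0,\beta)$ without extra phases.

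After the deformation, each of the $2^n$ terms becomes an integral over $[0,\beta)^n$ of an imaginary-time correlator, with operators in one of $2^n$ orderings relative to $A$. Summing over $I$ with signs $(-1)^{|I|}$ realizes precisely the moment-to-cumulant inversion formula that defines $\langle {\bf T}\,\gamma_{s_1}(B); \cdots; \gamma_{s_n}(B); A\rangle_{\beta,\mu,L}$ via \eqref{eq:Tcumul}; the disconnected pieces that would otherwise contribute are annihilated by the hypothesis $\tilde a(0) = 0$, since they force an integration $\int_0^\beta e^{-i\omega s_j}\,ds_j$ which vanishes for every $\omega = 2\pi m/\beta$ with $m \neq 0$. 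Finally, replacing the ordered domain $s_1 \geq \cdots \geq s_n$ by the full cube $[0,\beta)^n$, which is permissible because the integrand inherits the full permutation symmetry of the time-ordered cumulant, accounts for the combinatorial factor $1/n!$.

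The main technical obstacle is the rigorous complex deformation and the exclusion of boundary terms. In \cite{BR2}, these boundary terms are controlled by a strong $L^1$-clustering assumption on the real-time correlations of the Gibbs state, a hypothesis which for the class of interacting many-body lattice fermions considered here is not known. In our setting, the damping factor $a(is_j)$, whose spectral support is bounded away from zero, supplies uniform exponential decay on the relevant complex strip, while the periodicity ensures that the deformed contours close up without residues. Coincident times can in principle occur during the deformation, but they form only a measure-zero set and are harmless under the fermion time-ordering convention of \eqref{eq:time}.
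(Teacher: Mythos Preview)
Your proposal identifies the right ingredients—the KMS identity, the $\beta$-periodicity of $a$, contour deformation, exponential damping from $\omega\ge 2\pi/\beta$, and the role of $\tilde a(0)=0$—but the way you combine them contains a genuine gap.

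The central error is the claim that summing the $2^n$ terms over subsets $I$ with signs $(-1)^{|I|}$ ``realizes precisely the moment-to-cumulant inversion formula.'' It does not: the moment-to-cumulant inversion is a sum over set \emph{partitions} with M\"obius coefficients $(-1)^{|\pi|-1}(|\pi|-1)!$, structurally unrelated to an alternating sum over subsets. What the commutator expansion plus KMS plus deformation actually produces, once the combinatorics is carried out, is the \emph{full} correlation $\langle \gamma_{s_1}(B)\cdots\gamma_{s_n}(B)\,A\rangle$ integrated over the simplex $\Delta^n_\beta$, not the cumulant. The passage from the full correlation to the truncated one is a genuinely separate step (Proposition~\ref{prop:conn} in the paper): one factorizes the full correlation via Proposition~\ref{prop:fact}, and every disconnected factor vanishes because translation invariance forces an integral $\int_0^\beta e^{-i\omega s}\,ds=0$ with $\omega\neq 0$, there being no $\omega=0$ mode since $\tilde a(0)=0$. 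You invoke this mechanism in passing, but you have grafted it onto the wrong stage of the argument and conflated it with the subset sum.

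There is also a real technical problem with your deformation. The real-time domain is the simplex $-\infty<s_n\le\cdots\le s_1\le t$, so the contours are coupled: the endpoint of the $s_j$-contour is $s_{j-1}$, which is itself a variable, and you cannot ``bend each $s_j$ down to $t-i\sigma_j$'' independently. Moreover, after your KMS moves the $B$-operators in each of the $2^n$ terms appear in an $I$-dependent order and with $I$-dependent $-i\beta$ shifts; you do not explain how these recombine into a single correlator over a single domain. The paper handles this by an \emph{iterative} scheme (Propositions~\ref{prop:wick1}--\ref{prop:wick2}): it processes the nested commutator from the innermost real integration outward, so that at each step only one real variable $r$ is deformed, with the previously rotated variables sitting in $\Delta^j_\beta$ as spectators. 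Each step turns $\int_{-\infty}^{r_{j-1}}dr\,a(ir)\langle B_j(\cdots)[\tau_r(B),C]\rangle$ into $i\int_{\Delta^{j+1}_\beta}\langle B_{j+1}(\cdots)C\rangle$ via Cauchy's theorem on a single rectangle, a change of variables producing the enlarged simplex. After $n$ iterations one lands on the full correlation over $\Delta^n_\beta$; only then does one invoke $\tilde a(0)=0$ to strip away the disconnected pieces and obtain the time-ordered cumulant over the cube with the $1/n!$.
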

\begin{remark}\label{rem:az} 
\begin{itemize}
\item[(i)] The function $s\mapsto g_{\beta,\eta}(-is)$ satisfies the properties (\ref{eq:adef0}), recall Definition \ref{def:apprsw}.
\item[(ii)] Notice that the function defined in (\ref{eq:adef0}) extends to a function $a(z)$ on the lower-half complex plane, that is analytic for $\text{Im}\, z < 0$ and continuous for $\text{Im}\, z\leq 0$.
\end{itemize}
\end{remark}
The proof will be broken in a few intermediate steps. In what follows, it will be convenient to use the following notations. We define inductively:
\begin{equation}\label{eq:Cn}
C_{0} := \tau_{t}(A)\;,\qquad C_n(t_1,\dots,t_n) := a(it_{n}) [ C_{n-1}(t_1,\dots,t_{n-1}), \tau_{t_n}(B) ].
\end{equation}
Moreover, we set:
\begin{equation}\label{eq:B_def}
B_{0} := 1\;,\qquad B_{n}(t_{1}, \ldots, t_{n}) := \Big[\prod_{i=1}^{n} a(it_{i})\Big]  \tau_{t_{1}}(B) \cdots \tau_{t_{n}}(B).
\end{equation}
Also, we shall introduce the $n$-dimensional symplex of side $\beta$ as:
\begin{equation}
\label{eqn:simplex}
\Delta^n_{\beta}:=\big\{ (s_1,\dots,s_n)\in \mathbb{R}^n\,:\, \beta>s_1>\dots>s_{n}>0  \big\}.
\end{equation}
The combination of Propositions \ref{prop:wick1}, \ref{prop:wick2} below is the adaptation of Propositions 5.4.12, 5.4.13 of \cite{BR2} to our adiabatic setting. Differently from \cite{BR2}, our results hold without clustering assumptions on the real-time correlations.
\begin{proposition}[Basic complex deformation]\label{prop:wick1} Let $B\in \mathcal{A}^{\mathcal{N}}_{\Lambda_{L}}$, $C\in \mathcal{A}_{\Lambda_{L}}$. For every $j \in \mathbb{N}$ and for all $t\leq 0$:
\begin{equation}\label{eq:j>0}
\begin{split}
&\int_{-\infty}^{t} d r\, a(ir) \int_{\Delta^j_{\beta}} d \underline{s}\,   \langle  B_j\left( r - i s_1,\dots, r - i s_j \right) \left[ \tau_{r}(B),C \right]\rangle_{\beta, \mu, L}\\ & \qquad = i \int_{\Delta^{j+1}_{\beta}} d\underline{s}\,
\langle  B_{j+1}\left( t - i s_1,\dots, t - i s_{j+1} \right) C \rangle_{\beta, \mu, L}.
\end{split}
\end{equation}
\end{proposition}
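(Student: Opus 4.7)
The strategy is a complex contour deformation in the real-time variable $r$, together with the KMS identity~(\ref{eq:KMS}) applied to the gauge-invariant operator $B$ and the $\beta$-periodicity $a(iz+\beta)=a(iz)$ (which follows from $\tilde a$ being supported in $(2\pi/\beta)\mathbb{N}$, see Definition~\ref{def:apprsw}). First, expand $[\tau_r(B),C]=\tau_r(B)C-C\tau_r(B)$ and use KMS to cycle $\tau_r(B)$ in the second term, giving
\begin{equation*}
\langle B_j(r-is_1,\ldots,r-is_j)\, C\,\tau_r(B)\rangle_{\beta,\mu,L}=\langle\tau_{r-i\beta}(B)\,B_j(r-is_1,\ldots,r-is_j)\,C\rangle_{\beta,\mu,L}.
\end{equation*}
Using $a(ir+\beta)=a(ir)$, both integrands fit into the $B_{j+1}$ template: the first appears as $B_{j+1}$ evaluated at the boundary face $s_{j+1}=0$ of $\Delta^{j+1}_\beta$, and the second at the face $s_0=\beta$. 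So the LHS is morally a boundary-of-$\Delta^{j+1}_\beta$ integral, convolved with $r\in(-\infty,t]$.

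Next, apply Cauchy's theorem in $r$ to turn this boundary integral into a bulk one. For fixed $\underline s\in\Delta^j_\beta$, set $F_1(r;\underline s):=a(ir)\prod_k a(ir+s_k)\langle\tau_{r-is_1}(B)\cdots\tau_{r-is_j}(B)\tau_r(B)\,C\rangle_{\beta,\mu,L}$. This is analytic in $r$ on the strip $\operatorname{Im}(r)\in[-\beta,0]$, since $\tau_z(B)$ is entire on the finite-dimensional Fock space, and $a(iz)$ is analytic for $\operatorname{Re}(z)\leq 0$, which holds throughout the strip because $\operatorname{Re}(r)\leq t\leq 0$. On the rectangular contour $-R\to t\to t-i\beta\to -R-i\beta$, the left vertical segment vanishes as $R\to\infty$ because $|a(ir)|\lesssim e^{(2\pi/\beta)\operatorname{Re}(r)}$ (consequence of $\tilde a(0)=0$ and $\omega\geq 2\pi/\beta$), yielding $\int_{-\infty}^t[F_1(r)-F_1(r-i\beta)]\,dr=i\int_0^\beta du\,F_1(t-iu)$. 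Using periodicity and the KMS identity $\langle\gamma_\beta(Y)C\rangle=\langle CY\rangle$, $F_1(r-i\beta)$ matches the rewritten second term up to a discrepancy captured by the derivation rule $X[\tau_r(B),C]=[X\tau_r(B),C]-[X,C]\tau_r(B)$, with $X(r;\underline s):=\tau_{r-is_1}(B)\cdots\tau_{r-is_j}(B)$. This yields $\mathrm{LHS}=\mathcal A-\mathcal B$, where $\mathcal A:=i\int d\underline s\int_0^\beta du\,F_1(t-iu;\underline s)$ and $\mathcal B:=\int dr\int d\underline s\,a(ir)\prod a(ir+s_k)\langle[X,C]\tau_r(B)\rangle_{\beta,\mu,L}$.

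Evaluate $\mathcal A$ via the change of variables $\sigma_k=u+s_k$ for $k\leq j$ and $\sigma_{j+1}=u$: this has unit Jacobian and bijects $(0,\beta)\times\Delta^j_\beta$ onto $\Delta^{j+1}_\beta$ together with a ``wrapped'' region where $\sigma_1>\beta$. The $\Delta^{j+1}_\beta$ part is exactly the RHS. In the wrapped region, setting $\sigma_1'=\sigma_1-\beta$, applying $a$-periodicity, and using KMS on $\tau_{t-i\sigma_1}(B)=\gamma_\beta(\tau_{t-i\sigma_1'}(B))$ to push this operator past $C$ (then relabeling indices in $\Delta^{j+1}_\beta$) produces an integral $\mathrm{Wrapped}$ whose integrand has $C$ inserted between the product of $\tau$'s and $\tau_{t-i\tilde\sigma_{j+1}}(B)$. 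Hence $\mathcal A=\mathrm{RHS}+\mathrm{Wrapped}$. The main obstacle is the parallel analysis of $\mathcal B$: by the derivation rule $[X,C]=\sum_{k=1}^j\tau_{r-is_1}\cdots[\tau_{r-is_k},C]\cdots\tau_{r-is_j}$, the $k$-th piece becomes, after the change of variable $r'=r-is_k$ (which unwraps the commutator into the form $[\tau_{r'}(B),C]$) and a further contour shift from $\operatorname{Im}(r')=-s_k$ back to the real axis, a sum of a bulk contribution identifying with $\mathrm{LHS}$ (via $a$-periodicity) and a boundary contribution equal, after further KMS manipulations, to $\mathrm{Wrapped}-\mathrm{RHS}$. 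Combining: $\mathcal B=\mathrm{LHS}-\mathrm{RHS}+\mathrm{Wrapped}$, so $\mathrm{LHS}=\mathcal A-\mathcal B=2\,\mathrm{RHS}-\mathrm{LHS}$, which gives $\mathrm{LHS}=\mathrm{RHS}$. For $j=1$ this self-consistency is a single additional contour shift; the principal technical difficulty in the general case is the consistent bookkeeping of KMS factors and iterated wrappings when several of the $\sigma_k$ simultaneously exceed $\beta$.
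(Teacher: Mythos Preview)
Your approach diverges from the paper's and has a genuine gap in the treatment of $\mathcal{B}$.

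The paper's proof does not use the derivation identity $X[\tau_r(B),C]=[X\tau_r(B),C]-[X,C]\tau_r(B)$ at all. Instead, after writing the commutator out and using KMS once, it obtains
\[
L_1^T-L_2^T=\int_{\Delta^j_\beta}d\underline{s}\int_{-T}^t dr\Big(\langle B_{j+1}(r-is_1,\dots,r-is_j,r)C\rangle-\langle B_{j+1}(r-i\beta,r-is_1,\dots,r-is_j)C\rangle\Big).
\]
The key step, which you are missing, is a change of variables in $\underline{s}$ on $L_1^T$ only (namely $s_k'=s_{k+1}-s_1+\beta$ for $k<j$, $s_j'=\beta-s_1$), after which \emph{both} terms become evaluations of a single function $f_{(\beta,s_1,\dots,s_j)}(z)$ at $z=r$ and $z=r-is_j$ respectively. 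A single rectangular contour of height $s_j$ then gives the result directly, with one further change of variables producing exactly $\Delta^{j+1}_\beta$. There is no wrapped region, no self-consistency trick, and no recursion.

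Your route instead generates the correction term $\mathcal{B}=\int dr\int d\underline{s}\,a(ir)\prod_k a(ir+s_k)\langle[X,C]\tau_r(B)\rangle$, and the claimed identity $\mathcal{B}=\mathrm{LHS}-\mathrm{RHS}+\mathrm{Wrapped}$ is not established. The Leibniz expansion of $[X,C]$ produces $j$ terms, each of the form $\tau_{r-is_1}(B)\cdots[\tau_{r-is_k}(B),C]\cdots\tau_{r-is_j}(B)$, with the commutator in position $k$ rather than at the right end as in the LHS. After the substitution $r'=r-is_k$ and a contour shift, the $k$-th piece does not obviously reproduce the LHS: the $a$-factors become $a(ir'+s_k)\prod_\ell a(ir'+s_k+s_\ell)$, and the operator structure is $\tau_{r'-i(s_1-s_k)}(B)\cdots[\tau_{r'}(B),C]\cdots\tau_{r'-i(s_j-s_k)}(B)\,\tau_{r'+is_k}(B)$, which has operators on both sides of the commutator and an $s$-domain that is no longer $\Delta^j_\beta$. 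You have not shown why summing these $j$ distinct expressions yields exactly one copy of $\mathrm{LHS}$ plus the claimed boundary terms; in particular, there is no argument ruling out that they produce $j$ copies, or something else entirely. Your final sentence concedes that the bookkeeping of iterated wrappings is unresolved, which is precisely where the proof lives.

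In short: the paper's change of variables on $\underline{s}$ is the missing idea. It collapses the two KMS-related pieces into a single analytic function evaluated at two imaginary heights, so a single Cauchy deformation suffices and no $\mathcal{B}$-type remainder ever arises.
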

\begin{proof} To start, let us prove the $j=0$ case, which reads:
\begin{equation}\label{eq:j0}
\int_{-\infty}^{t} d r\, a(ir) \langle \left[ \tau_{r}(B),C \right] \rangle_{\beta, \mu, L} = i \int_{0}^{\beta} ds\,\langle B_{1}\left( t - i s\right) C \rangle_{\beta, \mu, L}.
\end{equation}
Let $T > 0$. By the KMS identity, Eq.~(\ref{eq:KMS}), and using that $B$ commutes with the number operator:
\begin{equation}
\begin{split}
	\int_{-T}^{t} d r\, a(ir) &\langle \left[ \tau_{r}(B),C \right] \rangle_{\beta, \mu, L} = \int_{-T}^{t} d r\, a(ir) \Big[ \langle \tau_{r}(B)C \rangle_{\beta, \mu, L} - \langle C \tau_{r}(B)\rangle_{\beta, \mu, L} \Big]\\
&= \int_{-T}^{t} d r\, a(ir)\, \Big[\langle \tau_{r}(B)C \rangle_{\beta, \mu, L} - \langle \tau_{r - i\beta}(B)C \rangle_{\beta, \mu, L}\Big].
\end{split}\end{equation}
By assumption (\ref{eq:adef0}), we use the trivial but crucial fact $a(ir) = a(ir + \beta)$ to write:
\begin{equation}
\begin{split}
&\int_{-T}^{t} d r\, a(ir) \langle \left[ \tau_{r}(B),C \right] \rangle_{\beta, \mu, L} \\
&\qquad = \int_{-T}^{t} d r\, \Big[ a(ir) \langle \tau_{r}(B)C \rangle_{\beta, \mu, L} - a(i(r - i\beta)) \langle \tau_{r - i\beta}(B)C \rangle_{\beta, \mu, L}\Big].
\end{split}
\end{equation}
Now, consider the function, for $z\in \mathbb{C}$:
\begin{equation}
f(z) = a(iz)\langle \tau_{z}(B) C\rangle_{\beta, \mu, L}.
\end{equation}
For finite $L$ and finite $\beta$, this function is analytic on $\text{Re}\, z <0$, and it is continuous on $\text{Re}\, z \leq 0$. In fact, the function $z\mapsto \langle \tau_{z}(B) C\rangle_{\beta, \mu, L}$ is entire for finite $L, \beta$, while $a(iz)$ is analytic for $\text{Re}\, z < 0$ and continuous for $\text{Re}\, z\leq 0$, recall Definition (\ref{eq:adef0}) and Remark \ref{rem:az}.

For $\varepsilon > 0$ small enough, let $\Gamma$ be the complex path for $(\text{Re}\,z , \text{Im}\,z)$:
\begin{equation}
\Gamma = (-T,0) \to (t -\varepsilon, 0) \to (t - \varepsilon, -\beta) \to (-T, -\beta) \to (-T,0)\;,
\end{equation}
where every arrow corresponds to an oriented straight line in the complex plane. By Cauchy's integral theorem,
\begin{equation}\label{eq:caugamma}
\int_{\Gamma}dz\, f(z) = 0.
\end{equation} 
We start by writing:
\begin{equation}\label{eq:cau0}
\int_{-T}^{t} d r\, a(ir) \langle \left[ \tau_{r}(B),C \right] \rangle_{\beta, \mu, L} = \lim_{\varepsilon \to 0^{+}} \int_{-T}^{t - \varepsilon} d r\, a(i r) \langle \left[ \tau_{r}(B),C \right] \rangle_{\beta, \mu, L}\;,
\end{equation}
and using Eq.~(\ref{eq:caugamma}):
\begin{equation}\label{eq:cau}
\int_{-T}^{t - \varepsilon} d r\, a(i r) \langle \left[ \tau_{r}(B),C \right] \rangle_{\beta, \mu, L} = i \int_0^\beta d s\, f(t - \varepsilon - i s) - i \int_0^\beta d s\, f(-T - i s).
\end{equation}
We claim that the last term vanishes as $T\to \infty$. In fact:
\begin{equation}\label{eq:estbdwick}
\begin{split}
|f(-T - i s)| &\leq \Big|a(s - iT) \langle \tau_{-T - is}(B)C \rangle_{\beta, \mu, L}\Big| \\
&\leq \Big(\sum_{\omega} |\tilde a(\omega)|\Big) e^{-\frac{2\pi}{\beta}T} \|\tau_{-T - is}(B)\| \| C \| \\
&\leq C e^{-\frac{2\pi}{\beta}T} \|B\| \|C\| e^{2s \|\mathcal{H}\|},
\end{split}
\end{equation}
where we used the unitarity of the real-time dynamics. Notice that all norms in (\ref{eq:estbdwick}) are finite: we are on a finite lattice with side $L$, and the fermionic Fock space for models on a finite lattice is finite-dimensional. Hence, the bound (\ref{eq:estbdwick}) shows that the $T\to \infty$ limit of the second term in the right-hand side of (\ref{eq:cau}) vanishes, for $\beta$ and $L$ finite. We thus have
\begin{equation}
\begin{split}
&\lim_{T\to \infty}\int_{-T}^{t} d r\, a(ir) \langle \left[ \tau_{r}(B),C \right] \rangle_{\beta, \mu, L} \\
&\qquad = \lim_{\varepsilon \to 0^{+}} i \int_{0}^\beta d s\, a(i(t-\varepsilon) + s)\langle \tau_{t - \varepsilon -is}(B) C  \rangle_{\beta, \mu, L} \\
&\qquad = i \int_{0}^\beta d s\, a(it + s)\langle \tau_{t-is}(B) C  \rangle_{\beta, \mu, L}
,
\end{split}
\end{equation}
which proves Eq.~(\ref{eq:j0}). Let us now discuss the $j>0$ case, Eq.~(\ref{eq:j>0}). By the KMS identity and $a(ir) = a(ir + \beta)$, we get:
\begin{equation}
\begin{split}
\int_{-T}^{t} d r\, &\int_{\Delta^{j}_{\beta}} d \underline{s}\, a(ir) 
\langle B_j\left( r - i s_1,\dots, r - i s_j \right) \left[ \tau_{r}(B),C \right] \rangle_{\beta, \mu, L}\\ = & \int_{-T}^{t} d r \int_{\Delta^{j}_{\beta}} d \underline{s}\, a(ir)   \langle  B_j\left( r - i s_1,\dots, r - i s_j \right)\tau_{r} (B) C\rangle_{\beta, \mu, L} \\ & - \int_{-T}^{t} d r \int_{\Delta^{j}_{\beta}} d \underline{s}\, a(i (r - i\beta)) \langle \tau_{r - i\beta }(B) B_j\left( r - i s_1,\dots, r - i s_j \right) C  \rangle_{\beta, \mu, L}.
\end{split}\label{eq:L1-L2_def}
\end{equation}
We further rewrite this expression as, recalling the definition of $B_{j}(\cdot)$, Eq.~(\ref{eq:B_def}):
\begin{equation}\label{eq:L1L2}
\begin{split}
(\ref{eq:L1-L2_def}) &=  \int_{\Delta^{j}_{\beta}} d \underline{s}\int_{-T}^{t} d r\, \langle B_{j+1}\left( r - i s_1,\dots, r - i s_j, r  \right) C \rangle_{\beta, \mu, L} \\ & - \int_{\Delta^{j}_{\beta}} d \underline{s} \int_{-T}^{t} d r\,
\langle  B_{j+1}\left(r-i\beta, r - i s_1,\dots, r - i s_j \right) C  \rangle_{\beta, \mu, L} \\ &=:  L^{T}_1 - L^{T}_2\,.
\end{split}
\end{equation}
Let us now introduce the change of variables, for $1\leq k < j$:
\begin{equation} s_k' = s_{k+1}- s_1 + \beta,\qquad  \quad s_j' = \beta - s_1.
\end{equation}
Notice that $\beta > s_{1} > s_{2} > \cdots > s_{j} > 0$, we also have $\beta > s'_{1} > s'_{2} > \cdots > s'_{j} > 0$, that is $(s'_{1}, \ldots, s'_{n}) \in \Delta^{j}_{\beta}$. In terms of these variables, for $2\leq k \leq j$:
\begin{equation}
\begin{split}
s_{1} &= \beta - s'_{j} \\
s_{k} &= s'_{k-1} + s_{1} - \beta \equiv s'_{k-1} - s'_{j}.
\end{split}
\end{equation}
We then rewrite the term $L_{1}^{T}$ in (\ref{eq:L1L2}) as:
\begin{equation}\label{eq:L1T0}
\begin{split}
L^{T}_1 &= \int_{\Delta^{j}_{\beta}} d\underline{s}' \int_{-T}^{t} d r\, \\& \quad \langle B_{j+1}\left(r - i (\beta - s'_j), r - i (s'_1 - s'_j), \dots, r - i (s'_{j-1} - s'_j),r \right) C \rangle_{\beta, \mu, L}.
\end{split}
\end{equation}
Let us now introduce the function:
\begin{equation}\label{eq:fdef}
f_{(\beta,s_1,\dots,s_j)}(z):= \langle B_{j+1}(z - i(\beta - s_{j}), z - i (s_1 - s_{j}),\dots, z ) C\rangle_{\beta, \mu, L}.
\end{equation}
The function $f_{(\beta,s_1,\dots,s_j)}(z)$ is analytic for $\text{Re}\, z < 0$ and continuous on $\text{Re}\, z\leq 0$. We have:
\begin{equation}\label{eq:L1}
L^{T}_{2} = \int_{\Delta^{j}_{\beta}} d\underline{s} \int_{-T}^{t} dr\, f_{(\beta,s_1,\dots,s_j)}(r-is_{j});
\end{equation}
also, relabelling the $s'$ variables in $s$ variables in Eq.~(\ref{eq:L1T0}):
\begin{equation}\label{eq:L2}
L^{T}_{1} = \int_{\Delta^{j}_{\beta}} d\underline{s} \int_{-T}^{t} dr\, f_{(\beta,s_1,\dots,s_j)}(r).
\end{equation}
As for the $j=0$ case, we will use a complex deformation argument to rewrite $L_{1}^{T} - L_{2}^{T}$ in a convenient way. To this end, let us now define the complex path for $(\text{Re}\, z, \text{Im}\, z)$, for $\varepsilon > 0$ small enough:
\begin{equation}
\Gamma = (-T,0) \to (t  -\varepsilon, 0) \to (t - \varepsilon, -s_{j}) \to (-T, -s_{j}) \to (-T,0).
\end{equation}
By continuity of $f_{(\beta,s_1,\dots,s_j)}(z)$:
\begin{equation}\label{eq:L22L1}
\begin{split}
&L^{T}_{1} - L^{T}_{2} = \int_{\Delta^{j}_{\beta}} d\underline{s} \lim_{\varepsilon \to 0^{+}}\Big[ \int_{-T}^{t - \varepsilon} dr\, f(r) -  \int_{-T}^{t-\varepsilon} dr\, f(r-is_{j})\Big] \\
&\; = i\int_{\Delta^{j}_{\beta}} d\underline{s} \lim_{\varepsilon \to 0^{+}}\Big[ \int_{0}^{s_{j}} ds_{j+1}\, f(t - \varepsilon - is_{j+1}) - \int_{0}^{s_{j}} ds_{j+1}\, f(-T - is_{j+1})\Big] \\
&\; =  i\int_{\Delta^{j}_{\beta}} d\underline{s}\, \Big[ \int_{0}^{s_{j}} ds_{j+1}\, f(t - is_{j+1}) - \int_{0}^{s_{j}} ds_{j+1}\, f(-T - is_{j+1})\Big]\;,
\end{split}
\end{equation}
where the second identity follows from Cauchy theorem and the last from the continuity of the integrand. The last term in the right-hand side of (\ref{eq:L22L1}) vanishes as $T\to \infty$. This is implied by the following estimate, recall Eq.~(\ref{eq:fdef}):
\begin{equation}\label{eq:Tj}
| f_{(\beta,s_1,\dots,s_j)}(-T - i s_{j+1}) | \leq \| \tilde{a} \|_{1}^{j+1} \| C \| \| B \|^{j+1} e^{2\beta (j+1) \|\mathcal{H}\|} e^{- \frac{2\pi}{\beta} T(j+1)}.
\end{equation}
Consider now the first term in the right-hand side of (\ref{eq:L22L1}). The integrand has the form, for a function $g$ entire in all its arguments, recall (\ref{eq:fdef}):
\begin{equation}
\begin{split}
&f_{(\beta,s_1,\dots,s_j)}(t - i s_{j+1}) \\
&\; =g(t - i(\beta - s_{j} + s_{j+1}),\ldots, t - i (s_{k-1} - s_{j} + s_{j+1}),\ldots, t - i s_{j+1})\;.
\end{split}
\end{equation}
Let us introduce the change of variables, for $2 \leq k \leq j$:
\begin{equation}
s'_1=\beta-s_j+s_{j+1},\quad s'_k=s_{k-1}-s_j+s_{j+1}, \quad s'_{j+1}=s_{j+1}.
\end{equation}
We notice that $\beta > s'_{1} > \ldots > s'_{k} > \ldots > s'_{j+1} > 0$. Thus, the second term in the r.h.s. of (\ref{eq:L22L1}) can be written as the integral over the symplex $\Delta^{\beta}_{j+1}$: 
\begin{equation}\label{eq:48}
\begin{split}
&i \int_{\Delta^{j}_{\beta}} d\underline{s} \int_{0}^{s_j} d s_{j+1}\, f_{(\beta,s_1,\dots,s_j)}(t - i s_{j+1}) \\&\quad = i \int_{\Delta^{j+1}_{\beta}} d\underline{s}'\, g(t - is'_{1},\ldots, t - i s'_{k},\ldots, t - i s'_{j+1}) \\
&\quad \equiv i\int_{\Delta^{j+1}_{\beta}} d\underline{s}'\, \langle B_{j+1}(t - is'_{1},\dots, t - i s'_{k},\ldots, t - i s'_{j+1} ) C\rangle_{\beta, \mu, L}.
\end{split}
\end{equation}
All in all, from (\ref{eq:L1L2}), (\ref{eq:L22L1}), (\ref{eq:Tj}), (\ref{eq:48}), relabelling the $s'$ variables as $s$ variables:
\begin{equation}
\begin{split}
\int_{-\infty}^{t} d r\, \int_{\Delta^{j}_{\beta}} &d \underline{s}\, a(ir)
\langle B_j\left( r - i s_1,\dots, r - i s_j \right)\left[ \tau_{r}(B),C \right] \rangle_{\beta, \mu, L} \\
& = L_{1}^{\infty} - L_{2}^{\infty} \\
& = i\int_{\Delta^{j+1}_{\beta}} d\underline{s}\, \langle B_{j+1}(t - is_{1},\dots, t - i s_{k},\ldots, t - i s_{j+1} ) C\rangle_{\beta, \mu, L}
\end{split}
\end{equation}
which concludes the proof of the proposition.
\end{proof}
Next, we use Proposition \ref{prop:wick1} to rewrite the coefficients appearing in the Duhamel expansion in terms of imaginary-time correlations.
\begin{proposition}[Multiple complex deformation]\label{prop:wick2} Under the same assumptions of Lemma \ref{lem:wick} the following identity holds:
\begin{equation}\label{eq:multicomplex}
\begin{split}
&\int_{-\infty \leq s_{n} \leq \ldots \leq s_{1} \leq t}  d\underline{s}\, \Big[ \prod_{i=1}^{n} a(is_{i}) \Big] \langle [ \cdots [[\tau_{t}(A), \tau_{s_{1}}(B)], \tau_{s_{2}}(B)] \cdots \tau_{s_{n}}(B) ] \rangle_{\beta,\mu,L}\\
&= (-i)^n \int_{0}^{\beta} d s_1 \dots \int_0^{s_{n-1}} d s_n\, \Big[\prod_{j=1}^{n} a(it+s_{j})\Big] \langle \gamma_{s_{1}}(B) \cdots \gamma_{s_{n}}(B) A\rangle_{\beta,\mu,L}.
\end{split}
\end{equation}
\end{proposition}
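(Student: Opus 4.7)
The plan is to derive the identity (\ref{eq:multicomplex}) by iterating Proposition \ref{prop:wick1}, converting one real-time integration into an imaginary-time simplex variable at each step. Write $D_0:=\tau_t(A)$ and $D_k:=[\cdots[\tau_t(A),\tau_{s_1}(B)],\cdots,\tau_{s_k}(B)]$ for $k\geq 1$; then the elementary recursion $D_k=-[\tau_{s_k}(B),D_{k-1}]$ holds, and $D_n$ is precisely the nested commutator appearing on the LHS of (\ref{eq:multicomplex}).

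I would prove by finite induction on $k\in\{0,1,\dots,n\}$ the intermediate representation
\begin{equation*}
\mathrm{LHS}=(-i)^k\!\!\!\!\!\!\!\int\limits_{-\infty<s_{n-k}<\cdots<s_1<t}\!\!\!\!\prod_{i=1}^{n-k}a(is_i)\,ds_i\int_{\Delta^k_\beta}\!\! d\underline{u}\,\prod_{j=1}^k a(is_{n-k}+u_j)\,\langle\tau_{s_{n-k}-iu_1}(B)\cdots\tau_{s_{n-k}-iu_k}(B)\,D_{n-k}\rangle_{\beta,\mu,L},
\end{equation*}
with the convention $s_0:=t$, so the outer $s$-integral disappears when $k=n$. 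The case $k=0$ is the LHS itself. For the inductive step $k\to k+1$, I substitute $D_{n-k}=-[\tau_{s_{n-k}}(B),D_{n-k-1}]$ (picking up a factor $-1$) and observe that since $D_{n-k-1}$ is independent of $s_{n-k}$, the resulting integrand now matches exactly the LHS form of Proposition \ref{prop:wick1} with $j=k$, outer variable $r=s_{n-k}$, inner simplex variables $s_i=u_i$, and $C=D_{n-k-1}$. Since Proposition \ref{prop:wick1} holds uniformly in its parameter $t$, I apply it pointwise with that parameter replaced by $s_{n-k-1}$, and Fubini over the remaining outer variables is justified since the integrand is bounded on the finite-dimensional Fock space. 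This contributes a factor $+i$ and promotes the $s_{n-k}$-integral into the new simplex dimension $u_{k+1}\in(0,u_k)$, so that the two signs combine to $(-i)^{k+1}$, completing the inductive step.

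Specializing the claim to $k=n$ yields
$\mathrm{LHS}=(-i)^n\int_{\Delta^n_\beta}d\underline{u}\prod_j a(it+u_j)\langle\tau_{t-iu_1}(B)\cdots\tau_{t-iu_n}(B)\,\tau_t(A)\rangle_{\beta,\mu,L}$.
Because $B\in\mathcal{A}^{\mathcal{N}}_{\Lambda_L}$, one has $\tau_{t-iu}(B)=e^{i\mathcal{H}t}\gamma_u(B)e^{-i\mathcal{H}t}$, so the factors $e^{\pm i\mathcal{H}t}$ telescope between consecutive operators, and a final use of cyclicity of the trace together with $[\rho_{\beta,\mu,L},\mathcal{H}]=0$ collapses the expectation into $\langle\gamma_{u_1}(B)\cdots\gamma_{u_n}(B)A\rangle_{\beta,\mu,L}$. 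Relabeling $u_j\mapsto s_j$ then produces the claimed RHS of (\ref{eq:multicomplex}).

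The only delicate point is to ensure that at every step the integrand really matches the LHS form of Proposition \ref{prop:wick1}. This is precisely what the flip $D_{n-k}\mapsto -[\tau_{s_{n-k}}(B),D_{n-k-1}]$ achieves: all $s_{n-k}$-dependent $\tau$-operators (both those with imaginary-time arguments $s_{n-k}-iu_j$ and the bare $\tau_{s_{n-k}}(B)$) sit on the left of the commutator in positions compatible with the $B_k$-plus-commutator template, while $D_{n-k-1}$ is $s_{n-k}$-independent and plays the role of $C$. No identities beyond this recursion and Proposition \ref{prop:wick1} itself are needed.
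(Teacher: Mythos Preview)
Your proposal is correct and follows essentially the same approach as the paper: both proofs iterate Proposition~\ref{prop:wick1} from the innermost real-time variable outward, using the sign flip $D_{n-k}=-[\tau_{s_{n-k}}(B),D_{n-k-1}]$ (the paper's $C_{n-1}=-a(ir_{n-1})[\tau_{r_{n-1}}(B),C_{n-2}]$) to match the template of that proposition at each step, and then conclude by converting $\tau_{t-iu}(B)$ to $\tau_t(\gamma_u(B))$ via $B\in\mathcal{A}^{\mathcal{N}}_{\Lambda_L}$ and the invariance of the Gibbs state under $\tau_t$. The only cosmetic difference is that you package the iteration as an explicit induction hypothesis, whereas the paper writes out the first two steps and then says the procedure continues until all commutators are exhausted.
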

\begin{proof} To avoid confusion, in the proof we shall call $\{r_{j}\}$ the variables corresponding to real-time integrations and $\{s_{j}\}$ the variables corresponding to imaginary-time integrations. To simplify the notations, we will omit the $\beta,\mu,L$ subscript in the Gibbs state. Using the notation (\ref{eq:Cn}), we rewrite:
\begin{equation}
\begin{split}
&\int_{-\infty \leq r_{n} \leq \ldots \leq r_{1} \leq t}  d\underline{r}\, \Big[ \prod_{i=1}^{n} a(ir_{i}) \Big] \langle [ \cdots [[\tau_{t}(A), \tau_{r_{1}}(B)], \tau_{r_{2}}(B)] \cdots \tau_{r_{n}}(B) ] \rangle \\
&\quad \equiv \int_{-\infty \leq r_{n} \leq \ldots \leq r_{1} \leq t}  d\underline{r}\, a(ir_{n}) \langle [ C_{n-1}(r_1,\dots, r_{n-1}), \tau_{r_{n}} (B) ]\rangle.
\end{split}
\end{equation}
We have:
\begin{equation}
\begin{split}
&\int_{-\infty \leq r_{n} \leq \ldots \leq r_{1} \leq t}  d\underline{r}\, a(ir_{n}) \langle [ C_{n-1}(r_1,\dots,r_{n-1}), \tau_{r_n} (B) ]\rangle \\
&\; =  -\int_{-\infty}^{t} d r_1 \dots \int_{-\infty}^{r_{n-2}} d r_{n-1} \int_{-\infty}^{r_{n-1}} dr_{n} a(ir_{n}) \langle [ \tau_{r_n} (B), C_{n-1}(r_1,\dots,r_{n-1}) ]\rangle\\
& \; = -i \int_{-\infty}^{t} d r_1 \dots \int_{-\infty}^{r_{n-2}} d r_{n-1} \int_{0}^{\beta} d s_1\, \langle  B_1(r_{n-1} - i s_1) C_{n-1}(r_1,\dots, r_{n-1})\rangle
\end{split}
\end{equation}
where the last equality follows from Proposition \ref{prop:wick1} for $j=0$, applied to the $r_{n}$ integration. Next, using again (\ref{eq:Cn}), we write:
\begin{equation}
C_{n-1}(r_1,\dots, r_{n-1}) = -a(ir_{n-1}) [ \tau_{r_{n-1}}(B) , C_{n-2}(r_1,\dots, r_{n-2}) ]
\end{equation}
and hence:
\begin{equation}
\begin{split}
&\int_{-\infty \leq r_{n} \leq \ldots \leq r_{1} \leq t}  d\underline{r}\, a(ir_{n}) \langle [ C_{n-1}(r_1,\dots,r_{n-1}), \tau_{r_n} (B) ]\rangle \\
&\quad = i \int_{-\infty}^{t} d r_1 \dots \int_{-\infty}^{r_{n-2}} d r_{n-1}\, a(ir_{n-1}) \\&\quad \qquad \cdot \int_{0}^{\beta} d s_1\, \langle  B_1(r_{n-1} - i s_1) [ \tau_{r_{n-1}}(B) , C_{n-2}(r_1,\dots, r_{n-2}) ]\rangle\\
& \quad =  (i)^{2} \int_{-\infty}^{t} d r_1 \dots \int_{-\infty}^{r_{n-3}} d r_{n-2} \\
&\qquad \cdot \int_{\Delta_{\beta}^{2}} d\underline{s}\, \langle B_{2}(r_{n-2} - is_{1}, r_{n-2} - is_{2}) C_{n-2}(r_1,\dots, r_{n-2}) \rangle\;, 
\end{split}
\end{equation}
where in the last step we applied Proposition \ref{prop:wick1} for $j=1$ to the $r_{n-1}$ integration. We continue applying Proposition \ref{prop:wick1} until all commutators are exhausted. We find:
\begin{equation}\label{eq:BB0}
\begin{split}
&\int_{-\infty \leq r_{n} \leq \ldots \leq r_{1} \leq t}  d\underline{r}\, a(ir_{n}) \langle [ C_{n-1}(r_1,\dots,r_{n-1}), \tau_{r_n} (B) ]\rangle \\ 
&\qquad  = (-i)^{n} \int_{\Delta^{n}_{\beta}} d\underline{s}\, \langle B_{n} (t-is_1,\dots, t-is_n) \tau_{t}(A) \rangle\;.
\end{split}
\end{equation}
To conclude, recall that by Eq.~(\ref{eq:B_def}):
\begin{equation}\label{eq:BB}
\begin{split}
B_{n} (t-is_1,\dots, t-is_n) &= \Big[ \prod_{i=1}^{n} a(it + s_{j}) \Big] \tau_{t - is_{1}}(B) \tau_{t - is_{2}}(B) \cdots \tau_{t - is_{n}}(B) \\ 
& \equiv \Big[ \prod_{i=1}^{n} a(it + s_{j}) \Big] \tau_{t} \Big( \gamma_{s_{1}}(B) \gamma_{s_{2}} (B) \cdots \gamma_{s_{n}}(B) \Big)
\end{split}
\end{equation}
where in the last step we used that $B$ commutes with the number operator, which implies $\tau_{-is}(B) = \gamma_{s}(B)$. Plugging (\ref{eq:BB}) into the right-hand side of (\ref{eq:BB0}), and using the invariance of the Gibbs state under time-evolution, the final claim (\ref{eq:multicomplex}) follows.
\end{proof}
Next, we rewrite the imaginary-time expressions appearing after the Wick rotation as connected correlation functions. We recall the following relation between the expectation value of a product of operators, and the truncated expectations:
\begin{equation}\label{eq:trunc}
\langle O_{i_{1}} \cdots O_{i_{n}} \rangle = \sum_{P} \prod_{J\in P} \langle O(J) \rangle^{T}
\end{equation}
where $P$ are partitions of the ordered set $\{i_1,\dots, i_{n}\}$, with elements $J = \{ j_{1}, \ldots, j_{|J|} \}$ which inherit the order of $\{i_1,\dots, i_{n}\}$, and
\begin{equation}
\langle O(J) \rangle^{T} := \langle O_{j_{1}} ; O_{j_{2}} ; \cdots ; O_{j_{|J|}} \rangle.
\end{equation}
The next result is a straightforward consequence of the definition of truncated expectation. We shall use the notation, for $J = \{ j_{1}, \ldots, j_{m} \}$:
\begin{equation}
B(-i\underline{s}_{J}) := \gamma_{s_{j_1}}(B) \cdots \gamma_{s_{j_{m}}}(B).
\end{equation}
\begin{proposition}[Factorization property]\label{prop:fact} The following identity holds true:
\begin{equation}
\begin{split}
&\langle \gamma_{s_{1}}(B) \gamma_{s_{2}} (B) \cdots \gamma_{s_{n}}(B) A\rangle \\
&\qquad = \sum_{J\subseteq \{1,\ldots, n\}} \big\langle \gamma_{s_{j_1}}(B); \gamma_{s_{j_2}} (B); \cdots; \gamma_{s_{j_{|J|}}}(B); A \big\rangle \big\langle B(-i\underline{s}_{\{1,\ldots, n\} \setminus J})  \big\rangle
\end{split}
\end{equation}
where the sum is over ordered subsets of $\{1, \ldots, n\}$.
\end{proposition}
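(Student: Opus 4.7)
The plan is to derive the identity directly from the general moment--cumulant relation (\ref{eq:trunc}), by stratifying the sum over partitions according to which block contains the distinguished operator $A$.

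More concretely, I would apply (\ref{eq:trunc}) to the ordered $(n+1)$-tuple of operators
\[
O_{1} := \gamma_{s_{1}}(B),\ \ldots,\ O_{n} := \gamma_{s_{n}}(B),\ O_{n+1} := A,
\]
indexed by $\{1,\ldots,n,n+1\}$, obtaining
\[
\langle \gamma_{s_{1}}(B) \cdots \gamma_{s_{n}}(B) A \rangle_{\beta,\mu,L} = \sum_{P} \prod_{J'\in P} \langle O(J') \rangle^{T},
\]
where $P$ runs over partitions of $\{1,\ldots,n+1\}$ into ordered subsets.

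The next step is a bijective reorganization of the sum. For each such $P$, let $J^{\ast}\in P$ be the unique block containing $n+1$, and set $J := J^{\ast}\setminus\{n+1\}$. Then $J$ ranges over all ordered subsets of $\{1,\ldots,n\}$, the block $J^{\ast}$ contributes exactly the cumulant $\langle \gamma_{s_{j_{1}}}(B);\cdots;\gamma_{s_{j_{|J|}}}(B); A \rangle_{\beta,\mu,L}$, and the remaining blocks of $P$ form a partition $P'$ of $\{1,\ldots,n\}\setminus J$ (with the inherited ordering). Substituting and factorizing the product,
\[
\langle \gamma_{s_{1}}(B) \cdots \gamma_{s_{n}}(B) A \rangle_{\beta,\mu,L} = \sum_{J\subseteq\{1,\ldots,n\}} \langle \gamma_{s_{j_{1}}}(B);\cdots;\gamma_{s_{j_{|J|}}}(B); A \rangle_{\beta,\mu,L} \sum_{P'} \prod_{J''\in P'} \langle O(J'') \rangle^{T}.
\]
To finish, I would apply (\ref{eq:trunc}) in the opposite direction to the inner sum: summing over partitions $P'$ of $\{1,\ldots,n\}\setminus J$ of products of truncated expectations reconstructs the ordinary expectation of the product of the corresponding operators $\gamma_{s_{i}}(B)$, which by the definition of $B(-i\underline{s}_{\cdot})$ equals $\langle B(-i\underline{s}_{\{1,\ldots,n\}\setminus J})\rangle_{\beta,\mu,L}$. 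Combining the two steps yields the claimed factorization.

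I do not anticipate a genuine obstacle here: the content is purely combinatorial, and the identity (\ref{eq:trunc}) is already provided. The only point deserving care is bookkeeping of the orderings inherited from $\{1,\ldots,n+1\}$, so that the cumulant containing $A$ has the $\gamma_{s_{j_{i}}}(B)$ entries in the correct left-to-right order and $A$ is placed last, consistently with the statement. Since $B\in\mathcal{A}^{\mathcal{N}}_{\Lambda_{L}}$ is even in fermionic creation/annihilation operators, no sign subtleties appear when breaking the product into blocks, so this reorganization is transparent.
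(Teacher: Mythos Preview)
Your proposal is correct and follows essentially the same route as the paper: apply (\ref{eq:trunc}) to $O_{1},\ldots,O_{n+1}$, stratify the sum over partitions by the block containing $n+1$, and then use (\ref{eq:trunc}) in reverse on the remaining blocks to reconstruct $\langle B(-i\underline{s}_{\{1,\ldots,n\}\setminus J})\rangle$. The paper's proof is exactly this, with $J_{n+1}$ in place of your $J^{\ast}$.
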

\begin{proof} Let:
\begin{equation}
O_{1} = \gamma_{s_{1}}(B),\quad O_{2} = \gamma_{s_{2}}(B),\quad \ldots\quad, O_{n}=\gamma_{s_{n}}(B)\quad,\quad  O_{n+1} = A.
\end{equation}
From (\ref{eq:trunc}):
\begin{equation}\label{eq:C0}
\begin{split}
\langle O_{1} O_{2} \cdots O_{n+1} \rangle &= \sum_{P} \prod_{J\in P} \langle O(J) \rangle^{T}\\
&= \sum_{P} \langle O(J_{n+1}) \rangle^{T} \prod_{J\in P:\, n+1\notin J} \langle O(J) \rangle^{T} 
\end{split}
\end{equation}
where the sum is over partitions $P$ of $\{1, \ldots, n+1\}$, and $J$ are the elements of the partition. In particular, $J_{n+1}$ is the element of the partition that contains $n+1$. The right-hand side of (\ref{eq:C0}) can be rewritten as:
\begin{equation}
\langle O_{1} O_{2} \cdots O_{n+1} \rangle = \sum_{J_{n+1}} \langle O(J_{n+1}) \rangle^{T} \sum_{\widetilde{P}\, \text{of}\, \{1,\ldots, n+1\} \setminus J_{n+1}} \prod_{J\in \widetilde{P}} \langle O(J) \rangle^{T} 
\end{equation}
which we rewrite as, using again (\ref{eq:trunc}):
\begin{equation}
\begin{split}
&\langle O_{1} O_{2} \cdots O_{n+1} \rangle \\&= \sum_{J \subseteq \{1,\ldots, n\}} \langle \gamma_{s_{j_1}}(B); \gamma_{s_{j_2}} (B); \cdots; \gamma_{s_{j_{|J|}}}(B); A \rangle^{T} \Big\langle \prod_{j\in \{1,\ldots, n\} \setminus J} O_{j} \Big\rangle.
\end{split}
\end{equation}
This concludes the proof of the proposition.
\end{proof}
The next proposition allows to rewrite the right-hand side of (\ref{eq:multicomplex}) in terms of truncated correlation functions, in Euclidean time.
\begin{proposition}[Reduction to connected Euclidean correlations]\label{prop:conn} Under the same assumptions of Lemma \ref{lem:wick} the following identity holds:
\begin{equation}\label{eq:pv}
\begin{split}
\int_{\Delta^{n}_\beta} d\underline{s}\, &\Big[\prod_{j=1}^{n} a(it+s_{j}) \Big] \langle \gamma_{s_{1}}(B) \gamma_{s_{2}}(B) \cdots \gamma_{s_{n}}(B) A\rangle_{\beta, \mu, L} \\& = \int_{\Delta^{n}_{\beta}} d\underline{s}\, \Big[\prod_{j=1}^{n} a(it+s_{j})\Big] \langle \gamma_{s_{1}}(B); \gamma_{s_{2}}(B); \cdots; \gamma_{s_{n}}(B); A \rangle_{\beta, \mu, L}.
\end{split}
\end{equation}
\end{proposition}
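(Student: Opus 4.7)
My strategy is to pass from the simplex to the full Euclidean cube $[0,\beta)^n$ via symmetrization, expand the full correlation into time-ordered cumulants grouped by the block containing $A$, and then isolate the surviving term using translation invariance of the Gibbs state together with the property $\tilde a(0)=0$ from \eqref{eq:adef0}.

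First, on the simplex $\Delta^n_\beta$ the Euclidean times are in decreasing order, so $\langle \gamma_{s_1}(B)\cdots \gamma_{s_n}(B) A\rangle_{\beta,\mu,L}$ coincides with the time-ordered correlation $\langle \mathbf{T}\gamma_{s_1}(B)\cdots\gamma_{s_n}(B) A\rangle_{\beta,\mu,L}$ (placing $A$ at time $0$), and analogously the non-time-ordered truncated correlations on the right-hand side of \eqref{eq:pv} agree with their time-ordered counterparts. Since $\prod_j a(it+s_j)$ is symmetric in $(s_1,\dots,s_n)$ and both the full and connected time-ordered correlations are symmetric in their $B$-arguments, I would extend both integrations from $\Delta^n_\beta$ to $[0,\beta)^n$ at the cost of a factor $1/n!$.

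Next I would apply the time-ordered cumulant expansion to $\langle \mathbf{T} B(s_1) \cdots B(s_n) A \rangle$, grouping terms by the block $J^*$ containing the index of $A$: letting $\tilde J = J^* \setminus \{n+1\}$ and $K = \{1,\dots,n\}\setminus\tilde J$, the remaining partitions of $K$ resum into the full time-ordered correlation of the $B$'s on $K$. Because $\prod_j a(it+s_j)$ factorizes over $\tilde J$ and $K$, the resulting integral over $[0,\beta)^n$ splits into a product of a $\tilde J$-integral and a $K$-integral; the $K=\emptyset$ contribution, after undoing the symmetrization, reproduces the right-hand side of \eqref{eq:pv}. The key remaining point is to show that the $K$-integral vanishes for every $p:=|K|\geq 1$, i.e.
\begin{equation*}
\int_{[0,\beta)^p} d\underline{s}\, \prod_{k=1}^p a(it+s_k)\,\Big\langle \mathbf{T}\prod_{k=1}^p B(s_k)\Big\rangle_{\beta,\mu,L} = 0.
\end{equation*}
For this I would exploit two ingredients: (i) invariance of the Gibbs state under $\gamma_u$ together with the stability of time-ordering under simultaneous shifts, which makes $F(\underline{s}):=\langle\mathbf{T}\prod_k B(s_k)\rangle$, extended $\beta$-periodically, invariant under the diagonal translation $s_k \mapsto s_k+u$; and (ii) Remark \ref{rem:az}(i) together with \eqref{eq:adef0}, which give $a(it+s+u) = \sum_{\omega\in \frac{2\pi}{\beta}\mathbb{N}}\tilde a(\omega) e^{\omega t} e^{-i\omega s} e^{-i\omega u}$ with $\tilde a(0) = 0$, so every mode appearing in $a$ satisfies $\omega \geq 2\pi/\beta$. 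Averaging the integrand over $u\in[0,\beta)$ leaves the integral unchanged by (i); the integration in $u$ then forces $\sum_k \omega_k = 0$, which by (ii) is impossible since each $\omega_k>0$. Hence the integral vanishes.

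The hard part is precisely this last vanishing step, which is the heart of the Wick-rotation mechanism: it crucially combines the translation invariance inherited from the Gibbs state with the support of $\tilde a$ on strictly positive Matsubara frequencies (reflecting the analyticity of $a$ in the lower half-plane). A minor technical point to verify along the way is the $\beta$-periodic extension of $F$ and the validity of diagonal translation invariance on that extension, both of which follow from applying the KMS identity \eqref{eq:KMS} to the even operators $B$. The remainder of the argument is routine bookkeeping with the cumulant identities already established.
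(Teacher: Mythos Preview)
Your proposal is correct and follows essentially the same route as the paper: decompose the correlation by the cumulant block containing $A$, and then show that the pure-$B$ integrals vanish by combining translation invariance of the time-ordered Gibbs correlation with the condition $\tilde a(0)=0$. The only difference is organizational: the paper stays on the simplex, applies Proposition~\ref{prop:fact} and then invokes the simplex-splitting identity of Proposition~\ref{lem:splitting} to factorize, whereas you symmetrize to the cube $[0,\beta)^n$ at the outset, making the factorization trivial and bypassing Proposition~\ref{lem:splitting} entirely. Your ordering is slightly more economical; the paper's ordering keeps the non-time-ordered objects throughout and only symmetrizes at the very end for the vanishing step, but both arguments rest on the same two ingredients you identified.
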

\begin{proof} We omit the $\beta, \mu, L$ labels for simplicity. By Proposition \ref{prop:fact} we have:
\begin{equation}
\begin{split}
&\langle \gamma_{s_{1}}(B) \gamma_{s_{2}} (B) \cdots \gamma_{s_{n}}(B) A\rangle \\
&\quad = \sum_{J\subseteq \{1,\ldots, n\}} \big\langle \gamma_{s_{j_1}}(B); \gamma_{s_{j_2}} (B); \cdots; \gamma_{s_{j_{|J|}}}(B); A \big\rangle \big\langle B(-i\underline{s}_{\{1,\ldots, n\} \setminus J})  \big\rangle.
\end{split}
\end{equation}
Hence, we can rewrite the left-hand side of (\ref{eq:pv}) as:
\begin{equation}\label{eq:pvv}
\begin{split}
&\int_{\Delta^{n}_\beta} d\underline{s}\, \Big[\prod_{j=1}^{n} a(it+s_{j}) \Big] \langle \gamma_{s_{1}}(B) \gamma_{s_{2}} (B) \cdots \gamma_{s_{n}}(B) A\rangle \\
& =\sum_{m=0}^{n} \int_{\Delta^{n}_\beta} d\underline{s}\, \sum_{\substack{J \subseteq \{1,\ldots, n\} \\ |J| = m}} \Big[\prod_{j\in J} a(it+s_{j}) \Big] \big\langle \gamma_{s_{j_1}}(B); \cdots; \gamma_{s_{j_{m}}}(B); A \big\rangle \\& \qquad\qquad\qquad\qquad\qquad\qquad \cdot \Big[\prod_{j\in \{1,\ldots, n\} \setminus J} a(it+s_{j}) \Big] \big\langle B(-i\underline{s}_{\{1,\ldots, n\} \setminus J})  \big\rangle.
\end{split}
\end{equation}
Next, we shall use the following identity:
\begin{equation}\label{eq:pvvv}
\begin{split}
&\int_{\Delta^{n}_\beta} d\underline{s}\, \sum_{\substack{J \subseteq \{1,\ldots, n\} \\ |J| = m}} \Big[\prod_{j\in J} a(it+s_{j}) \Big] \big\langle \gamma_{s_{j_1}}(B); \cdots; \gamma_{s_{j_{m}}}(B); A \big\rangle \\& \qquad\qquad\qquad\qquad\qquad\qquad \cdot \Big[\prod_{j\in \{1,\ldots, n\} \setminus J} a(it+s_{j}) \Big] \big\langle B(-i\underline{s}_{\{1,\ldots, n\} \setminus J})  \big\rangle \\
& \qquad = \int_{\Delta^{m}_\beta} d\underline{s}\, \Big[ \prod_{i=1}^{m} a(it+s_{i})\Big]  \big\langle \gamma_{s_{1}}(B); \cdots; \gamma_{s_{m}}(B); A \big\rangle \\&\qquad \qquad \cdot \int_{\Delta^{n-m}_\beta} d\underline{s}\, \Big[ \prod_{i=m+1}^{n} a(it+s_{i})\Big] \langle \gamma_{s_{m+1}}(B) \gamma_{s_{m+2}}(B)\cdots \gamma_{s_{n}}(B)  \rangle.
 \end{split}
\end{equation}
Eq.~(\ref{eq:pvvv}) is obtained via the application of Proposition \ref{lem:splitting} in Appendix \ref{app:int}, with the following choices for the functions $f$ and $g$:
\begin{equation}
\begin{split}
f(s_{1}, \ldots, s_{m}) &= \Big[ \prod_{i=1}^{m} a(it+s_{i})\Big]  \big\langle \gamma_{s_{1}}(B); \cdots; \gamma_{s_{m}}(B); A \big\rangle \\
g(s_{m+1}, \ldots, s_{n}) &= \Big[ \prod_{i=m+1}^{n} a(it+s_{i})\Big] \langle \gamma_{s_{m+1}}(B) \gamma_{s_{m+2}}(B)\cdots \gamma_{s_{n}}(B)  \rangle.
\end{split}
\end{equation}
We claim that, for all $k>0$:
\begin{equation}\label{eq:claim0}
\int_{\Delta^{k}_\beta} d\underline{s}\, \Big[ \prod_{i=1}^{k} a(it+s_{i})\Big] \langle \gamma_{s_{1}}(B) \gamma_{s_{2}}(B)\cdots \gamma_{s_{k}}(B)  \rangle = 0.
\end{equation}
Combined with (\ref{eq:pvv}), (\ref{eq:pvvv}), this implies the final statement, Eq.~(\ref{eq:pv}): the only term contributing to the sum over $m$ in Eq.~(\ref{eq:pvv}) is $m=n$. To prove Eq.~(\ref{eq:claim0}), we proceed as follows. First, we write:
\begin{equation}\label{eq:permut}
\begin{split}
\int_{\Delta^{k}_\beta} d\underline{s}\, &\Big[ \prod_{i=1}^{k} a(it+s_{i})\Big] \langle \gamma_{s_{1}}(B) \gamma_{s_{2}}(B)\cdots \gamma_{s_{k}}(B)  \rangle \\
&= \frac{1}{k!} \int_{[0,\beta]^{k}} d\underline{s}\,  \Big[ \prod_{i=1}^{k} a(it+s_{i})\Big] \sum_{\pi} \mathbbm{1}(s_{\pi(1)} > s_{\pi(2)} > \ldots > s_{\pi(k)}) \\&\qquad \qquad \cdot  \langle \gamma_{s_{\pi(1)}}(B) \gamma_{s_{\pi(2)}}(B)\cdots \gamma_{s_{\pi(k)}}(B)  \rangle
\end{split}
\end{equation}
where the sum is over permutations of $\{1, \ldots, k\}$. Let:
\begin{equation}
\begin{split}
&G(s_{1}, \ldots, s_{k}) \\
&\quad := \sum_{\pi} \mathbbm{1}(s_{\pi(1)} > s_{\pi(2)} > \ldots > s_{\pi(k)}) \langle \gamma_{s_{\pi(1)}}(B) \gamma_{s_{\pi(2)}}(B)\cdots \gamma_{s_{\pi(k)}}(B)  \rangle.
\end{split}
\end{equation}
We claim that $G$ is $\beta$-periodic in all its arguments:
\begin{equation}\label{eq:period}
G(s_{1}, \ldots, s_{i-1}, 0, s_{i+1}, \ldots, s_{k}) = G(s_{1}, \ldots, s_{i-1}, \beta, s_{i+1}, \ldots, s_{k}).
\end{equation}
In particular, the function $G$ extends to a periodic function on $\mathbb{R}^{k}$, with period $\beta$ in all variables. With a slight abuse of notation, let us denote by $G$ such periodic extension. Notice that the function $s\mapsto a(it+s)$ is also periodic with period $\beta$ (recall definition (\ref{eq:adef0})), which means that the whole integrand in the right-hand side of (\ref{eq:permut}) can be extended to a $\beta$-periodic function in all its arguments. Furthermore, we claim that, for all $\sigma \in \mathbb{R}$:
\begin{equation}\label{eq:transinv}
G(s_{1}, s_{2}, \ldots, s_{k}) = G(s_{1} + \sigma, s_{2} + \sigma, \ldots, s_{k} + \sigma),
\end{equation}
that is, the function $G$ is translation invariant. Both (\ref{eq:period}), (\ref{eq:transinv}) are well known; they ultimately follow from the KMS identity. For the sake of completeness, Eqs. (\ref{eq:period}), (\ref{eq:transinv}) will be reviewed in Appendix \ref{app:int}, Proposition \ref{prop:period}. Thus, one gets:
\begin{equation}
\begin{split}
\int_{\Delta^{k}_\beta} d\underline{s}\, &\Big[ \prod_{i=1}^{k} a(it+s_{i})\Big] \langle \gamma_{s_{1}}(B) \gamma_{s_{2}}(B)\cdots \gamma_{s_{k}}(B)  \rangle \\& \equiv \frac{1}{k!} \int_{(S^{1}_{\beta})^{k}} d\underline{s}\,  \Big[ \prod_{i=1}^{k} a(it+s_{i})\Big] G(s_{1}, s_{2}, \ldots, s_{k}),
\end{split}
\end{equation}
where $S^{1}_{\beta} = \mathbb{R} / \beta \mathbb{Z}$. We rewrite this expression as:
\begin{equation}
\begin{split}
&\frac{1}{k!} \sum_{\omega_{i} \in \frac{2\pi}{\beta} \mathbb{N}} \Big[ \prod_{i=1}^{k} \tilde a(\omega_{i}) e^{\omega_{i} t}\Big]  \\
&\qquad \cdot \int_{(S^{1}_{\beta})^{k}} d\underline{s}\, e^{-i\sum_{j=1}^{k} \omega_{j} s_{1} } e^{-i\sum_{j=1}^{k} \omega_{j} (s_{j} - s_{1})} G(0, s_{2} - s_{1}, \ldots, s_{k} - s_{1}) \\
&= \frac{1}{k!} \sum_{\omega_{i} \in \frac{2\pi}{\beta} \mathbb{N}} \Big[ \prod_{i=1}^{k} \tilde a(\omega_{i}) e^{\omega_{i} t}\Big] \\
&\qquad \cdot \int_{S^{1}_{\beta}} ds_{1}\, e^{-i \sum_{j=1}^{k} \omega_{j} s_{1} }\int_{(S^{1}_{\beta})^{k-1}} d\underline{s}\,e^{-i\sum_{j=2}^{k} \omega_{j} s_{j}} G(0, s_{2}, \ldots, s_{k}),
\end{split}
\end{equation}
where in the last step we used that $e^{-i\sum_{j=1}^{k} \omega_{j} (s_{j} - s_{1})} G(0, s_{2} - s_{1}, \ldots, s_{k} - s_{1})$, as a function of $s_{j}$, $j=2,\ldots, k$, is a function on $(S^{1}_{\beta})^{k-1}$. Then, the claim (\ref{eq:claim0}) follows from (recall that we can assume $\omega_{j} \geq \frac{2\pi}{\beta}$, since $\tilde a(0) = 0$):
\begin{equation}
\int_{S^{1}_{\beta}} ds_{1}\, e^{-i \sum_{j=1}^{k} \omega_{j} s_{1} } = 0.
\end{equation}
This concludes the proof of Proposition \ref{prop:conn}.
\end{proof}
\begin{remark}\label{rem:T}
	By the same arguments used in the proof of Proposition \ref{prop:conn}, Eq.~(\ref{eq:pv}) can also be rephrased as:
\begin{equation}\label{eq:Tidentity}
\begin{split}
&\int_{\Delta^{n}_\beta} d\underline{s}\, \Big[\prod_{j=1}^{n} a(it+s_{j}) \Big] \langle  \gamma_{s_{1}}(B) \gamma_{s_{2}}(B) \cdots \gamma_{s_{n}}(B) A\rangle_{\beta, \mu, L} \\
&\qquad = \frac{1}{n!} \int_{(S^{1}_{\beta})^{n}} d\underline{s}\,  \Big[\prod_{j=1}^{n} a(it+s_{j}) \Big] \langle {\bf T} \gamma_{s_{1}}(B); \gamma_{s_{2}}(B); \cdots \gamma_{s_{n}}(B); A\rangle_{\beta, \mu, L}
\end{split}
\end{equation}
where ${\bf T}$ denotes the time-ordering, as defined in Eq.~(\ref{eq:time}). This is initially defined for operators whose imaginary-time arguments are in $[0,\beta)$. The resulting expression is then extended to a periodic function with period $\beta$ on the whole $\mathbb{R}^{n}$. See Proposition \ref{prop:period} and Remark \ref{rem:period} for further details.
\end{remark}
We are now ready to prove Lemma \ref{lem:wick}.
\begin{proof}[Proof of Lemma \ref{lem:wick}] By Proposition \ref{prop:wick2}:
\begin{equation}
\begin{split}
&\int_{-\infty \leq s_{n} \leq \ldots \leq s_{1} \leq t}  d\underline{s}\, \Big[\prod_{j=1}^{n} a(is_{j})\Big] \langle [ \cdots [[\tau_{t}(A), \tau_{s_{1}}(B)], \tau_{s_{2}}(B)] \cdots \tau_{s_{n}}(B) ] \rangle_{\beta,\mu,L} \\
&= (-i)^{n} \int_{0}^{\beta} d s_1 \dots \int_0^{s_{n-1}} d s_n\, \Big[\prod_{j=1}^{n} a(it + s_{j})\Big] \langle \gamma_{s_{1}}(B) \cdots \gamma_{s_{n}}(B) A\rangle_{\beta,\mu,L}.
\end{split}
\end{equation}
Next, by Proposition \ref{prop:conn}:
\begin{equation}
\begin{split}
&\int_{-\infty \leq s_{n} \leq \ldots \leq s_{1} \leq t}  d\underline{s}\, \Big[\prod_{j=1}^{n} a(is_{j})\Big] \langle [ \cdots [[\tau_{t}(A), \tau_{s_{1}}(B)], \tau_{s_{2}}(B)] \cdots \tau_{s_{n}}(B) ] \rangle_{\beta,\mu,L}\\
&= (-i)^{n} \int_{0}^{\beta} d s_1 \dots \int_0^{s_{n-1}} d s_n\, \Big[\prod_{j=1}^{n} a(it + s_{j})\Big] \langle \gamma_{s_{1}}(B); \cdots; \gamma_{s_{n}}(B); A\rangle_{\beta,\mu,L}.
\end{split}
\end{equation}
Finally, by Remark \ref{rem:T}:
\begin{equation}
\begin{split}
&\int_{-\infty \leq s_{n} \leq \ldots \leq s_{1} \leq t}  d\underline{s}\,  \Big[\prod_{j=1}^{n} a(is_{j})\Big] \langle [ \cdots [[\tau_{t}(A), \tau_{s_{1}}(B)], \tau_{s_{2}}(B)] \cdots \tau_{s_{n}}(B) ] \rangle_{\beta,\mu,L} \\
&\quad = \frac{(-i)^{n}}{n!}\int_{(S^{1}_{\beta})^{n}} d\underline{s}\, \Big[\prod_{j=1}^{n} a(it + s_{j})\Big] \langle {\bf T} \gamma_{s_{1}}(B); \cdots; \gamma_{s_{n}}(B); A\rangle_{\beta,\mu,L}\;,
\end{split}
\end{equation}
which concludes the proof of Lemma \ref{lem:wick}.
\end{proof}
\subsection{Cumulant expansion for the instantaneous Gibbs state}\label{sec:compa}
In this section we shall review the well-known cumulant expansion for the Gibbs state of the Hamiltonian $\mathcal{H}(\eta t)$,
\begin{equation}
\mathcal{H}(\eta t) = \mathcal{H} + \varepsilon g(\eta t) \mathcal{P},
\end{equation}
that is:
\begin{equation}
\langle \mathcal{O}_{X} \rangle_{t} = \frac{ \Tr \mathcal{O}_{X} e^{-\beta (\mathcal{H}(\eta t) - \mu \mathcal{N})} }{ \Tr e^{-\beta (\mathcal{H}(\eta t) - \mu \mathcal{N})} }.
\end{equation}
Perturbation theory in $\varepsilon$ is generated by the following chain of identities:
\begin{equation}
\begin{split}
e^{\beta (\mathcal{H}- \mu \mathcal{N})} &e^{-\beta (\mathcal{H}(\eta t) - \mu \mathcal{N})} - \mathbbm{1} \\ &= \int_{0}^{\beta} ds\, \frac{\partial}{\partial s} e^{s (\mathcal{H}- \mu \mathcal{N})} e^{-s (\mathcal{H}(\eta t) - \mu \mathcal{N})} \\
&= -\varepsilon g(\eta t) \int_{0}^{\beta} ds\,  e^{s (\mathcal{H}- \mu \mathcal{N})} \mathcal{P} e^{-s (\mathcal{H}(\eta t) - \mu \mathcal{N})} \\
&= -\varepsilon g(\eta t) \int_{0}^{\beta} ds\,  e^{s (\mathcal{H}- \mu \mathcal{N})} \mathcal{P} e^{-s (\mathcal{H} - \mu \mathcal{N})} e^{s (\mathcal{H} - \mu \mathcal{N})} e^{-s (\mathcal{H}(\eta t) - \mu \mathcal{N})} \\
&\equiv -\varepsilon g(\eta t) \int_{0}^{\beta} ds\,  \gamma_{s}(\mathcal{P}) e^{s (\mathcal{H}- \mu \mathcal{N})} e^{-s (\mathcal{H}(\eta t) - \mu \mathcal{N})}.
\end{split}
\end{equation}
Iterating:
\begin{equation}
\begin{split}
&e^{\beta (\mathcal{H}- \mu \mathcal{N})} e^{-\beta (\mathcal{H}(\eta t) - \mu \mathcal{N})} \\
&\quad = \mathbbm{1} + \sum_{n\geq 1} (-\varepsilon g(\eta t))^{n} \int_{0}^{\beta} ds_{1} \int_{0}^{s_{1}} ds_{2} \cdots \int_{0}^{s_{n-1}} ds_{n}\, \gamma_{s_{1}}(\mathcal{P})\cdots \gamma_{s_{n}}(\mathcal{P})
\end{split}
\end{equation}
which we can also write as:
\begin{equation}
\begin{split}
&e^{-\beta (\mathcal{H}(\eta t) - \mu \mathcal{N})} \\&\qquad = e^{-\beta (\mathcal{H}- \mu \mathcal{N})}\Big[ \mathbbm{1} + \sum_{n\geq 1} \frac{(-\varepsilon g(\eta t))^{n}}{n!} \int_{[0,\beta)^{n}} d\underline{s}\, \mathbf{T} \gamma_{s_{1}}(\mathcal{P})\cdots \gamma_{s_{n}}(\mathcal{P}) \Big].
\end{split}
\end{equation}
For finite $L$ and finite $\beta$, the series is norm convergent, thanks to the boundedness of the fermionic operators. Thus, the expectation value of a local operator on the Gibbs state of $\mathcal{H}(\eta t)$ can be written as:
\begin{equation}\label{eq:OX}
\langle \mathcal{O}_{X} \rangle_{t} = \frac{\langle \mathcal{O}_{X} \rangle_{\beta,\mu,L} + \sum_{n\geq 1} \frac{(-\varepsilon g(\eta t))^{n}}{n!}\int_{[0,\beta)^{n}} d\underline{t}\, \langle \mathbf{T} \gamma_{t_{1}}(\mathcal{P})\cdots \gamma_{t_{n}}(\mathcal{P}) \mathcal{O}_{X}  \rangle_{\beta, \mu, L}}{1 + \sum_{n\geq 1} \frac{(-\varepsilon g(\eta t))^{n}}{n!}\int_{[0,\beta)^{n}} d\underline{t}\, \langle \mathbf{T} \gamma_{t_{1}}(\mathcal{P})\cdots \gamma_{t_{n}}(\mathcal{P})\rangle_{\beta, \mu, L}}
\end{equation}
which is analytic in $\varepsilon$ for $|\varepsilon|$ small enough. We would like to show that analyticity in $\varepsilon$ extends to a ball whose radius is bounded uniformly in $L$. To this end, Eq.~(\ref{eq:OX}) can be further rewritten as (omitting the $\beta, \mu, L$ labels in the state and the $[0,\beta)^{n}$ domain in the integral):
\begin{equation}
\begin{split}\label{eq:equi0}
&\langle \mathcal{O}_{X} \rangle_{t} \\
&= \frac{\partial}{\partial \zeta} \log \Big( \sum_{n,m \geq 0} \frac{(-\varepsilon g(\eta t))^{n}}{n!} \frac{\zeta^{m}}{m!} \int d\underline{s}\, \langle \mathbf{T} \gamma_{s_{1}}(\mathcal{P})\cdots \gamma_{s_{n}}(\mathcal{P}) \mathcal{O}_{X}^{m}  \rangle \Big)\Big|_{\zeta = 0} \\
&= \sum_{n\geq 0} \frac{\varepsilon^{n}}{n!} \\&\cdot \frac{\partial^{n}}{\partial \varepsilon^{n}} \frac{\partial}{\partial \zeta} \log \Big( \sum_{\ell,m \geq 0}\frac{(-\varepsilon g(\eta t))^{\ell}}{\ell!} \frac{\zeta^{m}}{m!} \int d\underline{s}\, \langle \mathbf{T} \gamma_{s_{1}}(\mathcal{P})\cdots \gamma_{s_{\ell}}(\mathcal{P}) \mathcal{O}_{X}^{m}  \rangle \Big)\Big|_{\substack{\varepsilon = 0 \\ \zeta = 0}}.
\end{split}
\end{equation}
Then, it is not difficult to see that the right-hand side can be written as a sum over time-ordered cumulants, defined as in Eq.~(\ref{eq:Tcumul}). We have:
\begin{equation}\label{eq:equi}
\langle \mathcal{O}_{X} \rangle_{t} = \langle \mathcal{O}_{X} \rangle + \sum_{n\geq 1} \frac{(-\varepsilon g(\eta t))^{n}}{n!} \int d\underline{s}\, \langle \mathbf{T} \gamma_{s_{1}}(\mathcal{P}); \cdots; \gamma_{s_{n}}(\mathcal{P}); \mathcal{O}_{X}  \rangle.
\end{equation}
Under the assumption (\ref{eq:assA}), the series converges for $|\varepsilon|$ small enough, uniformly in $L$. By using Lemma \ref{lem:wick}, we will show that, for $\eta$ small enough, the Duhamel series of the auxiliary dynamics is term-by-term close to the cumulant expansion of the instantaneous Gibbs state, Eq.~(\ref{eq:equi}).

\subsection{Conclusion: proof of Theorem \ref{thm:main}}\label{sec:fin}

We are now ready to prove our main result, Theorem \ref{thm:main}. 
\begin{proof}[Proof of Theorem \ref{thm:main}]
By Proposition \ref{prop:comp} we have, for all $t\leq 0$:
\begin{equation}\label{eq:first}
\begin{split}
\Tr \mathcal{O}_{X} \rho(t) &= \Tr \mathcal{O}_{X} \tilde \rho(t) + R_{1}(t) \\
|R_{1}(t)| &\leq \frac{K|\varepsilon|}{\eta^{d+2}\beta},
\end{split}
\end{equation}
where $\tilde \rho(t)$ is the evolution of the equilibrium state under the Hamiltonian $\widetilde{\mathcal{H}}(\eta t)$, Eq.~(\ref{eq:tildeH}). Next, we rewrite the first term via its Duhamel series, as discussed in Section \ref{sec:duha}. We have, from Eq.~(\ref{eq:duh2}), replacing $g(\eta t)$ with $g_{\beta, \eta}(t)$:
\begin{equation}\label{eq:expa}
\begin{split}
&\Tr \mathcal{O}_{X} \tilde \rho(t) \\
&=  \Tr \mathcal{O}_{X} \rho_{\beta, \mu, L} + \sum_{n=1}^{\infty} (-i\varepsilon)^{n} \int_{-\infty \leq s_{n} \leq \ldots \leq s_{1} \leq t} d \underline{s}\, \Big[ \prod_{i=1}^{n} g_{\beta, \eta}(s_{i}) \Big] \\&\qquad \qquad \qquad \cdot \langle [ \cdots [[\tau_{t}(\mathcal{O}_{X}), \tau_{s_{1}}(\mathcal{P})], \tau_{s_{2}}(\mathcal{P})] \cdots \tau_{s_{n}}(\mathcal{P}) ] \rangle_{\beta, \mu, L}.
\end{split}
\end{equation} 
Consider the integral. We apply Lemma \ref{lem:wick}, choosing:
\begin{equation}
A = \mathcal{O}_{X}\;,\qquad B = \mathcal{P}\;,\qquad a(s) = g_{\beta, \eta}(-is)\;.
\end{equation}
We have, omitting the $\beta, \mu, L$ labels:
\begin{equation}\label{eq:wickproof}
\begin{split}
&\int_{-\infty \leq s_{n} \leq \ldots \leq s_{1} \leq t} d \underline{s}\, \Big[ \prod_{i=1}^{n} g_{\beta, \eta}(s_{i}) \Big] \langle [ \cdots [[\tau_{t}(\mathcal{O}_{X}), \tau_{s_{1}}(\mathcal{P})], \tau_{s_{2}}(\mathcal{P})] \cdots \tau_{s_{n}}(\mathcal{P}) ] \rangle \\
& = \frac{(-i)^{n}}{n!} \int_{[0,\beta)^{n}} d\underline{s}\, \Big[\prod_{j=1}^{n} g_{\beta, \eta}(t-is_{j})\Big] \langle {\bf T} \gamma_{s_{1}}(\mathcal{P}); \gamma_{s_{2}}(\mathcal{P}); \cdots; \gamma_{s_{n}}(\mathcal{P}); \mathcal{O}_{X} \rangle.
\end{split}
\end{equation}
Eqs. (\ref{eq:first}), (\ref{eq:expa}), (\ref{eq:wickproof}) prove the identity (\ref{eq:mainexp}). The estimate (\ref{eq:Rest}) follows from the bound in (\ref{eq:first}). To prove the bound (\ref{eq:Iest}), we use that:
\begin{equation}
\begin{split}
&|I^{(n)}_{\beta,\mu,L}(\eta,t)| \\
&\qquad = \Big|\int_{[0,\beta)^{n}} d\underline{s}\, \Big[\prod_{j=1}^{n} g_{\beta, \eta}(t-is_{j})\Big] \langle {\bf T} \gamma_{s_{1}}(\mathcal{P}); \gamma_{s_{2}}(\mathcal{P}); \cdots; \gamma_{s_{n}}(\mathcal{P}); \mathcal{O}_{X} \rangle\Big| \\
&\qquad \leq \| h \|_{1}^{n} \sum_{\substack{X_{1}, \ldots, X_{n} \subseteq \Lambda_{L} \\ \text{diam} X_{i} \leq R}} \int_{[0,\beta)^{n}} d\underline{s}\, \Big|\langle {\bf T} \gamma_{s_{1}}(\mathcal{P}_{X_{1}}); \cdots; \gamma_{s_{n}}(\mathcal{P}_{X_{n}}); \mathcal{O}_{X} \rangle\Big| \\
&\qquad \leq \| h \|_{1}^{n} \mathfrak{c}^{n} n!
\end{split}
\end{equation}
where in the first inequality we used the estimate (\ref{eq:tildegbd}), while the last inequality follows from Assumption \ref{ass:dec}. This proves the bound (\ref{eq:Iest}), which shows that series in Eq.~(\ref{eq:mainexp}) is absolutely convergent for:
\begin{equation}
|\varepsilon| < \frac{1}{\mathfrak{c} \|h\|_{1}}.
\end{equation}
To conclude, let us prove Eq.~(\ref{eq:adia}). Rewriting the functions $g_{\beta, \eta}(t-is_{j})$ as in (\ref{eq:gbeta}), we get:
\begin{equation}
\begin{split}
&\int_{[0,\beta)^{n}} d\underline{s}\, \Big[\prod_{j=1}^{n} g_{\beta, \eta}(t-is_{j})\Big] \langle {\bf T} \gamma_{s_{1}}(\mathcal{P}); \gamma_{s_{2}}(\mathcal{P}); \cdots; \gamma_{s_{n}}(\mathcal{P}); \mathcal{O}_{X} \rangle \\
& = \sum_{\underline{\omega} \in \frac{2\pi}{\beta} \mathbb{N}^{n}} \Big[ \prod_{i=1}^{n} \tilde g_{\beta, \eta}(\omega_{i}) e^{\omega_{i} t}\Big] \\ &\quad \cdot \int_{[0,\beta)^{n}} d\underline{s}\, e^{-i\sum_{i=1}^{n} \omega_{i} s_{i}} \langle {\bf T} \gamma_{s_{1}}(\mathcal{P}); \gamma_{s_{2}}(\mathcal{P}); \cdots; \gamma_{s_{n}}(\mathcal{P}); \mathcal{O}_{X} \rangle.
\end{split}
\end{equation}
Let:
\begin{equation}
\begin{split}
&\langle {\bf T} \widehat{\mathcal{P}}_{\omega_{1}}; \widehat{\mathcal{P}}_{\omega_{2}}; \cdots; \widehat{\mathcal{P}}_{\omega_{n}}; \mathcal{O}_{X} \rangle \\&\quad := 
\int_{[0,\beta)^{n}} d\underline{s}\, e^{-i\sum_{i=1}^{n} \omega_{i} s_{i}} \langle {\bf T} \gamma_{s_{1}}(\mathcal{P}); \gamma_{s_{2}}(\mathcal{P}); \cdots; \gamma_{s_{n}}(\mathcal{P}); \mathcal{O}_{X} \rangle.
\end{split}
\end{equation}
We can rewrite Eq.~(\ref{eq:expa}) in terms of these functions as:
\begin{equation}\label{eq:PP}
\begin{split}
&\Tr \mathcal{O}_{X} \tilde \rho(t) =  \Tr \mathcal{O}_{X} \rho_{\beta, \mu, L}\\&\quad + \sum_{n=1}^{\infty} \frac{(-i\varepsilon)^{n}}{n!} \sum_{\underline{\omega} \in \frac{2\pi}{\beta} \mathbb{N}^{n}} \Big[ \prod_{i=1}^{n} \tilde g_{\beta, \eta}(\omega_{i}) e^{\omega_{i} t}\Big] \langle {\bf T} \widehat{\mathcal{P}}_{\omega_{1}}; \widehat{\mathcal{P}}_{\omega_{2}}; \cdots; \widehat{\mathcal{P}}_{\omega_{n}}; \mathcal{O}_{X} \rangle,
\end{split}
\end{equation}
which is absolutely convergent, since as implied by Assumption \ref{ass:dec}
\begin{equation}\label{eq:PPPest}
|\langle {\bf T} \widehat{\mathcal{P}}_{\omega_{1}}; \widehat{\mathcal{P}}_{\omega_{2}}; \cdots; \widehat{\mathcal{P}}_{\omega_{n}}; \mathcal{O}_{X} \rangle| \leq \mathfrak{c}^{n} n!
\end{equation}
To prove Eq.~(\ref{eq:adia}), we preliminarily observe that, from Eq.~(\ref{eq:equi}):
\begin{equation}
\langle \mathcal{O}_{X} \rangle_{t} = \Tr \mathcal{O}_{X} \rho_{\beta, \mu, L} + \sum_{n\geq 1} \frac{(-\varepsilon g(\eta t))^{n}}{n!} \langle {\bf T} \widehat{\mathcal{P}}_{0}; \widehat{\mathcal{P}}_{0}; \cdots; \widehat{\mathcal{P}}_{0}; \mathcal{O}_{X} \rangle\;.
\end{equation}
Therefore,
\begin{equation}\label{eq:diffrho}
\begin{split}
&\Tr \mathcal{O}_{X} \tilde \rho(t) - \langle \mathcal{O}_{X} \rangle_{t} = \sum_{n=1}^{\infty} \frac{(-\varepsilon)^{n}}{n!} \\&\qquad\cdot \Big[ \sum_{\underline{\omega} \in \frac{2\pi}{\beta} \mathbb{N}^{n}} \Big[ \prod_{i=1}^{n} \tilde g_{\beta, \eta}(\omega_{i}) e^{\omega_{i} t}\Big] \langle {\bf T} \widehat{\mathcal{P}}_{\omega_{1}}; \widehat{\mathcal{P}}_{\omega_{2}}; \cdots; \widehat{\mathcal{P}}_{\omega_{n}}; \mathcal{O}_{X} \rangle \\&\qquad\qquad - g(\eta t)^{n} \langle {\bf T} \widehat{\mathcal{P}}_{0}; \widehat{\mathcal{P}}_{0}; \cdots; \widehat{\mathcal{P}}_{0}; \mathcal{O}_{X} \rangle\Big];
\end{split}
\end{equation}
the expression in the square brackets can be rewritten as
\begin{equation}\label{eq:RRR}
\begin{split}
&\sum_{\underline{\omega} \in \frac{2\pi}{\beta} \mathbb{N}^{n}} \Big[ \prod_{i=1}^{n} \tilde g_{\beta, \eta}(\omega_{i}) e^{\omega_{i} t}\Big] \Big(\langle {\bf T} \widehat{\mathcal{P}}_{\omega_{1}}; \widehat{\mathcal{P}}_{\omega_{2}}; \cdots; \widehat{\mathcal{P}}_{\omega_{n}}; \mathcal{O}_{X} \rangle - \langle {\bf T} \widehat{\mathcal{P}}_{0}; \widehat{\mathcal{P}}_{0}; \cdots; \widehat{\mathcal{P}}_{0}; \mathcal{O}_{X} \rangle\Big) \\&\quad + \Big(g_{\beta, \eta}(t)^{n} - g(\eta t)^{n}\Big) \langle {\bf T} \widehat{\mathcal{P}}_{0}; \widehat{\mathcal{P}}_{0}; \cdots; \widehat{\mathcal{P}}_{0}; \mathcal{O}_{X} \rangle =: R^{(n)}_{2,1}(t) + R^{(n)}_{2,2}(t).
\end{split}
\end{equation}
Consider the term $R^{(n)}_{2,1}(t)$. We have:
\begin{equation}\label{eq:aaa}
\begin{split}
| R^{(n)}_{2,1}(t) | &\leq \sum_{\underline{\omega} \in \frac{2\pi}{\beta} \mathbb{N}^{n}} \Big[ \prod_{i=1}^{n} |\tilde g_{\beta, \eta}(\omega_{i})| e^{\omega_{i} t}\Big] \\&\quad \cdot \int_{[0,\beta)^{n}} d\underline{s}\, \big| e^{i\underline{\omega}\cdot \underline{s}} - 1 \big| \big|\langle {\bf T} \gamma_{s_{1}}(\mathcal{P}); \gamma_{s_{2}}(\mathcal{P}); \cdots; \gamma_{s_{n}}(\mathcal{P}); \mathcal{O}_{X} \rangle \big| \\
&\leq \Big[ \sum_{\underline{\omega} \in \frac{2\pi}{\beta} \mathbb{N}^{n}} \Big[ \prod_{i=1}^{n} |\tilde g_{\beta, \eta}(\omega_{i})| e^{\omega_{i} t}\Big] |\underline{\omega}| \Big] \\&\quad \cdot \int_{[0,\beta)^{n}} d\underline{s}\, |\underline{s}|_{\beta} \big| \langle {\bf T} \gamma_{s_{1}}(\mathcal{P}); \gamma_{s_{2}}(\mathcal{P}); \cdots; \gamma_{s_{n}}(\mathcal{P}); \mathcal{O}_{X} \rangle\big|,
\end{split}
\end{equation}
where $|\underline{\omega}| = \sum_{i=1}^{n} |\omega_{i}|$ and $|\underline{s}|_{\beta}$ is defined in Eq.~(\ref{eq:normbeta}). The integral in the right-hand side is estimated using Assumption \ref{ass:dec}:
\begin{equation}
\int_{[0,\beta)^{n}} d\underline{s}\, |\underline{s}|_{\beta} \big| \langle {\bf T} \gamma_{s_{1}}(\mathcal{P}); \gamma_{s_{2}}(\mathcal{P}); \cdots; \gamma_{s_{n}}(\mathcal{P});\mathcal{O}_{X} \rangle\big| \leq \mathfrak{c}^{n} n!
\end{equation}
Then, the argument of the square brackets in the right-hand side of (\ref{eq:aaa}) is bounded as follows:
\begin{equation}\label{eq:argu}
\sum_{\underline{\omega} \in \frac{2\pi}{\beta} \mathbb{N}^{n}} \Big[ \prod_{i=1}^{n} |\tilde g_{\beta, \eta}(\omega_{i})| e^{\omega_{i} t}\Big] |\underline{\omega}| \leq n \Big(\sum_{\omega \in \frac{2\pi}{\beta} \mathbb{N}} \omega |\tilde g_{\beta, \eta}(\omega)| \Big) \| h \|_{1}^{n-1}
\end{equation}
where we used Eq.~(\ref{eq:tildegbd}). The sum in the right-hand side of (\ref{eq:argu}) is estimated as:
\begin{equation}\label{eq:omegag}
\begin{split}
\sum_{\omega \in \frac{2\pi}{\beta} \mathbb{N}} \omega |\tilde g_{\beta, \eta}(\omega)| &= \sum_{\omega \in \frac{2\pi}{\beta} \mathbb{N}} \omega \Big| \int_{\frac{\omega}{\eta} - \frac{2\pi}{\beta \eta}}^{\frac{\omega}{\eta}} d\xi\, h(\xi)\Big| \\
&\leq \sum_{\omega \in \frac{2\pi}{\beta} \mathbb{N}} \Big(\omega - \frac{2\pi}{\beta}\Big) \Big| \int_{\frac{\omega}{\eta} - \frac{2\pi}{\beta \eta}}^{\frac{\omega}{\eta}} d\xi\, h(\xi)\Big|  + \frac{2\pi}{\beta} \|h\|_{1} \\
&\leq \sum_{\omega \in \frac{2\pi}{\beta} \mathbb{N}} \eta \int_{\frac{\omega}{\eta} - \frac{2\pi}{\beta \eta}}^{\frac{\omega}{\eta}} d\xi\, \xi |h(\xi)| + \frac{2\pi}{\beta} \|h\|_{1} = \eta \| \xi h \|_{1} + \frac{2\pi}{\beta} \|h\|_{1}.
\end{split}
\end{equation}
All together,
\begin{equation}\label{eq:R21}
| R^{(n)}_{2,1}(t) | \leq n  \| h \|_{1}^{n-1} \| (1 + \xi) h \|_{1} \Big(\eta + \frac{2\pi}{\beta}\Big) \mathfrak{c}^{n} n!.
\end{equation} 
Consider now the error term $R^{(n)}_{2,2}(t)$ in (\ref{eq:RRR}). We have, using that $| g(\eta t) | \leq \| h \|_{1}$ and $| g_{\beta, \eta}(t) | \leq \| h \|_{1}$, together with (\ref{eq:PPPest}):
\begin{equation}
| R^{(n)}_{2,2}(t) | \leq n 2^{n-1} \| h \|_{1}^{n-1} \big| g_{\beta, \eta}(t) - g(\eta t) \big| \mathfrak{c}^{n} n!.
\end{equation}
Then, using (\ref{eq:ggdiff}) we find:
\begin{equation}\label{eq:R22}
| R^{(n)}_{2,2}(t) | \leq n 2^{n-1} \| h \|_{1}^{n-1} \frac{2\pi}{e \beta \eta} \Big\| \frac{h}{\xi}  \Big\|_{1} \mathfrak{c}^{n} n!.
\end{equation}
Coming back to (\ref{eq:diffrho}), we have, for $|\varepsilon| < \varepsilon_{0}$, with $\varepsilon_{0}$ small enough only dependent on $h$ and on $\mathfrak{c}$:
\begin{equation}\label{eq:estRRR}
\begin{split}
\Big|\Tr \mathcal{O}_{X} \tilde \rho(t) - \langle \mathcal{O}_{X} \rangle_{t}\Big| &\leq \sum_{n=1}^{\infty} \frac{|\varepsilon|^{n}}{n!}\Big( | R^{(n)}_{2,1}(t) | + | R^{(n)}_{2,2}(t) | \Big) \\
&\leq C_{1} |\varepsilon| \Big( \eta + \frac{1}{\beta} \Big) + \frac{ C_{2}|\varepsilon| }{\beta \eta},
\end{split}
\end{equation}
where the constants $C_{1} \equiv C_{1}(\frak{c}, h)$ and $C_{2} \equiv C_{2}(\frak{c}, h)$ can be obtained from (\ref{eq:R21}), (\ref{eq:R22}), respectively. In conclusion, combining the bound (\ref{eq:estRRR}) with (\ref{eq:first}):
\begin{equation}
\begin{split}
\Big| \Tr \mathcal{O}_{X} \rho(t) - \langle \mathcal{O}_{X} \rangle_{t} \Big| &\leq \Big| \Tr \mathcal{O}_{X} \rho(t) - \Tr \mathcal{O}_{X} \tilde \rho(t) \Big| + \Big|\Tr \mathcal{O}_{X} \tilde \rho(t) - \langle \mathcal{O}_{X} \rangle_{t}\Big| \\
&\leq \frac{K|\varepsilon|}{\eta^{d+2}\beta} + C_{1} |\varepsilon| \Big( \eta + \frac{1}{\beta} \Big) + \frac{C_{2} |\varepsilon|}{\beta \eta}.
\end{split}
\end{equation}
This proves (\ref{eq:adia}) and concludes the proof of Theorem \ref{thm:main}. 
\end{proof}

\begin{proof}[Proof of Corollary \ref{rem:improv}] Let us show how the strategy used above can be adapted to obtain the improved result (\ref{eq:etam}). Under the assumptions of Corollary~\ref{rem:improv}, the function $g(z)$ is $m+1$ times continuously differentiable for $\text{Re}z\leq 0$, and the same holds for $g_{\beta, \eta}(z)$. We proceed as in the proof of Theorem \ref{thm:main}, the only difference being the estimate for the term $R^{(n)}_{2,1}(0)$ in Eq. (\ref{eq:RRR}). We have:
\begin{equation}\label{eq:R21impro}
\begin{split}
&R^{(n)}_{2,1}(0) \\
&= \sum_{\underline{\omega} \in \frac{2\pi}{\beta} \mathbb{N}^{n}} \Big[ \prod_{j=1}^{n} \tilde g_{\beta, \eta}(\omega_{j}) \Big] \Big(\langle {\bf T} \widehat{\mathcal{P}}_{\omega_{1}}; \widehat{\mathcal{P}}_{\omega_{2}}; \cdots; \widehat{\mathcal{P}}_{\omega_{n}}; \mathcal{O}_{X} \rangle - \langle {\bf T} \widehat{\mathcal{P}}_{0}; \widehat{\mathcal{P}}_{0}; \cdots; \widehat{\mathcal{P}}_{0}; \mathcal{O}_{X} \rangle\Big) \\
&= \int_{[0,\beta)^{n}} d\underline{s}\, \Big(\prod_{j=1}^{n} g_{\beta,\eta}(-is_{j}) - g_{\beta,\eta}(0)^{n}\Big)\langle {\bf T} \gamma_{s_{1}}(\mathcal{P}); \gamma_{s_{2}}(\mathcal{P}); \cdots; \gamma_{s_{n}}(\mathcal{P}); \mathcal{O}_{X} \rangle\;.
\end{split}
\end{equation}
By differentiability of $g_{\beta,\eta}(-is)$, we have the Taylor expansion, for $s\in [0,\beta)$:
\begin{equation}\label{eq:tay}
	g_{\beta,\eta}(-is) - g_{\beta,\eta}(0) = \sum_{j=1}^{m} \frac{\partial_s^j g_{\beta,\eta}(0)}{j!}(-is)^j + r_{\beta,\eta}^{(m+1)}(s)\;,
\end{equation}
and the remainder can be estimated in a similar way to  (\ref{eq:omegag}), so that there is some $L_{m+1}>0$ such that
\begin{equation}\label{eq:rest}
	|r_{\beta,\eta}^{(m+1)}(s)| \leq L_{m+1} \Big( \eta  + \frac{1}{\beta}\Big)^{m+1} |s|_{\beta}^{m+1}\;.
\end{equation}
Concerning the first term in the right-hand side of (\ref{eq:tay}), we use that, recalling that by assumption $\partial^{j}_{s}g(0) = 0$ for all $j\leq m$,
\begin{equation}\label{eq:partdiff}
\begin{split}
\Big| \partial_{s}^{j} g_{\beta,\eta}(0)  \Big| &= \Big|\partial_{s}^{j} g_{\beta,\eta}(0) - \eta^{j} \partial_{s}^{j} g(0)\Big| \\
&= \Big| \sum_{r=0}^{\infty} \int_{\frac{2\pi}{\beta \eta} r}^{\frac{2\pi}{\beta \eta}(r+1)} d\xi\, h(\xi) \Big[ \Big(\frac{2\pi}{\beta} (r+1)\Big)^{j} - (\xi \eta)^{j}  \Big]  \Big| \\
&\leq \sum_{r=0}^{\infty} \int_{\frac{2\pi}{\beta \eta} r}^{\frac{2\pi}{\beta \eta}(r+1)} d\xi\, | h(\xi) | \Big| \Big(\xi \eta + \frac{2\pi}{\beta}\Big)^{j} - (\xi \eta)^{j}  \Big| \\
&\leq \widetilde{C}_{j} \sum_{\ell = 1}^{j} \frac{\eta^{j-\ell}}{\beta^{\ell}}
\end{split}
\end{equation}
where we used the assumption (\ref{eq:hbd_enhanced}). Therefore, from (\ref{eq:tay}), (\ref{eq:rest}), (\ref{eq:partdiff}):
\begin{equation}
\big| g_{\beta,\eta}(-is) - g_{\beta,\eta}(0) \big| \leq C_{m+1} \Big[ \left( \eta+ \frac1\beta \right)^{m+1} |s|_{\beta}^{m+1} + \frac{1}{\beta}(1 + |s|_{\beta}^{m+1})\Big]
\end{equation}
which implies
\begin{equation}
\begin{split}
	&\Big| \prod_{j=1}^{n} g_{\beta,\eta}(-is_{j}) - g_{\beta,\eta}(0)^{n} \Big| \\&\qquad \leq 2^{n-1} \| h\|_{1}^{n-1} C_{m+1} \Big[ \left( \eta+\frac1\beta \right)^{m+1} \sum_{j=1}^{n}  |s_{j}|_{\beta}^{m+1} + \frac{1}{\beta}\sum_{j=1}^{n}  (1+|s_{j}|_{\beta}^{m+1})\Big].
\end{split}
\end{equation}
Plugging this bound in (\ref{eq:R21impro}) we get
\begin{equation}
\begin{split}
| R^{(n)}_{2,1}(0) | &\leq 2^{n-1} \| h\|_{1}^{n-1} C_{m+1} \Big[ \left( \eta+\frac1\beta \right)^{m+1} + \frac{1}{\beta}\Big] \\&\quad\cdot \sum_{j=1}^{n}  \int_{[0,\beta)^{n}} d\underline{s}\,  (1+|s_{j}|_{\beta}^{m+1}) \big| \langle {\bf T} \gamma_{s_{1}}(\mathcal{P});  \cdots; \gamma_{s_{n}}(\mathcal{P}); \mathcal{O}_{X} \rangle
\end{split}
\end{equation}
and so from the assumption (\ref{eq:assAm}), we obtain, for a new constant $\widetilde{C}_{m+1}$:
\begin{equation}\label{eq:R210}
| R^{(n)}_{2,1}(0) | \leq \widetilde{C}_{m+1} C^{n} \mathfrak{c}^{n}  \Big[ \Big( \eta+ \frac1\beta \Big)^{m+1} + \frac{1}{\beta}\Big]  n!.
\end{equation}
The final claim, Eq.~(\ref{eq:etam}), follows proceeding as in the proof of Theorem \ref{thm:main}, replacing the bound (\ref{eq:R21}) with (\ref{eq:R210}).
\end{proof}
To conclude the section, we discuss the proof of Corollary \ref{cor:lin}.
\begin{proof}[Proof of Corollary \ref{cor:lin}] Eqs.~(\ref{eq:lin}), (\ref{eq:linerr}) follow from Eqs. (\ref{eq:first}), (\ref{eq:wickproof}), and from the convergence of the series in (\ref{eq:PP}). Eq.~(\ref{eq:linerr2}) is proved following the argument after (\ref{eq:diffrho}). To prove Eq.~(\ref{eq:last}), we use that, from (\ref{eq:wickproof}):
\begin{equation}
\int_{0}^{\beta} ds\, g_{\beta,\eta}(t - is) \langle \gamma_{s}(\mathcal{P})\;; \mathcal{O}_{X} \rangle_{\beta, \mu, L} = i\int_{-\infty}^{t} d s\, g_{\beta,\eta}(s) \langle \left[ \tau_{t}(\mathcal{O}_{X}), \tau_{s}(\mathcal{P}) \right] \rangle_{\beta, \mu, L}.
\label{eq:corlin_start}
\end{equation}
Next, we estimate the error introduced by replacing $g_{\beta,\eta}(s)$ with $g(\eta s)$. We have:
\begin{equation}
\begin{split}
&\Big| \int_{-\infty}^{t} d s\, (g_{\beta,\eta}(s) - g(\eta s)) \langle \left[ \tau_{t}(\mathcal{O}_{X}), \tau_{s}(\mathcal{P}) \right] \rangle_{\beta, \mu, L}\Big| \\
&\qquad \leq \int_{-\infty}^{t} d s\, |g_{\beta,\eta}(s) - g(\eta s)| \big| \langle \left[  \tau_{t}(\mathcal{O}_{X}), \tau_{s}(\mathcal{P}) \right] \rangle_{\beta, \mu, L}\big| \\
&\qquad \leq \widetilde K\int_{-\infty}^{t} d s\, |g_{\beta,\eta}(s) - g(\eta s)| (|t-s|^{d} + 1),
\end{split}
\end{equation}
where in the last step we used the Lieb-Robinson bound, as in (\ref{eq:LRcons}). Finally, proceeding as in (\ref{eq:norm2})-(\ref{eq:norm3}), we get:
\begin{equation}
\begin{split}
\int_{-\infty}^{t} d s\, |g_{\beta,\eta}(s) - g(\eta s)| &(|t-s|^{d} + 1) \\& \leq \frac{\widetilde{K}}{\beta} \int_{-\infty}^{0} ds \int_{0}^{\infty} d\xi\, |h(\xi)| e^{\xi \eta s} (|s|^{d+1} + 1) \\
&\leq \frac{K}{\beta \eta^{d+2}}\;.
\end{split}
\label{eq:corlin_end}
\end{equation}
This concludes the proof of (\ref{eq:last}) and of the corollary.
\end{proof}

\appendix
\section{On the switch functions}\label{app:switch}

Here we will discuss functions $g(t)$ that satisfy Assumption \ref{ass:switch}. This assumption holds true for the standard switch function $g(t) = e^{a t}$ with $a>0$, where $h(\xi) = \delta(\xi - a)$, and more generally for the finite linear combinations of such functions. More generally, it is a natural question to understand under which conditions a function $g(t)$ can be represented as in (\ref{eq:switch}), for a function $h$ that satisfies the desired properties. Here we will give sufficient conditions for this to hold.

Let $\delta > 0$ and let $g(z)$ be analytic for $\text{Re}\, z <\delta$. Suppose that, for all $0\leq k \leq d+2$ and for all $x< \delta$:
\begin{equation}\label{eq:unif}
\int_{-\infty}^{\infty} dy\, |x + i y|^{k} |g(x + i y)| \leq C,\quad \lim_{x\to -\infty} \int_{-\infty}^{\infty} dy\, |x + i y|^{k} |g(x + i y)| = 0.
\end{equation}
Furthermore, suppose that for all $0\leq k \leq d+2$ and for all $y\in \mathbb{R}$:
\begin{equation}\label{eq:unif2}
\int_{-\infty}^{0} dx\, |x + i y|^{k} |g(x + iy)| \leq C,\quad \lim_{y\to \pm \infty} \int_{-\infty}^{0} dx\, |x + i y|^{k} |g(x + iy)| = 0.
\end{equation}
Examples of functions satisfying these assumptions are:
\begin{equation}
g(z) = \frac{1}{(z - a)^{n}}
\end{equation}
with $n \geq d+4$ and $a>\delta$. Let us check that functions satisfying (\ref{eq:unif}), (\ref{eq:unif2}) verify Assumption \ref{ass:switch}. Let $\gamma$ be the straight complex path on the imaginary axis $\gamma: \delta/2 + i\infty \to \delta/2 -i\infty$.  Define, for $\xi \geq 0$:
\begin{equation}\label{eq:inverseL}
h(\xi) := \frac{1}{2\pi i} \int_{\gamma} dz\, e^{-z\xi} g(z).
\end{equation}
We have:
\begin{equation}\label{eq:hL1}
\begin{split}
|h(\xi)| &\leq \frac{e^{-(\delta/2) \xi}}{2\pi} \int_{-\infty}^{\infty}dy\, |g(\delta/2 + iy)| \\
&\leq Ce^{-(\delta/2) \xi}
\end{split}
\end{equation}
where we used Eq.~(\ref{eq:unif}). Let us check that $h$ satisfies Eq.~(\ref{eq:switch}). Let $t\leq 0$ and define:
\begin{equation}
\tilde g(t) := \int_{0}^{\infty} d\xi\, e^{\xi t} h(\xi),
\end{equation}
which is well defined thanks to (\ref{eq:hL1}). We have:
\begin{equation}
\begin{split}
\tilde g(t) &= \frac{1}{2\pi i} \int_{\gamma} dz\,  \int_{0}^{\infty} d\xi\, e^{\xi t} e^{-z\xi} g(z) \\
&= \frac{-1}{2\pi i} \int_{\gamma} dz\, \frac{g(z)}{t - z} \\
&= g(t).
\end{split}
\end{equation}
In the first identity we applied Fubini's theorem, and in the last identity we applied Cauchy integral formula, using that $g(z)$ is analytic for $\text{Re}\, z \leq \delta/2$ together with the assumptions (\ref{eq:unif}), (\ref{eq:unif2}). Thus, Eq.~(\ref{eq:inverseL}) is the inverse Laplace transform of the function $g$.

Eq.~(\ref{eq:hL1}) implies that $h\in L^{1}(\mathbb{R}_{+})$. More generally, Eq.~(\ref{eq:hL1}) implies, for all $k\geq 0$:
\begin{equation}
\| (1+\xi^{k}) h \|_{1} \leq C_{k},
\end{equation}
which shows that the second assumption in Eq.~(\ref{eq:hbd}) holds true. Let us now consider the first assumption in (\ref{eq:hbd}). We observe that
\begin{equation}\label{eq:h0}
h(0) = \frac{1}{2\pi i} \int_{\gamma} dz\, g(z) = 0,
\end{equation}
since the function $g(z)$ has no poles for $\text{Re}\, z < \delta$: Eq.~(\ref{eq:h0}) follows from Cauchy integral formula, combined with the integrability properties (\ref{eq:unif}), (\ref{eq:unif2}). Furthermore, for all $1\leq k\leq d+2$:
\begin{equation}
\partial^{k}_{\xi} h(\xi) = \frac{(-1)^{k}}{2\pi i} \int_{\gamma} dz\, e^{-z\xi} z^{k} g(z)
\end{equation}
where we used that $z^{k} g(z)$ is integrable on the path $\gamma$, by (\ref{eq:unif}). By the same assumption, $\partial^{k}_{\xi} h(\xi)$ is bounded uniformly in $\xi$ for $1\leq k\leq d+2$.

The function $z^{k} g(z)$ has the same analyticity properties as $g$, in particular it has no poles for $\text{Re}\, z < \delta$. Thus, by Cauchy integral formula, thanks to the integrability assumptions (\ref{eq:unif}), (\ref{eq:unif2}) we have:
\begin{equation}\label{eq:hk0}
\begin{split}
\partial^{k}_{\xi} h(0) &= \frac{(-1)^{k}}{2\pi i} \int_{\gamma} dz\,  z^{k} g(z) \\
&=0.
\end{split}
\end{equation}
Eqs. (\ref{eq:h0}), (\ref{eq:hk0}) imply that $h(\xi) / \xi^{d+2}$ is bounded as $\xi \to 0$, in particular the first assumption in (\ref{eq:hbd}) is satisfied. This concludes the check of Assumption \ref{ass:switch}.

\section{Properties of Euclidean correlations}\label{app:int}

In this appendix we shall give the proofs of the properties of Euclidean correlation functions that have been used in the proof of Theorem \ref{thm:main}. These properties are well-known, and we collect the proofs here for completeness.

Recall the notation for the $n$-dimensional simplex of side $\beta$:
\begin{equation}\label{eq:symplex}
\Delta^n_{\beta}:=\{ (s_1,\dots,s_n)\in \mathbb{R}^n\,:\, \beta>s_1>\dots>s_{n}>0  \}.
\end{equation}
\begin{proposition}\label{lem:splitting}
Let $n,m\in \mathbb{N}$, $1\leq m\leq n-1$. Let $f: [0,\beta]^{m}\to \mathbb{C}$, $g: [0,\beta]^{n-m} \to \mathbb{C}$, $f, g$ integrable. The following identity holds:
\begin{equation}\label{eq:B2}
\begin{split}
&\sum_{\substack{J \subset \{1,\dots, n\} \\ |J| = m}}\int_{0}^{\beta} d s_1\dots\int_0^{s_{n-1}} d s_n\, f(\underline{s}_{J})g(\underline{s}_{J^c}) \\
&=\int_{0}^{\beta} d s_1\dots\int_0^{s_{m-1}} d s_m\, f(\underline{s}_{\{1,\dots,m\}})\int_{0}^{\beta} d s_{m+1}\dots\int_0^{s_{n-1}} d s_n\, g(\underline{s}_{\{m+1,\dots,n\}}),
\end{split}
\end{equation}
where the sum is over ordered $m$-tuples $J = (j_{1}, \ldots, j_{m})$, we used the notation $\underline{s}_{J} = (s_{j_{1}}, \ldots, s_{j_{m}})$ and $\underline{s}_{J^{c}} = \underline{s}_{\{1,\ldots, n\} \setminus J}$.
\end{proposition}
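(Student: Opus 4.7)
The identity (\ref{eq:B2}) is a pure change-of-variables statement about ordered tuples. The left-hand side can be rewritten as
$$\sum_{\substack{J \subset \{1,\ldots,n\} \\ |J|=m}} \int_{\Delta^n_\beta} d\underline{s}\, f(\underline{s}_J)\, g(\underline{s}_{J^c}),$$
and my strategy is to exhibit a bijection
$$\Phi : \Delta^n_\beta \times \bigl\{J \subset \{1,\ldots,n\} : |J|=m\bigr\} \longrightarrow \Delta^m_\beta \times \Delta^{n-m}_\beta, \qquad (\underline{s}, J) \longmapsto (\underline{s}_J, \underline{s}_{J^c}),$$
that is measure-preserving almost everywhere, and then to apply Fubini.

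First I would check that $\Phi$ is well defined: if $s_1 > \cdots > s_n$ and $J = \{j_1 < \cdots < j_m\}$, then $s_{j_1} > \cdots > s_{j_m}$ inherits the strict decreasing order from $\underline{s}$, so $\underline{s}_J \in \Delta^m_\beta$ and similarly $\underline{s}_{J^c} \in \Delta^{n-m}_\beta$. Next I would construct the inverse on the full-measure subset of $\Delta^m_\beta \times \Delta^{n-m}_\beta$ where all $n$ coordinates $(t_1,\ldots,t_m,u_1,\ldots,u_{n-m})$ are pairwise distinct: merge these values into a single strictly decreasing sequence $\underline{s}$, and let $J$ record the positions in $\underline{s}$ occupied by the $t$-values. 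Because both $\underline{t}$ and $\underline{u}$ are themselves decreasing, the positions of the $t$-values automatically appear in increasing order, so $J$ is unambiguously an ordered subset, and the two constructions are clearly inverse to one another on the complement of the measure-zero coincidence set $\bigcup_{i,j} \{t_i = u_j\}$. Since $\Phi$ acts locally by a permutation of coordinates, its Jacobian has absolute value one.

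Applying the change of variables provided by $\Phi$ yields
$$\sum_{J} \int_{\Delta^n_\beta} d\underline{s}\, f(\underline{s}_J)\, g(\underline{s}_{J^c}) = \int_{\Delta^m_\beta} d\underline{t} \int_{\Delta^{n-m}_\beta} d\underline{u}\, f(\underline{t})\, g(\underline{u}),$$
and rewriting each simplex as an iterated integral gives precisely the right-hand side of (\ref{eq:B2}). I do not expect a serious obstacle: the content of the proposition is the combinatorial observation that an ordered $n$-tuple together with a choice of $m$ marked positions is the same data as a pair consisting of an ordered $m$-tuple and an ordered $(n-m)$-tuple with pairwise distinct entries. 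The only point requiring any care is to confirm that the coincidence set is Lebesgue-null so that it can be ignored in the integrals, which is immediate since it is a finite union of hyperplanes in $\mathbb{R}^n$ and $f, g$ are integrable.
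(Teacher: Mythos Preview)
Your proposal is correct and is essentially the same argument as the paper's, just phrased in the language of bijections rather than indicator functions. The paper performs the same change of variables $\underline{r}=(\underline{s}_J,\underline{s}_{J^c})$, then shows the identity $\sum_{J}\mathbbm{1}_{\Delta^n_\beta}(\pi_J(\underline{r}))=\mathbbm{1}_{\Delta^m_\beta}(\underline{r}_{\{1,\dots,m\}})\,\mathbbm{1}_{\Delta^{n-m}_\beta}(\underline{r}_{\{m+1,\dots,n\}})$, which is exactly your statement that $\Phi$ is a bijection off the coincidence set; both proofs rest on the observation that merging two strictly decreasing tuples with disjoint entries yields a unique strictly decreasing $n$-tuple together with a unique choice of marked positions.
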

\begin{proof} Let $\mathbbm{1}_{\Delta^n_\beta}(\underline{s})$ be the characteristic function of the simplex $\Delta^n_\beta$, Eq.~(\ref{eq:symplex}). We start by writing:
\begin{equation}
\label{eqn:splitting}
\begin{aligned}
&\sum_{\substack{J \subset \{1,\dots, n\}\\ |J|=m}}\int_{0}^{\beta} d s_1\dots\int_0^{s_{n-1}} d s_n\, f(\underline{s}_{J})g(\underline{s}_{J^{c}})\\
& \quad = \sum_{\substack{J \subset \{1,\dots, n\}\\ |J|=m}}\int_{{[0,\beta]}^n} d s_1\cdots d s_n\, f(\underline{s}_{J})g(\underline{s}_{J^{c}})\mathbbm{1}_{\Delta^n_\beta}(\underline{s})\\
& \quad = \sum_{\substack{J \subset \{1,\dots, n\}\\ |J|=m}}\int_{{[0,\beta]}^n}d r_1\cdots d r_n\, f(\underline{r}_{\{1,\dots,m\}})g(\underline{r}_{\{m+1,\dots,n\}})\mathbbm{1}_{\Delta^n_\beta}( \pi_{J}(\underline{r})),
\end{aligned}
\end{equation}
where in the last step we used the change of variables:
\begin{equation}
\underline{r}_{\{1,\dots,m\}}=\underline{s}_J,\qquad \underline{r}_{\{m+1,\dots,n\}}=\underline{s}_{J^{c}}
\end{equation}
and we call $\pi_{J}$ the permutation such that $\pi_{J}(\underline{r})_{i} = s_{i}$ for $i = 1, \ldots, n$. Now, define:
\begin{equation}
\label{eqn:L}
L(\underline{r}) := \sum_{\substack{J \subset \{1,\dots, n\}\\ |J| = m}} \mathbbm{1}_{\Delta^n_\beta}( \pi_{J}(\underline{r})).
\end{equation}
This function only takes values $0$ or $1$, since there can be at most one non-vanishing element in the sum. We claim that:
\begin{equation}\label{eq:Lclaim}
L(\underline{r}) = \mathbbm{1}_{\Delta^m_\beta}(\underline{r}_{\{1,\dots, m\}})\mathbbm{1}_{\Delta^{n-m}_\beta}(\underline{r}_{\{m+1,\dots, n\}}).
\end{equation}
Suppose that $\underline{r}_{\{1,\dots, m\}}\in \Delta^m_\beta$ and $\underline{r}_{\{m+1,\dots, n\}}\in\Delta^{(n-m)}_\beta$. Then, there exists a unique permutation that maps $\underline{r} = (\underline{r}_{\{1,\dots, m\}}, \underline{r}_{\{m+1,\dots, n\}})$ into a decreasing sequence. Equivalently, there exists a unique $J$, $|J| = m$, such that $\pi_{J}(\underline{r})$ is in $\Delta^n_\beta$. Thus, 
\begin{equation}
\mathbbm{1}_{\Delta^m_\beta}(\underline{r}_{\{1,\dots, m\}})\mathbbm{1}_{\Delta^{(n-m)}_\beta}(\underline{r}_{\{m+1,\dots, n\}}) = 1 = \sum_{\substack{J \subset \{1,\dots, n\}\\ |J| = m}} \mathbbm{1}_{\Delta^n_\beta}( \pi_{J}(\underline{r})).
\end{equation}
Suppose now that $\underline{r}_{\{1,\dots, m\}} \notin \Delta^m_\beta$. For any $J\subset \{1, \ldots, n\}$, $|J| = m$, consider the sequence $\pi_{J}(\underline{r})$. The action of $\pi_{J}$ sends the first $m$ entries of $\underline{r}$ to $m$ entries in the new sequence, preserving their relative order. In particular, if $\underline{r}_{\{1,\dots, m\}} \notin \Delta^m_\beta$ then $\pi_{J}(\underline{r}) \notin \Delta^n_\beta$. Therefore,
\begin{equation}
\mathbbm{1}_{\Delta^m_\beta}(\underline{r}_{\{1,\dots, m\}})\mathbbm{1}_{\Delta^{(n-m)}_\beta}(\underline{r}_{\{m+1,\dots, n\}}) = 0 = \sum_{\substack{J \subset \{1,\dots, n\}\\ |J| = m}} \mathbbm{1}_{\Delta^n_\beta}( \pi_{J}(\underline{r})).
\end{equation}
A similar discussion applies to the case $\underline{r}_{\{m+1,\dots, n\}} \notin \Delta^{n-m}_\beta$. This concludes the proof of (\ref{eq:Lclaim}). The final claim (\ref{eq:B2}) follows after plugging (\ref{eq:Lclaim}) into (\ref{eqn:splitting}).
\end{proof}
\begin{proposition}\label{prop:period} Let $\mathcal{O}_{i} \in \mathcal{A}_{\Lambda_{L}}$, $i = 1,\ldots, n$. Let $s_{i} \neq s_{j}$ for $i\neq j$ and $0\leq s_{i} \leq \beta$. Consider:
\begin{equation}\label{eq:gdef}
\begin{split}
&G(s_{1}, \ldots, s_{n}) \\
&\quad := \sum_{\pi} \mathbbm{1}(s_{\pi(1)} > s_{\pi(2)} > \ldots > s_{\pi(n)}) \big\langle \gamma_{s_{\pi(1)}}(\mathcal{O}_{\pi(1)}) \cdots \gamma_{s_{\pi(n)}} (O_{\pi(n)})  \big\rangle_{\beta, \mu, L},
\end{split}
\end{equation}
where the sum is over permutations of $1, \ldots, n$. Then, for all $i=1,\ldots, n$:
\begin{equation}\label{eq:0beta}
G(s_{1},\ldots, s_{i-1}, \beta, s_{i+1}, \ldots, s_{n}) = G(s_{1},\ldots, s_{i-1}, 0, s_{i+1}, \ldots, s_{n}).
\end{equation}
Eq.~(\ref{eq:0beta}) allows to extend $G$ to a $\beta$-periodic function on $\mathbb{R}^{n}$, that we shall continue to denote by $G$. Furthermore, the periodic extension is such that, for all $\sigma \in \mathbb{R}$:
\begin{equation}\label{eq:pluss}
G(s_{1}, \ldots, s_{i}, \ldots, s_{n}) = G(s_{1} + \sigma, \ldots, s_{i} + \sigma, \ldots s_{n} + \sigma).
\end{equation}
\end{proposition}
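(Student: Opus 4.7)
The plan is to first reduce the definition of $G$ to a (bosonic) time-ordered Euclidean correlation, then prove $\beta$-periodicity in each variable separately via the KMS identity, and finally leverage this periodicity together with invariance of the Gibbs state under the Euclidean evolution to obtain the diagonal translation invariance. Since the proposition is applied in the proof of Proposition \ref{prop:conn} to gauge-invariant operators, which are automatically even in the fermionic creation and annihilation operators, the sum over permutations in (\ref{eq:gdef}) reproduces the bosonic time-ordering of (\ref{eq:Tord1}); I would fix this convention at the outset and write $G(s_1,\ldots,s_n) = \langle \mathbf{T}\,\gamma_{s_1}(\mathcal{O}_1)\cdots\gamma_{s_n}(\mathcal{O}_n)\rangle_{\beta,\mu,L}$ for pairwise distinct times.

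To establish (\ref{eq:0beta}), I would fix $i=1$ without loss of generality. When $s_1=\beta$, the indicator in (\ref{eq:gdef}) is non-zero only for the permutation $\pi$ with $\pi(1)=1$, so $G(\beta,s_2,\ldots,s_n) = \langle \gamma_\beta(\mathcal{O}_1)\,\mathcal{X}\rangle_{\beta,\mu,L}$ with $\mathcal{X} := \gamma_{s_{\tau(2)}}(\mathcal{O}_{\tau(2)})\cdots\gamma_{s_{\tau(n)}}(\mathcal{O}_{\tau(n)})$, where $\tau$ is the permutation that orders $s_2,\ldots,s_n$ decreasingly. Writing $\gamma_\beta(\mathcal{O}_1) = e^{\beta(\mathcal{H}-\mu\mathcal{N})}\mathcal{O}_1 e^{-\beta(\mathcal{H}-\mu\mathcal{N})}$ and using cyclicity of the trace gives $\langle \gamma_\beta(\mathcal{O}_1)\mathcal{X}\rangle_{\beta,\mu,L} = \langle \mathcal{X}\gamma_0(\mathcal{O}_1)\rangle_{\beta,\mu,L}$, which is the KMS identity (\ref{eq:KMS}) applied at the boundary. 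On the other hand, when $s_1=0$ the only non-vanishing indicator corresponds to $\pi$ with $\pi(n)=1$, so $G(0,s_2,\ldots,s_n) = \langle \mathcal{X}\gamma_0(\mathcal{O}_1)\rangle_{\beta,\mu,L}$; the two expressions coincide, proving (\ref{eq:0beta}). Repeating the argument for each $i$ shows that $G$ is separately $\beta$-periodic in every variable, and hence admits a well-defined $\beta$-periodic extension to $\mathbb{R}^n$.

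For the diagonal translation invariance (\ref{eq:pluss}), I would first consider $\sigma\in\mathbb{R}$ small enough that, for the configuration $(s_1,\ldots,s_n)\in[0,\beta)^n$ at hand, all shifted times $s_j+\sigma$ remain in $[0,\beta)$ and pairwise distinct. In this regime the ordering selected by the indicator in (\ref{eq:gdef}) is unchanged, and the exponentials factor: $\gamma_{s_{\pi(1)}+\sigma}(\mathcal{O}_{\pi(1)})\cdots\gamma_{s_{\pi(n)}+\sigma}(\mathcal{O}_{\pi(n)}) = e^{\sigma(\mathcal{H}-\mu\mathcal{N})}\bigl[\gamma_{s_{\pi(1)}}(\mathcal{O}_{\pi(1)})\cdots\gamma_{s_{\pi(n)}}(\mathcal{O}_{\pi(n)})\bigr]e^{-\sigma(\mathcal{H}-\mu\mathcal{N})}$. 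Since $\rho_{\beta,\mu,L}$ commutes with $\mathcal{H}-\mu\mathcal{N}$, the Gibbs expectation is invariant under this conjugation by cyclicity of the trace, giving $G(s_1+\sigma,\ldots,s_n+\sigma)=G(s_1,\ldots,s_n)$ for such small $\sigma$. For arbitrary $\sigma\in\mathbb{R}$ I would chain together finitely many such small increments, using the separate $\beta$-periodicity established in the previous step to reset any coordinate that crosses the boundary $\beta$ back into $[0,\beta)$ before continuing with the next increment.

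The main obstacle will be the bookkeeping in the translation step, specifically the interplay between the separate periodicities and the diagonal shift when some $s_j+\sigma$ crosses $\beta$: a coordinate that was largest can become smallest after the wrap-around, changing which permutation is selected by the indicator function. The mechanism to handle this is exactly (\ref{eq:0beta}) applied in the offending coordinate, which by construction produces the correctly reordered expression via the KMS boundary identity established above. Once this is recognized, the iterative small-increment argument closes without further input and delivers (\ref{eq:pluss}) for all $\sigma\in\mathbb{R}$.
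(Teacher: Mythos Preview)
Your proof is correct and uses the same two ingredients as the paper's: the KMS identity for the periodicity step, and invariance of the Gibbs state under the Euclidean flow for the diagonal shift. The argument for (\ref{eq:0beta}) is identical to the paper's.

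For (\ref{eq:pluss}) the paper takes a slightly more direct route than your incremental one. Instead of chaining small shifts and invoking periodicity each time a coordinate crosses the boundary, the paper observes in one stroke that on the fundamental domain $G(s_1,\ldots,s_n)=G([s_1]_\beta-[s_n]_\beta,\ldots,[s_{n-1}]_\beta-[s_n]_\beta,0)$, i.e.\ $G$ depends only on the differences of the representatives; this is exactly your small-$\sigma$ conjugation argument applied once with $\sigma=-[s_n]_\beta$. Then a diagonal shift $s_i\to s_i+\sigma$ changes each difference $[s_i]_\beta-[s_n]_\beta$ only by an integer multiple of $\beta$, which the already-established periodicity absorbs. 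This sidesteps the boundary-crossing bookkeeping you flag in your last paragraph, but the underlying content is the same.
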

\begin{remark}\label{rem:period}
\begin{itemize}

\item[(i)] In general, the periodic extension of $G$ to $\mathbb{R}^{n}$ might be discontinuous at equal times, unless the operators commute.

\item[(ii)] Notice that if all operators $\mathcal{O}_{i}$ are even in the fermionic creation and annihilation operators, Eq.~(\ref{eq:gdef}) agrees with the definition of time-ordered correlation function, Eqs. (\ref{eq:Tord1}), (\ref{eq:Tord2}). These facts are mentioned in Remark \ref{rem:T}. Instead, if the operators $\mathcal{O}_{i}$ are odd in the fermionic creation and annihilation operators, the natural definition of time-ordering is defined including the sign of the permutation in the sum (\ref{eq:gdef}), compare with Eq.~(\ref{eq:time}). This gives rise to a $\beta$-antiperiodic function on $\mathbb{R}^{n}$.
\end{itemize}
\end{remark}

\begin{proof} Consider the function $G$ in Eq.~(\ref{eq:gdef}) with $s_{i} = 0$. Since all times are distinct, the only permutations contributing to the sum are those with $\pi(n) = i$. Thus,
\begin{equation}
\begin{split}
&G(s_{1}, \ldots, s_{i-1}, 0, s_{i+1}, \ldots, s_{n}) =\\
& \sum_{\tilde\pi} \mathbbm{1}(s_{\tilde \pi(1)} > s_{\tilde \pi(2)} > \ldots > s_{\tilde \pi(n-1)}) \big\langle \gamma_{s_{\tilde \pi(1)}}(\mathcal{O}_{\tilde \pi(1)}) \cdots \gamma_{s_{\tilde \pi(n-1)}}(\mathcal{O}_{\tilde \pi(n-1)}) O_{i}\big\rangle
\end{split}
\end{equation}
where the sum is over permutations $\{\tilde\pi\}$ of $n-1$ elements. Now, by the KMS identity, Eq.~(\ref{eq:KMS}),
\begin{equation}\label{eq:0beta2}
\begin{split}
&G(s_{1}, \ldots, s_{i-1}, 0, s_{i+1}, \ldots, s_{n}) = \\
&= \sum_{\tilde\pi} \mathbbm{1}(s_{\tilde \pi(1)} >  \ldots > s_{\tilde \pi(n-1)}) \big\langle \gamma_{\beta}(O_{i})\gamma_{s_{\tilde \pi(1)}}(\mathcal{O}_{\tilde\pi(1)}) \cdots \gamma_{s_{\tilde \pi(n-1)}}(\mathcal{O}_{\tilde\pi(n-1)})\big\rangle\\
&= \sum_{\pi} \mathbbm{1}(s_{\pi(1)} >  \ldots > s_{\pi(n)}) \big\langle \gamma_{s_{\pi(1)}}(\mathcal{O}_{\pi(1)}) \cdots \gamma_{s_{\pi(i)}} (O_{\pi(i)}) \cdots \gamma_{s_{\pi(n)}} (O_{\pi(n})\big\rangle
\end{split}
\end{equation}
where in the last step we set $s_{i} = \beta$, and used the fact that all the permutations contributing to the last sum are such that $\pi(1) = i$. The right-hand side of Eq.~(\ref{eq:0beta2}) equals $G(s_{1}, \ldots, s_{i-1}, \beta, s_{i+1}, \ldots, s_{n})$, and this concludes the proof of (\ref{eq:0beta}).

Eq.~(\ref{eq:0beta}) allows to extend $G$ to a periodic function over $\mathbb{R}^{n}$. Let us now prove the time-translation invariance property, Eq.~(\ref{eq:pluss}). By construction, the function $G$ satisfies:
\begin{equation}
G(s_{1}, \ldots, s_{n}) = G([s_{1}]_{\beta}, \ldots, [s_{n}]_{\beta}),
\end{equation}
where $[s_{i}]_{\beta}$ is the representative of $s_{i}$ in $[0,\beta)$, that is $s_{i} = [s_{i}]_{\beta} + m \beta$ for some $m \in \mathbb{Z}$. Without loss of generality, suppose that $[s_{1}]_{\beta} > \ldots > [s_{n}]_{\beta}$; otherwise, relabel times so that this condition holds. Then, from Eq.~(\ref{eq:gdef}) it is easy to see that the function $G$ has the following dependence on times:
\begin{equation}
G(s_{1}, \ldots, s_{n}) = G([s_{1}]_{\beta} - [s_{n}]_{\beta}, \ldots, [s_{n-1}]_{\beta} - [s_{n}]_{\beta}, 0).
\end{equation}
The shift $s_{i} \to s_{i} + \sigma$ for all $i=1,\ldots, n$ changes $[s_{i}]_{\beta} - [s_{n}]_{\beta}$ into $[s_{i}]_{\beta} - [s_{n}]_{\beta}$ plus an integer multiple of $\beta$. By periodicity, this does not affect the value of the function $G$. Thus,
\begin{equation}
G(s_{1}, \ldots, s_{n}) = G(s_{1} + \sigma, \ldots, s_{n} + \sigma),
\end{equation}
which concludes the proof of the proposition.
\end{proof}

\section{Decay of correlations for interacting models}\label{app:mb}

In this appendix we shall discuss the validity of Assumption \ref{ass:dec} for fermionic lattice models. The arguments presented in this section are well-known, and we reproduce them for completeness. For definiteness, we shall consider the following class of many-body Hamiltonians:
\begin{equation}\label{eq:Ham}
\mathcal{H}^{\lambda} = \sum_{{\bf x},{\bf y} \in \Lambda_{L}} a^{*}_{{\bf x}} H({\bf x}; {\bf y}) a_{{\bf y}} + \lambda \sum_{{\bf x},{\bf y} \in \Lambda_{L}} a^{*}_{{\bf x}} a^{*}_{{\bf y}} v({\bf x};{\bf y}) a_{{\bf y}} a_{{\bf x}}, 
\end{equation}
where $H({\bf x}; {\bf y})$ and $v({\bf x}; {\bf y})$ are finite-range. The discussion that follows actually applies essentially unchanged to a larger class of Hamiltonians, obtained replacing the quartic interaction in (\ref{eq:Ham}) by finite-range interactions of arbitrary even degree in the fermionic creation and annihilation operators. Let $\langle \cdot \rangle_{\beta, \mu, L}^{\lambda}$ be the Gibbs state of the system:
\begin{equation}\label{eq:Glambda}
\langle \mathcal{O} \rangle_{\beta, \mu, L}^{\lambda} = \frac{\Tr\, e^{-\beta (\mathcal{H}^{\lambda} - \mu \mathcal{N})} \mathcal{O} }{\Tr e^{-\beta (\mathcal{H}^{\lambda} - \mu \mathcal{N})}}.
\end{equation}
For $|\lambda|$ small, the Gibbs state can be expanded in power series around the non-interacting one, following Section \ref{sec:compa}. The main advantage in doing this is that the non-interacting Gibbs state is quasi-free, which means that all Euclidean correlations can be computed starting from the Euclidean two-point function using the Wick rule. Let $t, t'\in [0,\beta)$, $t\neq t'$. Let us denote by $\langle \cdot \rangle_{\beta, \mu, L}^{0}$ the non-interacting Gibbs state $(\lambda = 0)$. We define the non-interacting two-point function as:
\begin{equation}\label{eq:g2}
g_{2}(t,{\bf x}; t', {\bf y}) := \langle {\bf T} \gamma_{t}(a_{{\bf x}}) \gamma_{t'} (a^{*}_{{\bf y}}) \rangle^{0}_{\beta, \mu, L}.
\end{equation}
At equal times, the two-point function is defined by normal ordering:
\begin{equation}\label{eq:g22}
g_{2}(t,{\bf x}; t, {\bf y}) = - \langle a^{*}_{{\bf y}} a_{{\bf x}} \rangle^{0}_{\beta, \mu, L}.
\end{equation}
Eq.~(\ref{eq:g2}) is extended to an antiperiodic function over all $t,t' \in \mathbb{R}$ with period $\beta$. The next proposition gives the explicit expression of the two-point function.
\begin{proposition}[Non-interacting two-point function]\label{prp:prop} Let $0\leq t, t' < \beta$. Then:
\begin{equation}\label{eq:2pt}
g_{2}(t,{\bf x}; t', {\bf y}) = \mathbbm{1}(t > t') \frac{e^{-(t-t') (H - \mu)}}{1 + e^{-\beta (H - \mu)}}({\bf x}; {\bf y}) - \mathbbm{1}(t \leq t') \frac{e^{-(t-t') (H - \mu)}}{1 + e^{\beta (H - \mu)}}({\bf x}; {\bf y}).
\end{equation}
\end{proposition}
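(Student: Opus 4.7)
The plan is to diagonalize the quadratic Hamiltonian, use the fact that the non-interacting Gibbs state factorizes over modes, and then assemble the resulting mode sums back into an operator-valued kernel using spectral calculus.

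First, I would diagonalize $H$: since $H$ is self-adjoint on the finite-dimensional space $\ell^{2}(\Lambda_{L})$, let $\{\phi_{k}\}$ be an orthonormal basis of eigenvectors with eigenvalues $\varepsilon_{k}$. Define new fermionic operators
\begin{equation*}
b_{k} := \sum_{{\bf x}\in\Lambda_{L}} \overline{\phi_{k}({\bf x})}\, a_{{\bf x}}\,,\qquad b_{k}^{*} := \sum_{{\bf x}\in\Lambda_{L}} \phi_{k}({\bf x})\, a^{*}_{{\bf x}}\,.
\end{equation*}
These satisfy canonical anticommutation relations, and by construction
\begin{equation*}
\mathcal{H}^{0}-\mu\mathcal{N} = \sum_{k}(\varepsilon_{k}-\mu)\, b_{k}^{*} b_{k}\,, \qquad a_{{\bf x}} = \sum_{k}\phi_{k}({\bf x})\, b_{k}\,.
\end{equation*}
As an immediate consequence, the imaginary-time evolution acts as $\gamma_{t}(b_{k})=e^{-t(\varepsilon_{k}-\mu)}b_{k}$ and $\gamma_{t}(b_{k}^{*})=e^{t(\varepsilon_{k}-\mu)}b_{k}^{*}$.

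Next, I would use that the Gibbs state $\langle\cdot\rangle_{\beta,\mu,L}^{0}$ factorizes over the modes $k$ (since $\mathcal{H}^{0}-\mu\mathcal{N}$ is a sum of commuting terms $(\varepsilon_{k}-\mu)b_{k}^{*}b_{k}$), giving the Fermi-Dirac distribution
\begin{equation*}
\langle b_{k}^{*}b_{k'} \rangle_{\beta,\mu,L}^{0} = \delta_{k,k'}\,\frac{1}{1+e^{\beta(\varepsilon_{k}-\mu)}}\,, \qquad \langle b_{k}b_{k'}^{*}\rangle_{\beta,\mu,L}^{0} = \delta_{k,k'}\,\frac{1}{1+e^{-\beta(\varepsilon_{k}-\mu)}}\,.
\end{equation*}

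Now consider the two cases. For $t>t'$, the time-ordering prescription (\ref{eq:time}) gives ${\bf T}\gamma_{t}(a_{{\bf x}})\gamma_{t'}(a^{*}_{{\bf y}})=\gamma_{t}(a_{{\bf x}})\gamma_{t'}(a^{*}_{{\bf y}})$ with no sign. Expanding $a_{{\bf x}},a^{*}_{{\bf y}}$ in the $b_{k}$ basis and taking the expectation I would obtain
\begin{equation*}
g_{2}(t,{\bf x};t',{\bf y}) = \sum_{k}\phi_{k}({\bf x})\overline{\phi_{k}({\bf y})}\, \frac{e^{-(t-t')(\varepsilon_{k}-\mu)}}{1+e^{-\beta(\varepsilon_{k}-\mu)}}\,,
\end{equation*}
which is exactly the kernel of the operator $e^{-(t-t')(H-\mu)}/(1+e^{-\beta(H-\mu)})$ by the functional calculus. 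For $t<t'$, the time-ordering produces a minus sign from the single transposition of fermionic operators, and the analogous computation yields
\begin{equation*}
g_{2}(t,{\bf x};t',{\bf y}) = -\sum_{k}\phi_{k}({\bf x})\overline{\phi_{k}({\bf y})}\,\frac{e^{-(t-t')(\varepsilon_{k}-\mu)}}{1+e^{\beta(\varepsilon_{k}-\mu)}}\,,
\end{equation*}
which is the kernel of $-e^{-(t-t')(H-\mu)}/(1+e^{\beta(H-\mu)})$. Finally I would check the equal-time case $t=t'$ against the normal-ordered definition (\ref{eq:g22}), which coincides with the $t\le t'$ formula and fixes the convention.

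There is no serious obstacle here; the statement is essentially a bookkeeping exercise. The only minor care required is to keep track of (i) the sign convention in the fermionic time-ordering (\ref{eq:time}), (ii) the equal-time normal-ordering convention (\ref{eq:g22}), which has to match the $t\le t'$ branch of the formula, and (iii) the passage from mode sums $\sum_{k}\phi_{k}({\bf x})\overline{\phi_{k}({\bf y})}F(\varepsilon_{k})$ to the operator kernel $F(H)({\bf x};{\bf y})$, which is just spectral calculus on the finite-dimensional space $\ell^{2}(\Lambda_{L})$.
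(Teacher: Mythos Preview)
Your proposal is correct and follows exactly the approach the paper indicates: the paper's own proof simply states that the identity follows by direct computation of the trace after passing to the basis of eigenstates of $H$, and refers to standard textbooks. Your diagonalization via mode operators $b_{k}$, computation of the Fermi--Dirac occupations, and reassembly into the operator kernel $F(H)({\bf x};{\bf y})$ is precisely this textbook computation spelled out in detail, with the sign and equal-time conventions handled correctly.
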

\begin{proof}
	Eq.~(\ref{eq:2pt}) can be proved by direct computation of the trace involved in the definition of the non-interacting Gibbs state, representing the Fock space in the basis of Slater determinants associated with the eigenstates of the Hamiltonian $H$. The computation can be found in standard textbooks in condensed matter physics, see {\it e.g.} \cite{FW, NO}. 
\end{proof}
Next, we collect useful decay estimates for the two-point function.
\begin{proposition}[Bounds for the two-point function]\label{prop:2pt} There exist $C_{\beta}, c_{\beta} > 0$ such that the following bound holds true:
\begin{equation}\label{eq:bdg1}
|g_{2}(t,{\bf x}; t', {\bf y}) | \leq C_{\beta} e^{-c_{\beta} \| {\bf x} - {\bf y} \|_{L}}\qquad \text{for all $L>0$.}
\end{equation}
Moreover, suppose that $\mu \notin \sigma(H)$ and $\text{dist}(\mu, \sigma(H)) \geq \delta$ with $\delta > 0$ uniformly in $L$. Then, the constants $C_{\beta}, c_{\beta}$ can be chosen uniformly in $\beta$:
\begin{equation}\label{eq:bdg2}
|g_{2}(t,{\bf x}; t', {\bf y}) | \leq Ce^{-c(\|{\bf x} - {\bf y} \|_{L} + | t - t' |_{\beta})}.
\end{equation}
\end{proposition}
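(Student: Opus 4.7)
The plan is to write $g_{2}$ via the Borel functional calculus as $F_{\tau,\beta}(H-\mu)(\mathbf{x};\mathbf{y})$ with $\tau = t-t'$ and $F_{\tau,\beta}$ the scalar function read off \eqref{eq:2pt}, and then to combine the Dunford contour representation
\[
g_{2}(t,\mathbf{x};t',\mathbf{y}) = \frac{1}{2\pi i}\oint_{\Gamma} dz\, F_{\tau,\beta}(z)\, R(z)(\mathbf{x};\mathbf{y}),\qquad R(z) = (z-(H-\mu))^{-1},
\]
with the Combes--Thomas estimate
\[
|R(z)(\mathbf{x};\mathbf{y})| \leq \frac{C}{d}\, e^{-c\,\min(d,1)\,\|\mathbf{x}-\mathbf{y}\|_{L}},\qquad d := \mathrm{dist}(z,\sigma(H-\mu)),
\]
where $C,c>0$ depend only on the range and norm of $H$ (both bounded uniformly in $L$ since $H$ is finite-range with bounded coefficients). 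For $\tau\in[0,\beta)$ one has $F_{\tau,\beta}(E) = e^{-\tau E}/(1+e^{-\beta E})$, with the other case handled analogously; $F_{\tau,\beta}$ extends to a meromorphic function on $\mathbb{C}$ with simple poles exactly at the Matsubara points $\pm i\omega_{n}$, $\omega_{n} = (2n+1)\pi/\beta$, $n\in\mathbb{N}$, and satisfies $|F_{\tau,\beta}(E)|\leq 1$ for $E\in\mathbb{R}$.

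For \eqref{eq:bdg1} I would take $\Gamma$ to be the boundary of the thin rectangle $[-M,M]\times[-\pi/(2\beta),\pi/(2\beta)]$ with $M>\|H\|+|\mu|$ chosen $L$-independent. This contour encloses $\sigma(H-\mu)$, stays at distance $\geq \pi/(2\beta)$ from the spectrum, and avoids the nearest Matsubara poles $\pm i\pi/\beta$. Its length is bounded uniformly in $L$, $|F_{\tau,\beta}|$ is bounded on it by a $\beta$-dependent constant, and Combes--Thomas with $d\geq\pi/(2\beta)$ yields $|g_{2}|\leq C_{\beta}\,e^{-c_{\beta}\|\mathbf{x}-\mathbf{y}\|_{L}}$ with $L$-independent but $\beta$-dependent constants, as claimed.

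For \eqref{eq:bdg2}, the gap gives $\sigma(H-\mu)\subset(-\infty,-\delta]\cup[\delta,+\infty)$, and I would take $\Gamma=\Gamma_{+}\cup\Gamma_{-}$ with $\Gamma_{\pm}$ wrapping tightly around the two spectral branches at distance $\delta/2$. The new input is temporal decay of $F_{\tau,\beta}$: on $\Gamma_{+}$ (where $\mathrm{Re}\,z\geq\delta/2$) one estimates $|F_{\tau,\beta}(z)|\lesssim e^{-\tau\delta/2}$ directly for $\tau\in[0,\beta)$; on $\Gamma_{-}$ (where $\mathrm{Re}\,z\leq-\delta/2$) the algebraic identity
\[
\frac{e^{-\tau z}}{1+e^{-\beta z}} \;=\; \frac{e^{(\beta-\tau)z}}{1+e^{\beta z}}
\]
converts the estimate into $|F_{\tau,\beta}(z)|\lesssim e^{-(\beta-\tau)\delta/2}$. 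Since $|t-t'|_{\beta} = \min(\tau,\beta-\tau)$ for $\tau\in[0,\beta]$, the two branches combine to give a factor $e^{-(\delta/2)\,|t-t'|_{\beta}}$. Combes--Thomas on $\Gamma_{\pm}$ is uniform in $\beta$ because $d\geq\delta/2$, and the Matsubara poles lie on the imaginary axis inside the gap strip $|\mathrm{Re}\,z|<\delta$, so they are cleanly separated from $\Gamma_{\pm}$. Putting the spatial and temporal factors together yields the desired $|g_{2}|\leq C\,e^{-c(\|\mathbf{x}-\mathbf{y}\|_{L}+|t-t'|_{\beta})}$ with constants depending only on $\delta$ and on $H$.

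The main technical point is the bookkeeping in the gapped case --- one must verify that the two-branch contour really delivers decay in the \emph{torus} distance $|t-t'|_{\beta}$ and not in the naive $|t-t'|$, which is dictated by the antiperiodicity of $g_{2}$ in $\tau$ with period $\beta$ and is encoded precisely by the algebraic identity used above. Everything else is a routine application of the holomorphic functional calculus together with Combes--Thomas, both of which apply in our setting because $H$ is a finite-range lattice operator with $L$-uniform norm.
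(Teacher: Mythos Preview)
Your proposal is correct and follows essentially the same approach as the paper's proof: the Dunford contour representation combined with Combes--Thomas, with the thin rectangle at imaginary height $\pm\pi/(2\beta)$ for \eqref{eq:bdg1} and the two-branch contour around the gapped spectral components for \eqref{eq:bdg2}, using the same algebraic identity on the left branch to extract the $\beta-\tau$ decay. The only cosmetic difference is that the paper writes the resolvent as $(z-H)^{-1}$ with $z-\mu$ in the scalar factor, while you shift variables and work with $(z-(H-\mu))^{-1}$; the arguments are otherwise identical.
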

\begin{proof} Let us start by proving (\ref{eq:bdg1}). Suppose that $t>t'$, the other case can be studied in the same way. The starting point is the following formula for the two-point function:
\begin{equation}\label{eq:complexg}
\begin{split}
g_{2}(t,{\bf x}; t', {\bf y}) &= \frac{e^{-(t-t') (H - \mu)}}{1 + e^{-\beta (H - \mu)}}({\bf x}; {\bf y}) \\
&= \frac{1}{2\pi i} \int_{\mathcal{C}} dz\, \frac{e^{-(z - \mu)(t - t')}}{1 + e^{-\beta (z - \mu)}} \frac{1}{z - H}({\bf x}; {\bf y}),
\end{split}
\end{equation}
where the first identity follows from Proposition \ref{prp:prop}, and in the second equality the complex path $\mathcal{C}$ is a rectangle that encircles the spectrum of $H$, and that crosses the imaginary axis at $\text{Im}\, z = \pm \pi/(2\beta)$. Thus, the path does not enclose any of the poles of the Fermi-Dirac function, which are given by $z - \mu = i \frac{2\pi}{\beta} (n + \frac{1}{2})$ with $n\in \mathbb{Z}$, and it stays away from the spectrum of $H$. By construction, the only singularities encircled by the path $\mathcal{C}$ correspond to the eigenvalues of $H$. Since $H$ is bounded uniformly in $L$, the length of the complex path is also bounded uniformly in $L$; in particular, we can choose $\mathcal{C}$ such that for all $z\in \mathcal{C}$ we have $\text{dist}(z, \sigma(H)) \geq \pi/(2\beta)$. The estimate (\ref{eq:bdg1}) easily follows from the Combes-Thomas estimate for the Green's function, see {\it e.g.} \cite{AW}.

Let us now suppose that $\text{dist}(\mu, \sigma(H)) > \delta$, and let us prove the estimate (\ref{eq:bdg2}). We can use a complex representation (\ref{eq:complexg}), where now the path $\mathcal{C}$ splits into the disjoint union of two non-intersecting paths, $\mathcal{C}_{-}$ and $\mathcal{C}_{+}$, that encircle the spectrum of $H$ on the left or on the right of $\mu$, respectively. The paths can be chosen so that the distance from any point $z\in \mathcal{C}$ to the poles is bounded below by a constant proportional to the spectral gap, uniformly in $\beta$. We write:
\begin{equation}
\begin{split}
g_{2}(t,{\bf x}; t', {\bf y}) &= \frac{1}{2\pi i} \int_{\mathcal{C}_{-}} dz\, \frac{e^{-(z - \mu)(t - t')}}{1 + e^{-\beta (z - \mu)}} \frac{1}{z - H}({\bf x}; {\bf y})\\&\quad + \frac{1}{2\pi i} \int_{\mathcal{C}_{+}} dz\, \frac{e^{-(z - \mu)(t - t')}}{1 + e^{-\beta (z - \mu)}} \frac{1}{z - H}({\bf x}; {\bf y}).
\end{split}
\end{equation}
For $z\in \mathcal{C}_{+}$:
\begin{equation}
\begin{split}
\Big|\frac{e^{-(z - \mu)(t - t')}}{1 + e^{-\beta (z - \mu)}}\Big| &\leq C e^{-\text{Re}\, (z - \mu)(t-t')} \\
&\leq Ce^{-c(t-t')}.
\end{split}
\end{equation}
Instead, for $z\in \mathcal{C}_{-}$:
\begin{equation}
\begin{split}
\Big|\frac{e^{-(z - \mu)(t - t')}}{1 + e^{-\beta (z - \mu)}}\Big| &\leq C e^{|\text{Re}\, (z - \mu)|(t-t' - \beta)} \\
&\leq Ce^{-c(\beta - (t - t'))}.
\end{split}
\end{equation}
All together, for $z\in \mathcal{C}$:
\begin{equation}
\Big|\frac{e^{-(z - \mu)(t - t')}}{1 + e^{-\beta (z - \mu)}}\Big| \leq Ce^{-c|t-t'|_{\beta}}.
\end{equation}
The exponential decay in space follows from a Combes-Thomas estimate for the Green's function in the complex integral, using that now the distance from $z$ to the spectrum of $H$ is bounded uniformly in $\beta$. This concludes the proof.
\end{proof}
\subsection{Check of Assumption \ref{ass:dec} for non-interacting fermions} 
As a warm up, let us check Assumption \ref{ass:dec} for non-interacting models, whose Hamiltonian is given by Eq.~(\ref{eq:Ham}) with $\lambda = 0$. In this case, the bounds of Proposition \ref{prop:2pt}, combined with Wick's rule, are enough to show that Assumption \ref{ass:dec} is satisfied for observables $\mathcal{O}^{(i)}_{X_{i}}$ that are quadratic in the fermionic operators. This allows, in particular, to fulfill the assumptions of Theorem \ref{thm:main} for non-interacting fermions, in the presence of a quadratic, time-dependent perturbation $g(\eta t) \varepsilon \mathcal{P}$. To see this, we write:
\begin{equation}
\mathcal{O}^{(i)}_{X_{i}} = \sum_{{\bf x}, {\bf y} \in X_{i}} O_{i}({\bf x}; {\bf y}) a^{*}_{{\bf x}} a_{{\bf y}}
\end{equation}
for a finite-range kernel $O_{i}({\bf x}; {\bf y})$ such that $|O_{i}({\bf x}; {\bf y})| \leq C$. Consider:
\begin{equation}
\begin{split}
&\big\langle {\bf T} \gamma_{t_{1}}(\mathcal{O}^{(1)}_{X_1});  \cdots; \gamma_{t_{n}}(\mathcal{O}^{(n)}_{X_n}); \mathcal{O}^{(n+1)}_{X_{n+1}} \big\rangle^{0}_{\beta, \mu, L}  \\
&\qquad = \sum_{{\bf x}_{i}, {\bf y}_{i} \in X_{i}} O_{1}({\bf x}_{1}; {\bf y}_{1})\cdots O_{n+1}({\bf x}_{n+1}; {\bf y}_{n+1}) \\&\qquad\qquad \cdot\big\langle {\bf T} \gamma_{t_{1}}(a^{*}_{{\bf x}_{1}} a_{{\bf y}_{1}});  \cdots; \gamma_{t_{n}}(a^{*}_{{\bf x}_{n}} a_{{\bf y}_{n}});  a^{*}_{{\bf x}_{n+1}} a_{{\bf y}_{n+1}} \big\rangle^{0}_{\beta, \mu, L}.
\end{split}
\end{equation}
The cumulant on the right-hand side can be evaluated using the fermionic Wick's rule, in terms of ``ring diagrams''. We have:
\begin{equation}\label{eq:ring}
\begin{split}
&\big\langle {\bf T} \gamma_{t_{1}}(a^{*}_{{\bf x}_{1}} a_{{\bf y}_{1}});  \cdots; \gamma_{t_{n}}(a^{*}_{{\bf x}_{n}} a_{{\bf y}_{n}});  a^{*}_{{\bf x}_{n+1}} a_{{\bf y}_{n+1}} \big\rangle^{0}_{\beta, \mu, L} \\
&\qquad = \sum_{\pi} g_{2}(\pi(1), \pi(2)) g_{2}(\pi(2), \pi(3))  \cdots g_{2}(\pi(n+1), \pi(1)) 
\end{split}
\end{equation}
where the sum is over permutations of $\{1, \ldots, n+1\}$ such that $\pi(1) = 1$ and where we used the short-hand notation:
\begin{equation}
g_{2}(\pi(i), \pi(j)) :=  g_{2}(t_{\pi(i)},{\bf x}_{\pi(i)}; t_{\pi(j)}, {\bf y}_{\pi(j)})
\end{equation}
with the understanding that $t_{n+1}= 0$. Proposition \ref{prop:2pt} can be used to control the space-time decay of the right-hand side of (\ref{eq:ring}). Let us first control the sums over the lattice sites. For every entry in the sum over permutations, we are led to consider:
\begin{equation}\label{eq:sum}
\sum_{{\bf x}_{i}, {\bf y}_{i} \in X_{i}} e^{-c \| {\bf x}_{\pi(1)} - {\bf y}_{\pi(2)} \|_{L}} e^{-c \| {\bf x}_{\pi(2)} - {\bf y}_{\pi(3)} \|_{L}} \cdots e^{-c \| {\bf x}_{\pi(n+1)} - {\bf y}_{\pi(1)} \|_{L}},
\end{equation}
where the constant $c$ might depend on $\beta$, if we do not have a spectral gap for $H$. Since the sets $X_{i}$ have bounded radius uniformly in $L$, we can estimate the sum (\ref{eq:sum}) by:
\begin{equation}
C^{n} \sum_{{\bf z}_{i} \in X_{i}} e^{-c \| {\bf z}_{\pi(1)} - {\bf z}_{\pi(2)} \|_{L}} e^{-c \| {\bf z}_{\pi(2)} - {\bf z}_{\pi(3)} \|_{L}} \cdots e^{-c \| {\bf z}_{\pi(n+1)} - {\bf z}_{\pi(1)} \|_{L}},
\end{equation}
where the sum runs over ${\bf z}_{i} \in X_{i}$ for $i = 1,\ldots, n+1$. Next, we estimate the sum as:
\begin{equation}
\begin{split}
&\sum_{\substack{\ell_{1}, \ldots, \ell_{n+1} \\ \ell_{i} \geq \text{dist}(X_{\pi(i+1)},X_{\pi(i)}) \\ \ell_{n+1} \geq \text{dist}(X_{\pi(n+1)},X_{\pi(1)})}} e^{-c\ell_{1} - c\ell_{2} - \ldots - c\ell_{n+1}} \\&\qquad \cdot \sum_{{\bf z}_{i} \in X_{i}} \mathbbm{1}( \| {\bf z}_{\pi(1)} - {\bf z}_{\pi(2)} \|_{L} = \ell_{1}, \ldots, \| {\bf z}_{\pi(n+1)} - {\bf z}_{\pi(1)} \|_{L} = \ell_{n+1} ) \\
&\leq C^{n+1} \sum_{\substack{\ell_{1}, \ldots, \ell_{n+1} \\ \ell_{i} \geq \text{dist}(X_{\pi(i+1)},X_{\pi(i)}) \\ \ell_{n+1} \geq \text{dist}(X_{\pi(n+1)},X_{\pi(1)})}} e^{-c\ell_{1} - c\ell_{2} - \ldots - c\ell_{n+1}} \\ 
&\leq C^{n+1} e^{-c d(\pi(1), \pi(2)) - \ldots- cd(\pi(n+1), \pi(1))}
\end{split}
\end{equation}
where $d(\pi(i), \pi(j)) := \text{dist}(X_{\pi(i)}, X_{\pi(j)})$. Therefore, 
\begin{equation}
\begin{split}
&\big|\big\langle {\bf T} \gamma_{t_{1}}(\mathcal{O}^{(1)}_{X_1});  \cdots; \gamma_{t_{n}}(\mathcal{O}^{(n)}_{X_n}); \mathcal{O}^{(n+1)}_{X_{n+1}} \big\rangle^{0}_{\beta, \mu, L}\big| \\
&\qquad \leq C^{n+1} \sum_{\pi} e^{-c d(\pi(1), \pi(2)) - \ldots- cd(\pi(n+1), \pi(1))} F_{\pi}(\underline{t})
\end{split}
\end{equation}
where $F_{\pi}(\underline{t}) = 1$ if no spectral gap is present, or otherwise:
\begin{equation}\label{eq:Fpi}
F_{\pi}(\underline{t}) = e^{-c | t_{\pi(1)} - t_{\pi(2)} |_{\beta} - \ldots - c| t_{\pi(n+1)} - t_{\pi(1)} |_{\beta}}.
\end{equation}
Let us now prove Eq.~(\ref{eq:assA}). We have, for some $R>0$:
\begin{equation}\label{eq:Fpi2}
\begin{split}
&\int_{[0,\beta]^{n}} d\underline{t}\, (1+|\underline{t}|_{\beta}) \sum_{X_{i} \subseteq \Lambda_{L}}  \big| \big\langle {\bf T} \gamma_{t_{1}}(\mathcal{O}^{(1)}_{X_1});  \cdots; \gamma_{t_{n}}(\mathcal{O}^{(n)}_{X_n}); \mathcal{O}^{(n+1)}_{X} \big\rangle^{0}_{\beta, \mu, L} \big| \\
&\leq C^{n+1} \int_{[0,\beta]^{n}} d\underline{t}\, (1+|\underline{t}|_{\beta}) \\ 
&\qquad \cdot \sum_{\substack{X_{i} \subseteq \Lambda_{L} \\ \text{diam}(X_{i}) \leq R}}  \sum_{\pi} e^{-c d(\pi(1), \pi(2)) - \ldots- cd(\pi(n+1), \pi(1))} F_{\pi}(\underline{t}).
\end{split}
\end{equation}
Since the number of sets $X_{i}$ with bounded diameter and containing a given point ${\bf z}_{i}$ is bounded, we can estimate the $X_{i}$ sum as:
\begin{equation}\label{eq:sumX}
\begin{split}
&K^{n+1} \sum_{{\bf z}_{1}, \ldots, {\bf z}_{n}} e^{-c \| {\bf z}_{\pi(1)} - {\bf z}_{\pi(2)} \|_{L}} e^{-c \| {\bf z}_{\pi(2)} - {\bf z}_{\pi(3)} \|_{L}} \cdots e^{-c \| {\bf z}_{\pi(n+1)} - {\bf z}_{\pi(1)} \|_{L}} \\
& = K^{n+1} \sum_{\ell_{1}, \ldots, \ell_{n+1}} e^{-c \ell_{1} - c\ell_{2} - \ldots - c\ell_{n+1}} \\ &\qquad\cdot\sum_{{\bf z}_{1}, \ldots, {\bf z}_{n}} \mathbbm{1}(\| {\bf z}_{\pi(1)} - {\bf z}_{\pi(2)} \|_{L} = \ell_{1}, \ldots, \| {\bf z}_{\pi(n+1)} - {\bf z}_{\pi(1)} \|_{L} = \ell_{n+1}),
\end{split}
\end{equation}
where ${\bf z}_{n+1} = {\bf x} \in X$.  A crude bound for the last sum is $C^{n} \ell_{1}^{d} \cdots \ell_{n}^{d}$, uniformly in the volume of the system (remember that the point ${\bf z}_{n+1}$ is fixed). Plugging this estimate in (\ref{eq:sumX}), we get:
\begin{equation}
\begin{split}
&K^{n+1} \sum_{{\bf z}_{1}, \ldots, {\bf z}_{n}} e^{-c \| {\bf z}_{\pi(1)} - {\bf z}_{\pi(2)} \|_{L}} e^{-c \| {\bf z}_{\pi(2)} - {\bf z}_{\pi(3)} \|_{L}} \cdots e^{-c \| {\bf z}_{\pi(n+1)} - {\bf z}_{\pi(1)} \|_{L}} \\
&\leq K^{n+1} \sum_{\ell_{1}, \ldots, \ell_{n+1} \geq 0} e^{-c \ell_{1} - c\ell_{2} - \ldots - c\ell_{n+1}} \ell_{1}^{d} \cdots \ell_{n}^{d} \\
&\leq \widetilde{K}^{n+1}. 
\end{split}
\end{equation}
Thus, we obtain the following bound for the cumulant:
\begin{equation}\label{eq:nonintbd}
\begin{split}
&\int_{[0,\beta]^{n}} d\underline{t}\, (1+|\underline{t}|_{\beta}) \sum_{X_{i} \subseteq \Lambda_{L}}  \big| \big\langle {\bf T} \gamma_{t_{1}}(\mathcal{O}^{(1)}_{X_1});  \cdots; \gamma_{t_{n}}(\mathcal{O}^{(n)}_{X_n}); \mathcal{O}^{(n+1)}_{X} \big\rangle_{\beta, \mu, L}^{0} \big| \\
&\leq \widetilde{C}^{n+1}\sum_{\pi} \int_{[0,\beta]^{n}} d\underline{t}\, (1+|\underline{t}|_{\beta}) F_{\pi}(\underline{t}).
\end{split}
\end{equation}
The integral in the right-hand side is bounded by a power of $\beta$, if no spectral gap is present. Instead, if $\mu$ lies in a spectral gap we have, recalling Eq.~(\ref{eq:Fpi}):
\begin{equation}\label{eq:tint}
\int_{[0,\beta]^{n}} d\underline{t}\, (1+|\underline{t}|_{\beta})F_{\pi}(\underline{t}) = \int_{[0,\beta]^{n}} d\underline{t}\, (1+|\underline{t}|_{\beta}) e^{-c | t_{\pi(1)} - t_{\pi(2)} |_{\beta} - \ldots - c| t_{\pi(n+1)} - t_{\pi(1)} |_{\beta}}
\end{equation}
with the understanding that $t_{n+1} = 0$. Recall that $|\underline{t}|_{\beta} = \sum_{i} |t_{i}|_{\beta}$. In the permutation $\pi(1), \pi(2), \ldots, \pi(n+1)$, suppose that $i = \pi(k)$ and $n+1 = \pi(j)$ and that $j>k$ (the other case is analogous). Then, we write:
\begin{equation}
\begin{split}
|t_{i}|_{\beta}  &=  |t_{i} - t_{n+1}|_{\beta} \\
&\leq | t_{\pi(k)} - t_{\pi(k+1)} |_{\beta} + | t_{\pi(k+1)} - t_{\pi(k+2)} |_{\beta} + \ldots + | t_{\pi(j-1)} - t_{\pi(j)} |_{\beta}\\
&\leq | t_{\pi(1)} - t_{\pi(2)} |_{\beta}  + \ldots + | t_{\pi(n+1)} - t_{\pi(1)} |_{\beta}.
\end{split}
\end{equation}
Plugging this into (\ref{eq:tint}), we find:
\begin{equation}
\begin{split}
&\int_{[0,\beta]^{n}} d\underline{t}\, (1+|\underline{t}|_{\beta})F_{\pi}(\underline{t}) \\
&\leq \int_{[0,\beta]^{n}} d\underline{t}\, ( 1 + n (| t_{\pi(1)} - t_{\pi(2)} |_{\beta}  + \ldots + | t_{\pi(n+1)} - t_{\pi(1)} |_{\beta}) ) \\& \qquad \cdot e^{-c | t_{\pi(1)} - t_{\pi(2)} |_{\beta} - \ldots - c| t_{\pi(n+1)} - t_{\pi(1)} |_{\beta}} \\
&\leq Cn \int_{[0,\beta]^{n}} d\underline{t}\,  e^{-(c/2) | t_{\pi(1)} - t_{\pi(2)} |_{\beta} - \ldots - (c/2)| t_{\pi(n+1)} - t_{\pi(1)} |_{\beta}}\\
&\leq Cn \int_{[0,\beta]^{n}} d\underline{t}\,  e^{-(c/2) | t_{\pi(1)} - t_{\pi(2)} |_{\beta} - \ldots - (c/2)| t_{\pi(n)} - t_{\pi(n+1)} |_{\beta}}\\
&\leq K^{n}\;.
\end{split}
\end{equation}
Using this estimate in Eq.~(\ref{eq:nonintbd}) we finally find:
\begin{equation}
\begin{split}
&\int_{[0,\beta]^{n}} d\underline{t}\, (1+|\underline{t}|_{\beta}) \sum_{X_{i} \subseteq \Lambda_{L}}  \big| \big\langle {\bf T} \gamma_{t_{1}}(\mathcal{O}^{(1)}_{X_1});  \cdots; \gamma_{t_{n}}(\mathcal{O}^{(n)}_{X_n}); \mathcal{O}^{(n+1)}_{X} \big\rangle^{0}_{\beta, \mu, L} \big| \\
&\qquad \leq \mathfrak{c}^{n} n!,
\end{split}
\end{equation}
where the factorial comes from the sum over permutations. This concludes the check of Assumption \ref{ass:dec} for non-interacting fermions.

\subsection{Check of Assumption \ref{ass:dec} for interacting fermions} 

Let us now discuss the case of interacting Fermi systems, with Hamiltonian $\mathcal{H}^{\lambda} = \mathcal{H}^{0} + \lambda \mathcal{V}$ given by Eq.~(\ref{eq:Ham}) with $\lambda\neq 0$. Following Section \ref{sec:compa}, this Gibbs state (\ref{eq:Glambda}) of $\mathcal{H}^{\lambda}$ can be written as a series in cumulants of the many-body interaction over the non-interacting Gibbs state; neglecting from now on the labels $\beta, \mu, L$,
\begin{equation}\label{eq:expO0}
\langle \mathcal{O}_{X} \rangle^{\lambda} = \langle \mathcal{O}_{X} \rangle^{0} + \sum_{n\geq 1} \frac{(-\lambda)^{n}}{n!} \int_{[0,\beta)^{n}} d\underline{s}\, \langle \mathbf{T} \gamma_{s_{1}}(\mathcal{V}); \cdots; \gamma_{s_{n}}(\mathcal{V}); \mathcal{O}_{X}  \rangle^{0}.
\end{equation}
More generally, the interacting Euclidean correlation functions can be written in terms of the non-interacting ones, as:
\begin{equation}\label{eq:expO}
\begin{split}
&\langle {\bf T} \gamma_{t_{1}}(\mathcal{O}^{(1)}_{X_1});  \cdots; \gamma_{t_{n}}(\mathcal{O}^{(n)}_{X_n}); \mathcal{O}^{(n+1)}_{X} \rangle^{\lambda} = \langle {\bf T} \gamma_{t_{1}}(\mathcal{O}^{(1)}_{X_1});  \cdots; \gamma_{t_{n}}(\mathcal{O}^{(n)}_{X_n}); \mathcal{O}^{(n+1)}_{X} \rangle^{0}\\
&+ \sum_{m\geq 1} \frac{(-\lambda)^{m}}{m!} \!\int \!d\underline{s}\, \langle \mathbf{T} \gamma_{s_{1}}(\mathcal{V}); \cdots; \gamma_{s_{m}}(\mathcal{V}); \gamma_{t_{1}}(\mathcal{O}^{(1)}_{X_1});  \cdots; \gamma_{t_{n}}(\mathcal{O}^{(n)}_{X_n}); \mathcal{O}^{(n+1)}_{X}  \rangle^{0}
\end{split}
\end{equation}
where the integral is over $[0,\beta)^{m}$. All cumulants can be computed in terms of connected Feynman diagrams, using Wick's rule, and the bounds of Proposition \ref{prop:2pt} allow to prove estimates for the Feynman diagrams that are uniform in $L$, and also in $\beta$ if $\mu$ is in a spectral gap of $H$. With respect to the case discussed before, the main problem now is that the observables at the argument of the time-ordering might be quartic in the fermionic operators: this ultimately implies that the number of Feynman diagrams contributing to the order $n$ grows as $(n!)^{2}$, which beats the $1/n!$ factorial in Eq.~(\ref{eq:expO}).

For fermionic models, this combinatorial problem is only apparent, as it can be solved keeping track of the minus signs arising from the anticommutation of the fermionic operators. The mathematical tool that allows to prove a bound for the cumulants that grows only as $n!$, and that is uniform in the size of the system, is the Brydges-Battle-Federbush-Kennedy (BBFK) formula \cite{BF, Bry1, Bry2, BK}, for the connected expectations, or cumulants, of a fermionic theory. See \cite{Grev} for a review of recent applications to transport problems in condensed matter systems. Let us review its application to the problem at hand. 

Let $A(P_{i})$ be a short-hand notation for a monomial in the creation and annihilation operators,
\begin{equation}
A(P_{i}) = \gamma_{t_{i}}(a^{*}_{{\bf x}_{i,1}}) \cdots \gamma_{t_{i}}(a^{*}_{{\bf x}_{i,k}}) \gamma_{t_{i}}(a_{{\bf y}_{i,k}}) \cdots \gamma_{t_{i}}(a_{{\bf y}_{i,1}}). 
\end{equation}
$P_{i}$ has to be understood as a set of points, labelled by a sign $\varepsilon_{i} = \pm$, which denotes creation operators ($\varepsilon = +$) or annihilation operators ($\varepsilon = -$), and by space-time coordinates $({\bf x}_{i}, t_{i})$ if $\varepsilon_{i} = +$ or $({\bf y}_{i}, t_{i})$ if $\varepsilon_{i} = -$. Without loss of generality, we can suppose that ${\bf x}_{i,k} \neq {\bf x}_{i,\ell}$ and ${\bf y}_{i,k} \neq {\bf y}_{i,\ell}$ for $k\neq \ell$, since otherwise $A(P_{i}) = 0$ by Pauli principle. Monomials of this type appear when writing explicitly the operators $\mathcal{V}$, $\mathcal{O}^{(i)}_{X_i}$ at the argument of the cumulant in Eq.~(\ref{eq:expO}) in terms of the fermionic operators. The BBFK formula provides a very useful identity for
\begin{equation}
\langle {\bf T} A(P_{1}); A(P_{2}); \cdots; A(P_{n})  \rangle^{0}_{\beta, \mu, L}
\end{equation}
in terms of the two-point function, Eq.~(\ref{eq:2pt}). One has, if $t_{i}\neq t_{j}$ for $i\neq j$:
\begin{equation}\label{eq:BBF}
\begin{split}
&\langle {\bf T} A(P_{1}); A(P_{2}); \cdots; A(P_{n})  \rangle^{0}_{\beta, \mu, L} \\ &\qquad\qquad = \sum_{T} \alpha_{T} \Big[\prod_{\ell\in T} g_{\ell}\Big] \int d\mu_{T}(\underline{s}) \det [ s_{i(f), i(f')} g_{(f,f')} ].
\end{split}
\end{equation}
Let us explain the various objects and symbols entering this formula. We view the monomial $P_{i}$ as being represented by a cluster of points, labelled by ${\bf z}, t$ variables, with a line attached: the line is incoming if the associated fermionic operator is $\gamma_{t}(a_{{\bf z}})$, or outgoing if the associated fermionic operator is $\gamma_{t}(a^{*}_{{\bf z}})$. Each line is labelled by a label $f$, and we shall think of $P_{i}$ as being the collection of such labels. The $T$-sum in Eq.~(\ref{eq:BBF}) is a sum over anchored trees between the cluster of points $P_{1}, \ldots, P_{n}$, where the edges of the trees are associated with the contractions $\ell = (f,f')$ of incoming and outgoing lines. An anchored tree between the clusters associated with $P_{1}, \ldots, P_{n}$ becomes a tree between $n$ points if one collapses the clusters into points. With each edge of the tree we associate a propagator $g_{\ell}$,
\begin{equation}
g_{\ell} \equiv g_{2}( {\bf x}(\ell), t(\ell); t'(\ell), {\bf y}(\ell) ), 
\end{equation}
where ${\bf x}(\ell), t(\ell)$ are the space-time labels for the contracted outgoing line $f$, while ${\bf y}(\ell), t'(\ell)$ are the space-time labels for the contracted incoming line $f'$. Notice that, in general, the lines forming the trees are only a subset of all the possible contractions that can be made between the lines associated with the clusters $P_{1}, \ldots, P_{n}$ (which are the contractions forming the connected Feynman diagrams). 

Informally, given a tree $T$ the sum over the remaining contractions that exhaust all lines is taken into account by the integral in (\ref{eq:BBF}). There, $\underline{s}$ denotes variables $(s_{ij})_{i,j = 1,\ldots, n}$, and $d\mu_{T}(\underline{s})$ is a $T$-dependent probability measure supported on a set of $s_{ij} \in [0,1]$ such that $s_{ij}$ can be written as a scalar product $(u_{i}, u_{j})$ for a family of vectors $(u_{i})$ with $u_{i} \in \mathbb{R}^{n}$ of unit norm. Finally, $s_{i(f), i(f')} g_{(f,f')}$ is a matrix, labelled by the lines that are not part of $T$: $f,f' \in (\cup_{i} P_{i}) \setminus P_{T}$, where $P_{T}$ takes into account all the $f,f'$ labels that are involved in the tree $T$. The notation $i(f)$ indicates the label of the cluster $P_{i}$ to which the label $f$ belongs to. Finally, $\alpha_{T}$ is a suitable function of the tree $T$, and it takes the values $\pm 1$. Its value will not be important in the following.

Eq.~(\ref{eq:BBF}) allows us to obtain the estimate:
\begin{equation}\label{eq:BBF2}
\begin{split}
&|\langle {\bf T} A(P_{1}); A(P_{2}); \cdots; A(P_{n})  \rangle^{0}_{\beta, \mu, L}| \\
&\qquad\qquad\qquad \leq \sum_{T} \prod_{\ell\in T} |g_{\ell}| \int d\mu_{T}(\underline{s}) \big|\det [ s_{i(f), i(f')} g_{(f,f')} ]\big|;
\end{split}
\end{equation}
the product over the propagators associated with the branches of the tree introduces a decay factor as function of the space-time distance of the various $P_{i}$'s. To make good use of Eq.~(\ref{eq:BBF2}), we need a bound for the determinant. One possibility could be to express the determinant in terms of the matrix entries via Leibniz formula, but this would ultimately produce the same combinatorial growth observed in the naive Feynman graph expansion, which is useless for the purpose of proving convergence of the series in (\ref{eq:expO}). A better estimate is obtained if $g_{(f,f')}$ are the entries of a Gram matrix, that is if $g_{(f,f')} = (a_{f}, b_{f'})$ for $a_{f}$, $b_{f}$ vectors of finite norm in a Hilbert space. In fact, in this case we could apply the Gram-Hadamard inequality, to obtain:
\begin{equation}
\big|\det [ s_{i(f), i(f')} g_{(f,f')} ]\big| = \big|\det [ u_{i(f)} \otimes a_{f}, u_{i(f')} \otimes b_{f'} ]\big| \leq \prod_{f} \| a_{f} \| \|b_{f}\|
\end{equation}
where the product runs over all the labels in $(\cup_{i} P_{i}) \setminus P_{T}$. This bound grows as a power of the dimension of the matrix, instead of factorially, and can be ultimately used to prove convergence of the series in (\ref{eq:expO}).

The problem in our case, and in all applications to lattice fermionic models, is that the propagator $g_{(f,f')}$ cannot be expressed in a Gram form. This issue could be solved via an ultraviolet multi-scale decomposition of the Matsubara frequencies associated with imaginary times, as reviewed for instance in \cite{GMPhall}. This analysis can be viewed as a warm-up for the multiscale analysis needed in order to tackle the infrared problem of interacting, gapless systems. Alternatively, a relatively simple way out to this problem to observe that $g_{(f,f')}$ can be expressed as the linear combination of the entries of Gram matrices \cite{PS, DS}. We have:
\begin{equation}
\begin{split}
g_{(f,f')} &= \mathbbm{1}(t(f) > t(f')) A^{+}_{(f,f')} -  \mathbbm{1}(t(f) \leq t(f')) A^{-}_{(f,f')} \\
 A^{+}_{(f,f')} &:= \frac{e^{-(t(f)-t(f')) (H - \mu)}}{1 + e^{-\beta (H - \mu)}}({\bf x}(f); {\bf y}(f')) \\
A^{-}_{(f,f')} &:= \frac{e^{-(t(f)-t(f')) (H - \mu)}}{1 + e^{\beta (H - \mu)}}({\bf x}(f); {\bf y}(f')).
\end{split}
\end{equation}
As discussed in \cite{PS} for translation-invariant models, and in \cite{DS} for a more general setting that includes the class of Hamiltonians considered in the present paper, the matrices $A^{\pm}_{(f,f')}$ admit a Gram representation. We refer the reader to Appendix A of \cite{DS}. Let $t \equiv t(f)$, ${\bf x} \equiv {\bf x}(f)$, $t'\equiv t(f')$ and ${\bf y} \equiv {\bf y}(f')$. As proven in Lemma 10 of \cite{DS}, we have:
\begin{equation}\label{eq:gramA}
A^{\pm}_{(f,f')} = \langle u^{\pm}_{t, {\bf x}}, w^{\pm}_{t',{\bf y}} \rangle
\end{equation}
where $\langle \cdot, \cdot \rangle$ is a scalar product on a suitable Hilbert space, and $u^{\pm}_{t, {\bf x}}$, $w^{\pm}_{t',{\bf y}}$ are vectors in the Hilbert space with norm bounded by one. Recall that the Gram constant of a matrix $M$ with elements $\langle a_{i}, b_{j}\rangle$ is defined as $\gamma_{M} = \max_{i} \max\{ \|a_{i}\|, \|b_{i}\| \}$. By the Gram-Hadamard inequality, the determinant of a Gram matrix of order $n$ is estimated by $\gamma_{M}^{2n}$. The identity (\ref{eq:gramA}), together with the fact $\|u^{\pm}_{t, {\bf x}}\|\leq 1$ and $\|w^{\pm}_{t',{\bf y}}\| \leq 1$, proves that the Gram constant of the matrices $A^{\pm}$ is bounded by $1$. Then, by Theorem 1.3 of \cite{PS}, we find:
\begin{equation}\label{eq:detbd}
| \text{det} [s_{i(f), i(f')} g_{(f,f')}] | \leq 2^{2 d_{g}},
\end{equation}
where $d_{g}$ is the dimension of the matrix, and the number raised to the power $2 d_{g}$ is the sum of the Gram constants of $A^{+}$ and $A^{-}$. In our case $d_{g} = | (\cup_{i} P_{i}) \setminus P_{T} |/2$; if the degree of the monomial is bounded, as it is in our case, as $|P_{i}| \leq p$ for some $L$-independent $p>0$, then $d_{g} \leq \frac{p n}{2} - (n-1) = \frac{n}{2}(p-2) + 1$.

Let us now come back to Eq.~(\ref{eq:BBF2}). Thanks to Eq.~(\ref{eq:detbd}), we have:
\begin{equation}\label{eq:AP}
\begin{split}
|\langle {\bf T} A(P_{1}); A(P_{2}); \cdots; A(P_{n})  \rangle^{0}| \leq 2^{n(p-2) + 2} \sum_{T} \prod_{\ell\in T} |g_{\ell}|\;,
\end{split}
\end{equation}
which can now be used to prove an estimate for the decay of correlations. We are interested in estimating:
\begin{equation}
\int_{[0,\beta]^{n}} d\underline{t}\, (1+|\underline{t}|_{\beta}) \sum_{X_{i} \subseteq \Lambda_{L}} \big| \langle {\bf T} \gamma_{t_{1}}(\mathcal{O}^{(1)}_{X_1});  \cdots; \gamma_{t_{n}}(\mathcal{O}^{(n)}_{X_n}); \mathcal{O}^{(n+1)}_{X} \rangle^{\lambda}\big|
\end{equation}
and to do so we replace the cumulant at the argument of the absolute value with its expansion in $\lambda$, Eq.~(\ref{eq:expO}). Thus, we are led to consider:
\begin{equation}\label{eq:VVOO}
\begin{split}
&\sum_{Y_{i} \subseteq \Lambda_{L}}\int_{[0,\beta]^{n+m}} d\underline{s}\, (1+|\underline{s}|_{\beta})  \\& \cdot \big| \langle \mathbf{T} \gamma_{s_{1}}(\mathcal{V}_{Y_{1}}); \cdots; \gamma_{s_{m}}(\mathcal{V}_{Y_{m}}); \gamma_{s_{m+1}}(\mathcal{O}^{(1)}_{Y_{m+1}});\cdots; \gamma_{s_{n+m}}(\mathcal{O}^{(n)}_{Y_{n+m}}); \mathcal{O}^{(n+1)}_{X}  \rangle^{0} \big|\;.
\end{split}
\end{equation}
Notice that, by assumption on the model, the sums involve sets with bounded diameter, $\text{diam}(Y_{i}) \leq R$.  In order to control the time integrals and the sums over lattice subsets, we use the estimate (\ref{eq:AP}); equal times give zero contribution to the integral, hence we can assume that all times are different. Let us consider:
\begin{equation}\label{eq:sumT}
\int_{[0,\beta]^{n+m}} d\underline{s}\, (1+|\underline{s}|_{\beta})\! \sum_{\substack{Y_{i} \subseteq \Lambda_{L} \\ \text{diam}(Y_{i}) \leq R}} \sum_{T} \prod_{\ell\in T} |g_{\ell}|;
\end{equation}
the $T$-sum is over the anchored trees connecting the cluster of points associated with $Y_{1}, \ldots, Y_{m+n}, Y_{m+n+1}$ with $Y_{m+n+1} \equiv X$. Being the sets $Y_{i}$ in Eq.~(\ref{eq:sumT}) of bounded diameter, we can estimate (\ref{eq:sumT}) as:
\begin{equation}\label{eq:treeest}
\begin{split}
&C^{n+m} \int_{[0,\beta]^{n+m}} d\underline{s} \sum_{\substack{{\bf z}_{i}\in \Lambda_{L} \\ {\bf z}_{n+m+1} = {\bf x}}} \sum_{\text{$T$ on $\{{\bf z_{i}}\}$}} \prod_{\ell \in T} |g_{\ell}| (1 + |s_{\ell}|_{\beta})
\end{split}
\end{equation}
where $|s_{\ell}|_{\beta} = | s(f) - s(f') |_{\beta}$ if $\ell = (f,f')$, and we used that
\[
1 + |\underline{s}|_{\beta} \leq 1 + n\sum_{\ell\in T} |s_{\ell}|_{\beta} \leq n\prod_{\ell\in T} (1 + |s_{\ell}|_{\beta}).
\]
The constant $C^{n+m}$ takes into account the sum over $Y_{i} \ni {\bf z}_{i}$, using that their diameter is bounded. For a given tree, the sum over the space-time coordinates of the points is performed via a standard pruning argument. One starts from the leaves of the trees, which are defined as the points attached to the rest of the tree by just one branch. If the leaf is labelled by ${\bf x}$ one does nothing, otherwise we integrate over the corresponding space-time variable and get a factor:
\begin{equation}
\VERT g \VERT_{1} := \max_{{\bf y}} \sum_{{\bf z}}\int_{0}^{\beta} ds\, (1 + |s - t|_{\beta}) | g_{2}({\bf y}, t; {\bf z}, s) |
\end{equation}
which does not depend on $t$ by time-translation invariance of the estimate for the two-point function, Proposition \ref{prop:2pt}. After having integrated out the leaves one deletes them, thus obtaining a new (smaller) tree. We then integrate over the new leaves, and repeat the process until all integrations are exhausted. By doing so, we obtain the bound:
\begin{equation}
|(\ref{eq:treeest})| \leq C^{n+m} \VERT g \VERT_{1}^{n+m} \Gamma_{n+m+1}\;,
\end{equation}
where $\Gamma_{n+m+1}$ is the number of trees with $n+m+1$ vertices. By Cayley's formula, it is well-known that:
\begin{equation}
\Gamma_{n+m+1} = (n+m+1)^{n+m-1}.
\end{equation}
Therefore, using Stirling's formula, we find:
\begin{equation}
\int_{[0,\beta]^{n+m}} d\underline{s}\, (1+|\underline{s}|_{\beta}) \sum_{Y_{i} \subseteq \Lambda_{L}} \sum_{T} \prod_{\ell\in T} |g_{\ell}| \leq K^{n+m} (n+m)! 
\end{equation}
Combined with the determinant bound, this estimate allows to prove that:
\begin{equation}
\begin{split}
&\int_{[0,\beta]^{n+m}} d\underline{s}\, (1+|\underline{s}|_{\beta}) \sum_{Y_{i} \subseteq \Lambda_{L}} \big| \langle \mathbf{T} \gamma_{s_{1}}(\mathcal{V}_{Y_{1}}); \cdots;  \gamma_{s_{n+m}}(\mathcal{O}^{(n)}_{Y_{n+m}}); \mathcal{O}^{(n+1)}_{X}  \rangle^{0} \big| \\
&\leq \widetilde{K}^{n+m} (n+m)!
\end{split}
\end{equation}
This bound can be used to control all terms in the expansion (\ref{eq:expO}). We obtain:
\begin{equation}
\begin{split}
&\int_{[0,\beta]^{n}} d\underline{s}\, (1+|\underline{s}|_{\beta}) \sum_{X_{i} \subseteq \Lambda_{L}}  \big| \big\langle {\bf T} \gamma_{s_{1}}(\mathcal{O}^{(1)}_{X_1});  \cdots; \gamma_{s_{n}}(\mathcal{O}^{(n)}_{X_n}); \mathcal{O}^{(n+1)}_{X} \big\rangle^{\lambda} \big| \\
&\leq \sum_{m\geq 0} \frac{|\lambda|^{m}}{m!} \int_{[0,\beta]^{n+m}} d\underline{s}\, (1+|\underline{s}|_{\beta}) \\&\qquad \cdot \sum_{Y_{i} \subseteq \Lambda_{L}}|\langle \mathbf{T} \gamma_{s_{1}}(\mathcal{V}_{Y_{1}}); \cdots; \gamma_{s_{m}}(\mathcal{V}_{Y_{m}}); \cdots \gamma_{s_{m+n}}(\mathcal{O}^{(n)}_{Y_{n+m}}); \mathcal{O}^{(n+1)}_{X}  \rangle^{0}|\\
&\leq \sum_{m\geq 0} \frac{|\lambda|^{m}}{m!} \widetilde{K}^{n+m} (n+m)! \\
&= \sum_{m\geq 0} |\lambda|^{m}\widetilde{K}^{n+m} \frac{(n+m)!}{m! n!} n! \leq C^{n} n!
\end{split}
\end{equation}
where in the last step we used that $(n+m)! / (n! m!) \leq 2^{n+m}$, and we took $|\lambda|$ small enough to guarantee convergence of the series. This concludes the check of Assumption \ref{ass:dec} for weakly interacting Fermi systems.
\begin{remark}
\begin{enumerate}
\item[(i)] The uniformity in $\beta, L$ of the radius of convergence of the series follows from the spectral gap of the Hamiltonian $H$, thanks to the estimate for the two-point function, Proposition \ref{prop:2pt}. If nothing is assumed about the existence of a spectral gap, the series is still uniformly convergent in $L$, but not in $\beta$.
\item[(ii)] The above argument can be easily adapted to prove the improved estimate (\ref{eq:assAm}), thanks to the exponential decay of the two-point function (\ref{eq:bdg2}); we omit the details.
\end{enumerate}
\end{remark}

\section*{Declarations}

\noindent{\bf Conflict of interest.} All authors declare that they have no conflicts of interest to disclose.
\medskip

\noindent{\bf Data availability statement.} Data sharing is not applicable to this article as no datasets were generated or analysed during the current study.


\begin{thebibliography}{10}
\bibitem{ASF1} W. K. Abou-Salem and J. Fr\"ohlich. Adiabatic theorems and reversible isothermal processes. {\it Lett. Math. Phys.} {\bf 72}, 153-163 (2005)
\bibitem{ASF2} W. K. Abou-Salem and J. Fr\"ohlich. Status of the Fundamental Laws of Thermodynamics. {\it J. Stat. Phys.} {\bf 126}, 1045-1068 (2007)
\bibitem{AW} M. Aizenman and S. Warzel. {\it Random Operators.} AMS (2015)
\bibitem{AMP} G. Antinucci, V. Mastropietro and M. Porta. Universal edge transport in interacting Hall systems. {\it Comm. Math. Phys.} {\bf 362}, 295-359 (2018)
\bibitem{AFGG} J.E. Avron, M. Fraas, G.M. Graf, and P. Grech. Adiabatic theorems for generators of contracting evolutions. {\it Comm. Math. Phys.} {\bf 314}, 163-191 (2012)
\bibitem{BdRF} S. Bachmann, W. de Roeck and M. Fraas. The adiabatic theorem and linear response theory for extended quantum systems. {\it Comm. Math. Phys.} {\bf 361}, 997-1027 (2018)
\bibitem{BdRFrev} S. Bachmann, W. de Roeck and M. Fraas. The adiabatic theorem in a quantum many-body setting. {\it Analytic Trends in Mathematical Physics} {\bf 741}, 43-58 (2020), Editors: H. Abdul-Rahman; R. Sims; A. Young.
\bibitem{BdRFL} S. Bachmann, W. de Roeck, M. Fraas and M. Lange. Exactness of Linear Response in the Quantum Hall Effect. {\it Annales Henri Poincar\'e} {\bf 22}, 1113-1132 (2021)
\bibitem{BBdRF} S. Bachmann, A. Bols, W. de Roeck and M. Fraas. Rational indices for quantum ground state sectors. {\it J. Math. Phys.} {\bf 62}, 011901 (2021)
\bibitem{BF} G. A. Battle and P. Federbush. A note on cluster expansions, tree graph identities, extra $1/N!$ factors!!!. {\it Lett. Math. Phys.} {\bf 8}, 55-57 (1984)
\bibitem{BG} G. Benfatto and G. Gallavotti. {\it Renormalization Group.} Princeton University Press (1995)
\bibitem{BMNS} S. Bachmann, S. Michalakis, B. Nachtergaele and R. Sims. Automorphic equivalence within gapped phases of quantum lattice systems. {\it Comm. Math. Phys.} {\bf 309}, 835-871 (2012).
\bibitem{BoFo} M. Born and V. Fock. Beweis des Adiabatensatzes. {\it Z. Phys.} {\bf 51}, 165-180 (1928)
\bibitem{BR2} O. Bratteli and D. W. Robinson. {\it Operator Algebras and Quantum Statistical Mechanics 2: Equilibrium States, Models in Quantum Statistical Mechanics.} Springer (1997)
\bibitem{BP} J. B. Bru and W. de Siqueira Pedra. Lieb–Robinson Bounds for Multi–Commutators and Applications to Response Theory. {\it SpringerBriefs in Mathematical Physics} {\bf 13} (2017).
\bibitem{Bry1} D. C. Brydges. {\it A short course on cluster expansions.} In Ph\'enomenes critiques, syst\`emes al\'eatoires, th\'eories de jauge, Part I, II (Les Houches, 1984), 129-183, North-Holland, Amsterdam, (1986)
\bibitem{Bry2} D. C. Brydges, P. Federbush. A new form of the Mayer expansion in classical statistical mechanics. {\it J. Math. Phys.} {\bf 19}, 2064-2067 (1978)
\bibitem{BK} D. Brydges, T. Kennedy. Mayer expansions and the Hamilton-Jacobi equation. {\it J. Stat. Phys.} {\bf 48}, 19-49 (1987)
\bibitem{DS} W. de Roeck and M. Salmhofer. Persistence of Exponential Decay and Spectral Gaps for Interacting Fermions. {\it Comm. Math. Phys.} {\bf 365}, 773-796 (2019)
\bibitem{FW} A. L. Fetter and J. D. Walecka. {\it Quantum Theory of Many-Particle Systems.} Dover (2003)
\bibitem{Grev} A. Giuliani. Order, disorder and phase transitions in quantum many body systems. {\it Istituto Lombardo (Rend. Scienze)} {\bf 150}, 3-46 (2016)
\bibitem{GM} A. Giuliani and V. Mastropietro. The Two-Dimensional Hubbard Model on the Honeycomb Lattice. {\it Comm. Math. Phys.} {\bf 293}, 301 (2010)
\bibitem{GMPgra} A. Giuliani, V. Mastropietro and M. Porta. Universality of Conductivity in Interacting Graphene. {\it Comm. Math. Phys.} {\bf 311}, 317-355 (2012)
\bibitem{GMPhall} A. Giuliani, V. Mastropietro and M. Porta. Universality of the Hall Conductivity in Interacting Electron Systems. {\it Comm. Math. Phys.} {\bf 349}, 1107-1161 (2017)
\bibitem{GJMP} A. Giuliani, I. Jauslin, V. Mastropietro and M. Porta. Topological phase transitions and universality in the Haldane-Hubbard model. {\it Phys. Rev. B} {\bf 94}, 205139 (2016)
\bibitem{GMPhald} A. Giuliani, V. Mastropietro and M. Porta. Quantization of the interacting Hall conductivity in the critical regime. {\it J. Stat. Phys.} {\bf 180}, 332-365 (2020)
\bibitem{GMPweyl} A. Giuliani, V. Mastropietro and M. Porta. Anomaly Non-Renormalization in Interacting Weyl Semimetals. {\it Comm. Math. Phys.} {\bf 384}, 997-1060 (2021)
\bibitem{HK} M. B. Hastings and T. Koma. Spectral Gap and Exponential Decay of Correlations. {\it Comm. Math. Phys.}  {\bf 265}, 781-804 (2006)
\bibitem{HM} M. B. Hastings and S. Michalakis. Quantization of Hall conductance for interacting electrons on a torus. {\it Comm. Math. Phys.} {\bf 334}, 433-471 (2015)
\bibitem{HTrev} J. Henheik and S. Teufel. Justifying Kubo’s formula for gapped systems at zero temperature: A brief review and some new results. {\it Rev. Math. Phys.} {\bf 33}, 2060011 (2021)
\bibitem{HenTeu} J. Henheik and S. Teufel. Adiabatic theorem in the thermodynamic limit: Systems with a gap in the bulk. {\it Forum of Mathematics, Sigma} {\bf 10}, e4 (2022) 
\bibitem{JP} V. Jak\v{s}i\'c and C.-A. Pillet. A note on the Landauer principle in quantum statistical mechanics. {\it J. Math. Phys.} {\bf 55}, 075210 (2014)
\bibitem{JPT1} V. Jak\v{s}i\'c, C.-A. Pillet and C. Tauber. Approach to equilibrium in translation-invariant quantum systems: some structural results. {\tt arXiv:2204.00440}
\bibitem{JPT2} V. Jak\v{s}i\'c, C.-A. Pillet and C. Tauber. A note on adiabatic time evolution and quasi-static processes in translation-invariant quantum systems. {\tt arXiv:2204.02177}
\bibitem{Ka}  T. Kato. On the adiabatic theorem of quantum mechanics. {\it J. Phys. Soc. Japan} {\bf 5}, 435-439 (1950)
\bibitem{MPTau} G. Marcelli, G. Panati and S. Teufel. A New Approach to Transport Coefficients in the Quantum Spin Hall Effect. {\it Annales Henri Poincar\'e} {\bf 22}, 1069-1111 (2021) 
\bibitem{MPTeu} G. Marcelli, G. Panati and C. Tauber. Spin Conductance and Spin Conductivity in Topological Insulators: Analysis of Kubo-Like Terms. {\it Annales Henri Poincar\'e} {\bf 20}, 2071-2099 (2019) 
\bibitem{Mabook} V. Mastropietro. {\it Non-Perturbative Renormalization.} World Scientific (2008)
\bibitem{MPspin} V. Mastropietro and M. Porta. Spin Hall insulators beyond the helical Luttinger model. {\it Phys. Rev. B} {\bf 96}, 245135 (2017)
\bibitem{MPmulti} V. Mastropietro and M. Porta. Multi-channel Luttinger Liquids at the Edge of Quantum Hall Systems. {\it Comm. Math. Phys.} {\bf 395}, 1097-1173 (2022)
\bibitem{MT} D. Monaco and S. Teufel. Adiabatic currents for interacting electrons on a lattice. {\it Rev. Math. Phys.} {\bf 31}, 1950009 (2019)
\bibitem{NSY} B. Nachtergaele, R. Sims and A. Young. Quasi-Locality Bounds for Quantum Lattice Systems. Part I. Lieb-Robinson Bounds, Quasi-Local Maps, and Spectral Flow Automorphisms. {\it J. Math. Phys.} {\bf 60}, 061101 (2019)
\bibitem{NO} P. Nozi\`eres and D. Pines. {\it The Theory of Quantum Liquids.} CRC Press (2019)
\bibitem{PS} W. A. de S. Pedra and M. Salmhofer. Determinant Bounds and the Matsubara UV Problem of Many-Fermion Systems. {\it Comm. Math. Phys.} {\bf 282} 797-818 (2008)
\bibitem{Sa} M. Salmhofer. {\it Renormalization - An Introduction.} Springer (1999)
\bibitem{Teu} S. Teufel. {\it Adiabatic Perturbation Theory in Quantum Dynamics}. Lecture Notes in Mathematics, Springer (2003).
\bibitem{Teu2} S. Teufel.  Non-equilibrium almost-stationary states and linear response for gapped quantum systems. {\it Comm. Math. Phys.} {\bf 373}, 621-653 (2020)
\end{thebibliography}
\end{document}